\documentclass{article}
\usepackage[margin=1in]{geometry}
\usepackage{amsmath,amssymb,amsthm}
\usepackage{algorithm}
\usepackage{algpseudocode}
\usepackage{url}
\usepackage{tikz}
\usetikzlibrary{arrows.meta,positioning,fit,backgrounds,calc}

\newtheorem{theorem}{Theorem}[section]
\newtheorem{lemma}[theorem]{Lemma}
\newtheorem{corollary}[theorem]{Corollary}
\newtheorem{proposition}[theorem]{Proposition}
\newtheorem{fact}[theorem]{Fact}
\newtheorem{definition}[theorem]{Definition}

\theoremstyle{remark}
\newtheorem{remark}[theorem]{Remark}

\newcommand{\abs}[1]{\lvert #1 \rvert}

\newcommand{\F}{\mathbb{F}}

\newcommand{\Tr}{\mathrm{Tr}}
\newcommand{\EQPu}{\mathsf{EQP}^{\mathrm{u}}_{\mathbb C}}
\newcommand{\ket}[1]{|#1\rangle}
\newcommand{\Rad}{\mathrm{Rad}}
\newcommand{\Fix}{\mathrm{Fix}}

\title{Exact quantum splitting and the structure of finite algebras}
\author{Muhammad Imran\\ School of Computer Science, University of Birmingham\\ CISITS Lab, Department of Mathematics, Universitas Indonesia \\
e-mail:~\texttt{m.imran.2@bham.ac.uk;~m.imran@sci.ui.ac.id}}
\date{\today}

\begin{document}
\maketitle

\begin{abstract}
Berlekamp's algorithm factors a squarefree polynomial $f\in\F_q[x]$ by deterministic linear algebra, reducing the problem to splitting an explicit commutative algebra $B\cong\F_q^r$ into its $r$ simple factors. For large odd $q$, the standard efficient splitting step is randomized, while known derandomizations are conditional on the Extended Riemann Hypothesis. We give an unconditional exact quantum implementation in a circuit model permitting single-qubit rotations through efficiently computable angles.
The construction uses an unconditional counting argument. For a block containing $s\ge2$ irreducible factors, a quadratic-character test in odd characteristic and an absolute-trace test in characteristic $2$ yield a nonconstant test element with probability $p_{q,s}\ge\tfrac12$, known exactly in advance and depending only on $q$ and $s$, not on the unknown factorization. Exact amplitude amplification therefore converts each randomized test into a procedure succeeding with certainty after one amplification iteration. The resulting algorithm uses exactly $r-1$ quantum splitting rounds and $O(n^3\log q)$ quantum $\F_q$-operations and $O(n^3)$ classical operations, requiring no primitive root, quadratic non-residue, or distinct-degree preprocessing.
The method also splits arbitrary finite-dimensional separable commutative $\F_q$-algebras given by structure constants. Combined with R'onyai's classical structure theory, which computes the radical deterministically and reduces the remaining tasks deterministically to polynomial factorization, it yields the radical and the Wedderburn decomposition of $A/\Rad(A)$ into minimal two-sided ideals, with certainty, for any $n$-dimensional associative $\F_q$-algebra given by structure constants, using $O(n^4\log q)$ quantum $\F_q$-operations.
\end{abstract}

\section{Introduction}
\label{sec:intro}

Factoring a polynomial over a finite field is one of the basic problems of computational
algebra, with applications throughout cryptography and coding theory. The two classical
approaches, Berlekamp's algorithm \cite{Berlekamp67,Berlekamp70} and the Cantor--Zassenhaus
algorithm \cite{CZ81}, both reduce the problem to a linear-algebra step (finding a
distinguished subalgebra of $\F_q[x]/(f)$) followed by a \emph{splitting} step that separates
the factors from one another. The linear-algebra step is classical, deterministic, and
efficient. The splitting step is where all the difficulty lies, and its status depends sharply
on the field.

The same division of labour recurs, one level up, throughout the structure theory of finite
associative algebras. R\'onyai \cite{Ronyai90}, building on Friedl and R\'onyai \cite{FR85},
showed that for a finite-dimensional associative algebra $A$ over $\F_q$ presented by
structure constants, the radical $\Rad(A)$ is computable in deterministic polynomial time,
while the decomposition of the semisimple quotient into simple components, the construction of
zero divisors, and the explicit isomorphism of a simple component with a full matrix algebra
are all \emph{deterministic reductions} to factoring polynomials over finite fields. Every
one of these problems therefore inherits, verbatim, whatever the status of the splitting step
happens to be: randomized in general, deterministic under the Extended Riemann Hypothesis in
the cases where that is known, and, as we show here, \emph{exact} in the quantum circuit
model. This paper is organized around that observation. We first make the splitting step of
Berlekamp's algorithm exact, and then follow R\'onyai's reductions upward to obtain exact
quantum algorithms for the structure of finite algebras.

\paragraph{What is known classically.}
When the characteristic $p=\mathrm{char}(\F_q)$ is small, the splitting step is classically
deterministic and efficient. Berlekamp's original algorithm \cite{Berlekamp67} runs in time
$\mathrm{poly}(n,\log q,p)$, because the absolute trace $\Tr_{\F_q/\F_p}$ pushes the relevant
coordinate values into the prime field $\F_p$, after which one exhausts over the $p$ possible
constants. This is the Berlekamp trace algorithm, and for fixed small $p$, in particular for
$q=2^k$, it settles the problem unconditionally. Deterministic algorithms are also available
at cost polynomial in $\sqrt q$ \cite{Shoup90,Narayanan16}.

The hard regime is therefore \emph{large $q$ of odd characteristic}. There, every known
efficient classical splitting method is randomized. A random test element splits the current
block with probability at least $1/2$, and repeating the trial drives the failure probability
down exponentially, but no zero-failure-probability method is known, nor is an unconditional
deterministic one. As Gao observes \cite{Gao01}, it remains open whether polynomial-time
deterministic factorization exists over such fields \emph{even assuming} the Extended Riemann
Hypothesis. Conditional on GRH/ERH, a long line of work
\cite{Ronyai88,Huang91,Evdokimov94,IKS09,IKRS12,Guo20} gives subexponential and, in structured
cases, polynomial-time deterministic algorithms; these route through the construction of a
primitive root or an $r$-th non-residue of $\F_q$ \cite{Gao01}.

In this paper, we give an \emph{exact} quantum algorithm for the splitting step, one that succeeds with certainty upon measurement and that assumes no unproven number-theoretic hypothesis. Exact quantum algorithms, as opposed to bounded-error ones, are of interest for the reasons discussed in our earlier work on the exact quantum hidden subgroup problem
\cite{Imran-Ivanyos2022,Imran-Ivanyos2024}. They are the natural counterparts of deterministic classical algorithms, and, being measurement-free until the final step, they compose without accumulating error inside larger unitary computations. That last property is what makes the passage to general finite algebras possible below, since a bounded-error splitting subroutine called $\Theta(n)$ times inside a classical wrapper would only ever yield a bounded-error outer algorithm. We follow the computational model for exact quantum algorithms of
\cite{Imran-Ivanyos2022}, fixed precisely in Section~\ref{sec:model}.

Before describing the construction it is worth being explicit about what kind of improvement
is, and is not, being claimed. The purpose of the quantum procedure is \emph{not} an
asymptotic speedup over the expected running time of randomized classical splitting, which is
already polynomial. It is to replace a randomized step by an exact one. The observation making
this possible is that the success probability of the splitting test can be determined exactly
from the field size $q$ and the number $s$ of irreducible factors in the current block,
independently of the unknown factorization, which permits the direct use of exact amplitude
amplification \cite{BHMT02}. The substantive gain is confined to large odd characteristic, and
we are careful throughout to claim nothing more. In small characteristic our procedure is an
exact quantum analogue of an algorithm that is already classically deterministic and provides
no new capability there; its interest in that regime is that the \emph{same} mechanism covers
it, so that the treatment is uniform over all finite fields. We return to this in
Section~\ref{sec:comparison}.

\subsection*{Contribution}

\paragraph{(i) An exactly known, factorization-independent success probability.}
The central ingredient is an unconditional counting argument showing that the fraction of test
elements failing to split a block that contains $s\ge2$ irreducible factors is exactly
$\bigl(1+2(\tfrac{q-1}{2})^s\bigr)/q^s$ in odd characteristic, via a quadratic-character test,
and exactly $2^{1-s}$ in characteristic $2$, via an absolute-trace test. Both quantities are
computed in closed form from $q$ and $s$ alone, so the resulting success probability
$p_{q,s}\ge\tfrac12$ depends only on the field size and the number of factors currently in
play, never on which irreducible factors the block actually contains. Because this probability
is known \emph{exactly} in advance, the general exact amplitude amplification theorem of
Brassard, H{\o}yer, Mosca and Tapp \cite{BHMT02} applies directly, turning a
$\ge\tfrac12$ success probability into certainty with a single amplification iteration.
Recursively applying the splitting procedure yields a complete factorization using exactly
$r-1$ splitting calls, at a total cost of $O(n^3\log q)$ quantum $\F_q$-operations and
$O(n^3)$ classical field operations (Theorem~\ref{thm:main} in Section~\ref{sec:main}).

No primitive root of $\F_q^*$ and no distinguished quadratic non-residue is ever constructed.
The exact success probability comes directly from the class cardinalities supplied by the
counting lemma. This is a statement about what the construction \emph{needs}, not about what
is quantumly available: Draper \cite{Draper21} has shown that a quadratic non-residue modulo a
prime can itself be produced exactly, in the larger class $\mathsf{EQP}_{\mathbb C}$ of
Section~\ref{sec:model}, and \cite{imran2022exact} gives an exact primitive-root-finding
algorithm. What it does mean is that the exactness of our procedure does not rest on any
auxiliary exact subroutine.

\paragraph{(ii) Test maps and separable commutative algebras.}
Nothing in the construction is special to the Berlekamp subalgebra of a polynomial. The proof that a nonconstant test element reveals a genuine split uses only two facts: that the object being split embeds into $\F_q^s$ by a ring isomorphism whose image contains the diagonal copy of $\F_q$, and that the test applied has a fixed, known partition of $\F_q$ into fibers of known size. Section~\ref{sec:generalization} isolates exactly this, defining a \emph{test map}
to be a short (straight-line, $O(\log q)$-cost) polynomial over $\F_q$ with small image and no
fiber occupying more than half of $\F_q$, of which the quadratic-character and absolute-trace
tests are the two extremal instances. The same counting-and-amplification argument then splits
an arbitrary finite-dimensional separable commutative $\F_q$-algebra $A$, given by structure
constants, into its primitive idempotents with probability exactly one, using $O(n^4\log q)$
quantum $\F_q$-operations; polynomial factorization is recovered as the special case
$A=\F_q[x]/(f)$.

We stress what is and is not new here. That this problem reduces to polynomial factorization
is classical. It is the cutting procedure of Friedl and R\'onyai \cite{FR85,Ronyai90}, which
walks a basis of $A$, computes the minimal polynomial of each successive generator over the
field generated so far, and factors it. The fixed-point subalgebra of the Frobenius map, in
the abstract setting of a commutative semisimple algebra rather than of $\F_q[x]/(f)$, is
likewise already present in \cite[Sec.~3, Remark~2]{Ronyai90}. Our contribution is that the
procedure is exact, and that it reaches the primitive idempotents \emph{directly}, through a
test element and Lagrange interpolation in the algebra, with no passage through a generating
element and its minimal polynomial and hence no factoring calls over the auxiliary extension
fields that the cutting procedure builds along the way.

\paragraph{(iii) Extension to noncommutative algebra.} This result extends to noncommutative algebra using the classical results on the theory of finite algebras by R\'onyai.
Let $A$ be a finite-dimensional associative $\F_q$-algebra of dimension $n$ given by structure constants. Then:
\begin{itemize}
\item $\Rad(A)$ is computable classically and deterministically, with no factoring and no
      randomization, by R\'onyai's trace characterization in positive characteristic
      \cite[Thm.~2.7]{Ronyai90} (see also \cite{CIW97}). This is the exact analogue, at the
      level of an abstract algebra, of the reduction $f\mapsto f/\gcd(f,f')$ to the squarefree
      case;
\item for $A$ semisimple, the centre $Z(A)=\{x\in A: xy=yx\ \forall y\in A\}$ is cut out by a
      linear system, and $Z(A)\cong\prod_i \F_{q^{d_i}}$ is a separable commutative
      $\F_q$-algebra, again presented by structure constants;
\item the primitive idempotents of $Z(A)$ are precisely the central primitive idempotents of
      $A$, and the minimal two-sided ideals are recovered as $A_i=e_iA$ \cite[Sec.~3]{Ronyai90}.
\end{itemize}
Applying (ii) to $Z(A)$ therefore yields the Wedderburn decomposition of a semisimple algebra without any randomization. Thus, for an arbitrary finite-dimensional associative $\F_q$-algebra $A$ of dimension $n$ given by structure constants, the radical together with the decomposition of $A/\Rad(A)$ into minimal two-sided ideals is computable in $\EQPu$, that is, with probability exactly one, using $O(n^4\log q)$ quantum $\F_q$-operations and polynomially many classical field operations.

Moreover, we also treat the two remaining problems of \cite{Ronyai90}, namely finding zero divisors and constructing an explicit isomorphism $A_i\to
M_{n_i}(\F_{q^{d_i}})$ for a simple component. These are genuinely deeper than the decomposition into simple components as they consume the constructive form of Wedderburn's theorem, the Noether--Skolem theorem, and the solution of a norm equation, but they too are deterministic reductions to polynomial factorization, so they too become exact once the
factoring calls are answered exactly. We state these as corollaries and are explicit about the accounting they require, in particular about the number of oracle calls, the degrees of the polynomials involved, and the restriction to $n$ odd or $n=2$ in \cite[Lemma~4.5]{Ronyai90} around which the even-degree case is routed.

Finally, the reduction runs in both directions. R\'onyai shows that finding zero divisors in a finite algebra is in the same complexity class as factoring polynomials over finite fields, not merely reducible to it \cite{Ronyai90}. Consequently no exact quantum algorithm for the structure problems above can be substantially easier than an exact quantum algorithm for
polynomial factorization, and the splitting primitive developed in Section~\ref{sec:main} is the right object to make exact.

Table~\ref{tab:comparison-intro} places these contributions at a glance
against the two classical splitting algorithms and the one existing
dedicated quantum treatment of the problem. The fully detailed comparison is deferred to
Table~\ref{tab:comparison} in Section~\ref{sec:comparison}.

\begin{table}[ht]
\centering
\caption{Polynomial factorization approaches, at a glance.}
\label{tab:comparison-intro}
\begin{tabular}{lcccc}
\hline
Method & Classical/Quantum & Exact & Randomized & Main feature \\
\hline
Berlekamp \cite{Berlekamp67,Berlekamp70} & Classical & No & Yes & Algebraic splitting \\
Cantor--Zassenhaus \cite{CZ81} & Classical & No & Yes & Equal-degree splitting \\
Doliskani \cite{Doliskani2018} & Quantum & No & Yes & Asymptotic quantum speedup \\
This work & Quantum & Yes$^\ast$ & No & Exact algebraic splitting \\
\hline
\end{tabular}

\medskip

\noindent
$^\ast$exact in the quantum circuit model specified in Section~\ref{sec:model}.
\end{table}

\subsection*{Relation to our earlier work}
The present paper generalizes a tool already present in \cite{Imran-Ivanyos2022}. Section~2.2 there uses exact amplitude amplification for the special case of success probability exactly $1/2$, following the original construction of Brassard and H{\o}yer \cite{BH97}. We use the fully general version of that theorem, which removes the need to tailor a success probability to $1/2$ artificially by discarding amplitude, as \cite{Imran-Ivanyos2022} and \cite{MZ03} do;
whatever exact probability the counting lemma supplies is used directly.

By contrast, the abelian and solvable hidden-subgroup machinery of \cite{Imran-Ivanyos2022,Imran-Ivanyos2024}, in particular its exact order-finding subroutine, is \emph{not} used below, despite being the natural first attempt. Order finding on the
black-box group $B_P^*$ hits a genuine obstruction. Any test built purely from group operations on a cyclic group is a function of element order alone, and so cannot distinguish
two hidden factors whose corresponding group elements happen to have equal order, which is precisely why classical algorithms for this problem randomize in the first place. The construction given here instead uses the explicit ring structure of $\F_q[x]/(f)$,  polynomial evaluation, and at one classical post-processing step a polynomial $\gcd$, which has no counterpart in a black-box group and sidesteps the obstruction entirely. 

\subsection*{Organization}
Section~2 recalls Berlekamp's classical reduction, the exact amplitude amplification theorem we rely on, and fixes the quantum circuit model in which our exactness claim is stated.
Section~3 contains the main results: the exact success-probability lemma for both characteristics, the resulting exact splitting lemma, the reversible arithmetic needed to implement it, the phase oracle and its uncomputation, and the composition of splitting rounds into a complete factoring algorithm, together with its total complexity. Section~4 places the result relative to the classical literature (Section~\ref{sec:classical-comparison}) and to previously known quantum approaches (Section~\ref{sec:quantum-comparison}). Section~5 generalizes the whole
construction beyond polynomials, to arbitrary separable commutative $\F_q$-algebras presented by structure constants. First,  we isolate the mechanism as a test map and treats arbitrary separable commutative $\F_q$-algebras presented by
structure constants. For the noncommutative case, we follow R\'onyai's reductions to obtain the radical, the Wedderburn decomposition, zero divisors and explicit isomorphisms. Section~6 concludes.


\section{Preliminary}\label{sec:prelim}
This section introduces the three ingredients used throughout the paper: Berlekamp's reduction of polynomial factorization to splitting the fixed-point algebra, exact amplitude amplification, and the quantum circuit model in which our exactness claim is formulated.
\subsection{Berlekamp's algorithm and notation}
\label{sec:setup}

Before fixing notation, it is worth recalling in one paragraph \emph{why} factoring reduces to splitting an algebra in the first place, since everything else in the paper is built on this fact. If $f=f_1\cdots f_r$ is the (unknown) irreducible factorization, the Chinese Remainder Theorem gives a ring isomorphism $\F_q[x]/(f)\cong\prod_i\F_q[x]/(f_i)$. Berlekamp's insight is that one specific piece of this decomposition, namely the subring fixed by the $q$-power Frobenius map, can be computed \emph{without} knowing the factorization at all, by ordinary linear algebra, and that this subring already carries enough information to recover the factorization once one can tell its coordinates apart. Making
this precise, and fixing the notation used throughout the paper for it, is the content of the rest of this subsection.

Let $f\in\F_q[x]$ be squarefree of degree $n$ (the general case reduces to this classically via $f/\gcd(f,f')$) and monic, with irreducible factorization $f=f_1\cdots f_r$, $d_i=\deg f_i$. Let $R=\F_q[x]/(f)$ and let $\sigma:R\to R$, $\sigma(a)=a^q$, be the $\F_q$-linear Frobenius map on $R$. The map $\sigma$ is well defined because $x\mapsto x^q$ is a ring endomorphism of $\F_q[x]$ fixing $\F_q$ pointwise and hence descends to the quotient $R$. The \emph{Berlekamp subalgebra} is the subring of elements fixed by $\sigma$,
\[
B \;=\; \mathrm{Fix}(\sigma) \;=\; \ker(\sigma-I),
\]
computed concretely as the kernel of the $\F_q$-linear map $\sigma-I$ on the $n$-dimensional $\F_q$-vector space $R$, i.e.\ by ordinary linear algebra on
an explicit $n\times n$ matrix, with no reference to the (unknown)
factorization of $f$.

By the Chinese Remainder Theorem, $R\cong\prod_{i=1}^r\F_{q^{d_i}}$, via
reduction modulo each $f_i$. Under this isomorphism $\sigma$ corresponds,
coordinatewise, to the $q$-power Frobenius automorphism of each field
$\F_{q^{d_i}}/\F_q$, whose fixed field is $\F_q$ itself \emph{regardless of $d_i$}. A standard fact of finite field theory: the Frobenius $x\mapsto x^q$ generates the cyclic Galois group $\mathrm{Gal}(\F_{q^{d_i}}/\F_q)$, and the fixed field of the full Galois group of a finite Galois extension is exactly the base field (see e.g.\ \cite[Thm.~2.6]{LN97}). Consequently $\mathrm{Fix}(\sigma)$ corresponds, coordinatewise, to $\F_q\subseteq
\F_{q^{d_i}}$ in every factor simultaneously, and the CRT isomorphism
restricts to a \emph{ring} isomorphism
\[
\pi=(\pi_1,\ldots,\pi_r)\;:\;B\;\xrightarrow{\ \sim\ }\;\F_q^r,
\]
with each $\pi_i:B\to\F_q$ (reduction modulo $f_i$) a surjective ring
homomorphism. In particular $\dim_{\F_q}B=r$: the dimension of the
Berlekamp subalgebra alone already reveals the number of irreducible
factors of $f$, before any of them are known individually.

The interpretation worth keeping in mind throughout the paper is the
following. Via $\pi$, an element $b\in B$ is exactly an $r$-tuple of field
elements $(\pi_1(b),\ldots,\pi_r(b))\in\F_q^r$, one value per irreducible
factor. Two distinct factors $f_i,f_j$ are \emph{distinguished} by $b$
precisely when $\pi_i(b)\ne\pi_j(b)$, and in that case $\gcd(f,b-\pi_i(b))$
recovers a nontrivial proper divisor of $f$ containing $f_i$ but not $f_j$.
The entire difficulty of Berlekamp's algorithm therefore reduces to a single
question: given only the abstract ring $B$, known to be isomorphic to
$\F_q^r$ but \emph{without} knowing which coordinate of $\pi$ corresponds to
which factor, how does one efficiently produce elements of $B$ that are
\emph{nonconstant}, i.e.\ that take at least two distinct values among
their coordinates? Section~\ref{sec:badcount} answers this question with an
exact quantum procedure. The remainder of this subsection only fixes the notation needed to state it.

Computing $Q$, the matrix of $\sigma$ with respect to the monomial basis of
$R$, its null space, and a basis $v_1=1,v_2,\ldots,v_r$ of $B$, is
classical, deterministic, and takes time $\mathrm{poly}(n,\log q)$ by
Gaussian elimination (see e.g.\ \cite[\S14.8]{vzGG13} for a full complexity account of this step within Berlekamp's algorithm). Nothing in this paper touches that step. It is taken as a black box throughout.

Throughout, a \emph{block} is a subset $P\subseteq\{1,\ldots,r\}$ known to
satisfy $g_P:=\prod_{i\in P}f_i\mid f$ (initially $P=\{1,\ldots,r\}$, the
single block containing every factor), with $s=|P|$, $R_P=\F_q[x]/(g_P)$,
$n_P=\deg g_P\le n$, $B_P=\mathrm{Fix}(\sigma|_{R_P})\cong\F_q^s$, basis
$v_1,\ldots,v_s$, and coordinate maps $\pi_i:B_P\to\F_q$ ($i\in P$) defined
exactly as above but relative to $g_P$ in place of $f$. A block is
\emph{internal} if $s\ge2$, i.e.\ if it is still pending a split; the
algorithm of Section~\ref{sec:composition} maintains a worklist of blocks
and repeatedly refines an internal block into two smaller ones, exactly
as one would refine a partition, until every block has $s=1$, at which
point $g_P$ is already irreducible.

\begin{remark}[Computational model]
Unlike the black-box group setting of \cite{Imran-Ivanyos2022}, the present
problem provides an explicit algebraic representation of the underlying
finite commutative algebra. We exploit this additional structure throughout
via polynomial evaluation, reversible polynomial arithmetic, and (at one
classical post-processing step) polynomial $\gcd$ extraction, none of which
has an analogue for a general black-box group. Everything available in the
black-box setting remains available here as a special case.
\end{remark}

\subsection{Exact amplitude amplification}
The single external tool the construction relies on is the following
theorem of Brassard, H{\o}yer, Mosca and Tapp, which upgrades ordinary
(bounded-error) amplitude amplification to a \emph{zero-error} procedure
whenever the target success probability is known in advance rather than
merely bounded below.

\begin{fact}[Exact amplitude amplification; \cite{BHMT02}, Theorem~4]
\label{fact:bhmt}
Let $\mathcal A$ be a unitary (measurement-free) circuit and $\chi$ a
Boolean predicate on (part of) its output register, and suppose the
probability $a$ that measuring $\mathcal A\ket0$ yields a state with
$\chi=1$ is known \emph{exactly}, with $a>0$. Then there is a circuit using
$\Theta(1/\sqrt a)$ applications of $\mathcal A,\mathcal A^{-1}$ (worst case)
that outputs with certainty a state with $\chi=1$.
\end{fact}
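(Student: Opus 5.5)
The plan is the standard Brassard--H{\o}yer--Mosca--Tapp argument: rotate in the two-dimensional invariant subspace spanned by the good and bad components of $\mathcal A\ket0$, and make the final rotation land exactly on the good component by shrinking the effective success probability.

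\textbf{Two-dimensional reduction.} Write $\mathcal A\ket0=\sin\theta\,\ket{\Psi_1}+\cos\theta\,\ket{\Psi_0}$, where $\ket{\Psi_1}$ and $\ket{\Psi_0}$ are the normalized projections onto $\chi=1$ and $\chi=0$, and $\sin^2\theta=a$ with $0<\theta\le\pi/2$. Let $S_\chi$ be the phase flip on $\chi=1$ and $S_0$ the phase flip on $\ket0$. Define $Q=-\mathcal A S_0\mathcal A^{-1}S_\chi$. A direct computation (reflection composition) shows that $Q$ preserves $\mathrm{span}\{\ket{\Psi_1},\ket{\Psi_0}\}$ and acts there as a rotation by $2\theta$. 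Hence
\[
Q^m\mathcal A\ket0=\sin((2m+1)\theta)\ket{\Psi_1}+\cos((2m+1)\theta)\ket{\Psi_0}.
\]
Each application of $Q$ uses one call each to $\mathcal A$ and $\mathcal A^{-1}$, plus reflections that cost one evaluation of $\chi$ and a zero test.

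\textbf{Exactness by lowering the angle.} Since $a$ is known exactly, we can compute $\theta$ and $\tilde m=\lceil \pi/(4\theta)-1/2\rceil$. We then choose $\theta'\le\theta$ with $(2\tilde m+1)\theta'=\pi/2$. Next we replace $\mathcal A$ by $\mathcal A'=\mathcal A\otimes R$, where $R$ is a single-qubit rotation $\ket0\mapsto\sqrt{1-\epsilon}\ket0+\sqrt\epsilon\ket1$, and the predicate by $\chi'=\chi\wedge(\text{ancilla}=0)$. The parameter $\epsilon$ is set by $a(1-\epsilon)=\sin^2\theta'$. Then $\mathcal A'$ has success probability exactly $\sin^2\theta'$. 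After $\tilde m$ iterations of the corresponding $Q'$, the amplitude on the good subspace is $\sin(\pi/2)=1$, so measuring yields $\chi=1$ with certainty. The rotation $R$ uses an efficiently computable angle, which the circuit model of Section~\ref{sec:model} permits.

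\textbf{Cost.} The number of calls is $2\tilde m+1=O(1/\sqrt a)$ because $\theta\ge\sqrt a$. The lower bound is tight in the worst case by the Grover-type optimality argument, and we can simply cite it. The main obstacle is the bookkeeping: checking that $\theta'\le\theta$ always holds, which it does because $(2\tilde m+1)\theta\ge\pi/2$, and that the modified predicate still matches the original output register, i.e.\ that discarding the ancilla leaves a state with $\chi=1$. Both checks are routine once the rotation picture is in place.
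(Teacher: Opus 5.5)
Your proposal is correct and matches the mechanism the paper describes after the statement, which is also the construction behind \cite{BHMT02}, Theorem~4: tensor $\mathcal A$ with a single-qubit rotation so that the success probability drops exactly to $\sin^2\bigl(\pi/(4\tilde m+2)\bigr)$ with $\tilde m=\lceil \pi/(4\theta)-\tfrac12\rceil$, after which $\tilde m$ standard iterates land exactly on the good subspace. One simplification: you do not need to cite Grover-type optimality for the $\Theta$, because $\sin\theta\ge 2\theta/\pi$ on $[0,\pi/2]$ already gives $2\tilde m+1\ge \pi/(2\theta)\ge 1/\sqrt a$ for your circuit.
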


Informally, writing $\mathcal A\ket0=\ket{\Gamma_{\mathrm{good}}}
+\ket{\Gamma_{\mathrm{bad}}}$ for the components singled out by $\chi$, a
Grover-type operator $\mathcal Q$, built from a reflection about the bad subspace together with a reflection about $\mathcal A\ket0$ itself, generalizing the search operator of \cite{Grover}, rotates $\mathcal A\ket0$ towards $\ket{\Gamma_{\mathrm{good}}}$ by a fixed angle per application, where the angle is determined by $\sin^2\theta=a$. After $\Theta(1/\sqrt a)$ standard applications the state lies close to $\ket{\Gamma_{\mathrm{good}}}$, but in general not exactly on it, so ordinary Grover-type search only amplifies success probability, it does not make it certain. What Fact~\ref{fact:bhmt} adds is a way to close that gap
\emph{whenever $a$ is known exactly}. We attach a single auxiliary qubit,
rotated once through an angle that is an explicit, efficiently computable
(generally irrational) function of $a$. This reduces $a$ exactly to a value
$a'\le a$ reachable by an integer number $m=\Theta(1/\sqrt a)$ of
\emph{standard}, fixed-phase Grover iterations. Only that one auxiliary
rotation carries a non-finite gate parameter, every other gate in the
construction is a standard, input-independent Grover iterate.

This mechanism already has precedent for special values of $a$. Brassard
and H{\o}yer's original construction \cite{BH97} handles the case
$a=\tfrac12$ exactly, giving an exact derandomization of Simon's algorithm,
and Mosca and Zalka \cite{MZ03} handle $a=\tfrac14$, used to build an
exact quantum Fourier transform (at the cost of three applications of the
standard QFT). Fact~\ref{fact:bhmt} is the general form of this idea, valid for an
\emph{arbitrary} exactly-known $a>0$ rather than only $\tfrac12$ or $\tfrac14$. This generality is what makes it usable here: what the theorem needs is not a success probability tuned to some fixed constant, but simply one known exactly in advance, whatever its value happens to be. Supplying such a probability is exactly the content of Lemma~\ref{lem:exact-prob} below.

\subsection{Computational model}
\label{sec:model}
We work in the quantum circuit model in which finite-field elements are
represented by binary registers of $\lceil\log_2 q\rceil$ qubits, and
arithmetic operations in $\F_q$ are implemented reversibly. Throughout, the
complexity of the quantum algorithm is measured primarily in
$\F_q$-operations, with the corresponding bit and elementary-gate costs
determined separately by the chosen representation of $\F_q$ (see
\cite{VBE96}); we do not carry out that separate reduction here.

Our use of exact amplitude amplification follows the arbitrary-rotation
circuit model underlying Fact~\ref{fact:bhmt}: one-qubit rotations whose
angles are efficiently computable from the input parameters are permitted
as elementary gates. Exactness is a property of a circuit \emph{family},
however, and is meaningless until one says both which gates are available
and how the family is described, so we fix the two conditions separately.

The gate set is the one introduced by Adleman, DeMarrais and Huang
\cite{ADH97}, who index exact quantum polynomial time by the set of
permitted gate coefficients: for $K\subseteq\mathbb C$, the class
$\mathsf{EQP}_K$ consists of the problems solved with probability exactly
$1$ by polynomial-size quantum circuits over the controlled one-qubit
unitaries whose coefficients lie in $K$. We take $K=\mathbb C$.

The uniformity condition is that of Nishimura and Ozawa \cite{NO05}: a
quantum circuit family is \emph{uniform} when the circuit for a given input
is constructible by a classical deterministic Turing machine in time
polynomial in the circuit size, \emph{including} that the numerical
parameters of the gates be computable classically in time polynomial in the
circuit size and a prescribed precision. Following
\cite{Imran-Ivanyos2022}, we distinguish three models, in increasing order
of power:
\begin{enumerate}
\item[(i)] finitely based uniform circuit families;
\item[(ii)] infinitely based uniform circuit families;
\item[(iii)] circuit families in which the gate parameters are computed from the \emph{input} rather than from the input size.
\end{enumerate}

\noindent
Write
\[
\EQPu\;\subseteq\;\mathsf{EQP}_{\mathbb C}
\]
for the class obtained by imposing the uniformity condition of \cite{NO05}
on $\mathsf{EQP}_{\mathbb C}$. This is model (iii), and it is the class in
which every exactness claim of this paper is stated; every procedure below
that invokes Fact~\ref{fact:bhmt} succeeds with probability \emph{exactly
one} upon measurement, with all gate parameters computable in advance.

The distinction between the two classes is not idle, and the comparison
with Draper \cite{Draper21} makes it concrete. Draper constructs a quadratic
nonresidue modulo a prime $p$ exactly, and states the result in
$\mathsf{EQP}_{\mathbb C}$; his rotation angle
$\theta=\arccos\big(1-\tfrac{N}{p-1}\big)$, with
$N=2^{\lceil\log_2p\rceil}$, is a function of the input prime, so his
construction lives in model (iii) as ours does. Definition-wise, however,
$\mathsf{EQP}_{\mathbb C}$ places no computability requirement on the gate
coefficients at all, and is in that respect the more permissive of the two:
unrestricted coefficients would allow parameters encoding uncomputable
information, which is precisely why \cite{ADH97} confines attention to
amplitude sets of polynomial-time computable numbers. Our claims are made in
the smaller class $\EQPu$, and are correspondingly stronger. What makes them
available is that the angles we need are functions of $q$ and the current
block size alone, both public data (Lemma~\ref{lem:exact-prob}).

\paragraph{The f-algorithm cost model.}
Section~\ref{sec:noncommutative} draws on a body of classical results stated
in a cost model that we record here, since our claims there are obtained by
substituting an exact quantum procedure for its oracle. Following R\'onyai
\cite{Ronyai90}, an \emph{f-algorithm} is an algorithm that is deterministic
apart from calls to an oracle that factors polynomials over finite fields
into irreducible factors, a call being charged the length of its input. The
device is what makes the classical structure theory of finite algebras
quotable independently of the status of factoring: the algorithms of
\cite{FR85,Ronyai90} for the Wedderburn decomposition, for zero divisors and
for explicit isomorphisms are f-algorithms, so all of their randomness is
confined to the oracle, and instantiating that oracle with Berlekamp's Las
Vegas \cite{Berlekamp70} or deterministic \cite{Berlekamp67} algorithm
recovers the two classical regimes discussed in
Section~\ref{sec:classical-comparison}.
 
Charging a call its input length is convenient classically but is not
available to us, since the quantity we must account for is the quantum cost
of answering the call exactly. Whenever we invoke an f-algorithm below we
therefore report two figures in place of that charge: the number of oracle
calls the algorithm makes, and, for each call, the degree of the polynomial
to be factored together with the field over which it is presented. The
second is not a pedantic refinement. An oracle call over an extension
$\F_{q^{d}}$ costs $d\log q$ in place of $\log q$ in the bounds of
Section~\ref{sec:main}, and the extension degrees arising in a reduction are
therefore part of its cost, not an implementation detail; avoiding such
extensions altogether is one of the concrete advantages of the direct
construction of Section~\ref{sec:generalization} over the classical route.
\section{Main results}\label{sec:main}
This section contains the paper's technical core, and it is organized in
three stages rather than as a flat list of results. Sections~3.1--3.2
establish that a single block can be split exactly. Particularly, Section 3.1 shows that the
fraction of useful test elements, $p_{q,s}$, is known exactly from $q$ and
the block size $s$ alone, and 3.2 feeds this probability into
Fact~\ref{fact:bhmt} to obtain a procedure that produces a nonconstant test
element with certainty, using a single oracle call. Sections~3.3--3.4 then
descend to the circuit level needed to actually realize that procedure:
reversible ring arithmetic (3.3) and the phase oracle built from it (3.4).
Sections~3.5--3.6 assemble repeated exact splits into a complete recursive
factoring algorithm and total its cost over the whole splitting tree.
Section~3.7 closes with the main theorem, stating correctness and
complexity for the algorithm as a whole. 

\subsection{An exact, hidden-factorization-independent counting bound}\label{sec:badcount}
Recall from Section~\ref{sec:setup} the notation for a block: $P\subseteq\{1,\ldots,r\}$ with $s=|P|\ge2$, $g_P=\prod_{i\in P}f_i$,
$R_P=\F_q[x]/(g_P)$, and $B_P=\mathrm{Fix}(\sigma|_{R_P})\cong\F_q^s$ the local Berlekamp subalgebra, with fixed basis $v_1,\ldots,v_s$ and
coordinate maps $\pi=(\pi_i)_{i\in P}:B_P\xrightarrow{\ \sim\ }\F_q^s$ (reduction modulo the individual factors $f_i$, $i\in P$). The goal of this
subsection is to show that a uniformly random element of $B_P$, tested appropriately, is \emph{nonconstant}, i.e.\ distinguishes at least two of the factors indexed by $P$, with a probability computable \emph{exactly} in advance, from $q$ and $s$ alone.

For $\lambda=(\lambda_1,\ldots,\lambda_s)\in\F_q^s$ write
$a_\lambda=\sum_{k=1}^s\lambda_kv_k\in B_P$. Since $v_1,\ldots,v_s$ is a
basis, $\lambda\mapsto a_\lambda$ is an $\F_q$-linear bijection
$\F_q^s\to B_P$. Composing it with the ring isomorphism $\pi$ gives an $\F_q$-linear bijection
\[
M:\F_q^s\to\F_q^s,\qquad M(\lambda)=\big(\pi_i(a_\lambda)\big)_{i\in P}.
\]
Because each $\pi_i$ is a ring homomorphism, $\pi_i(F(a_\lambda))=
F(\pi_i(a_\lambda))=F(M(\lambda)_i)$ for any polynomial expression $F$, which
is what lets a test on $a_\lambda$ be analyzed purely in terms of the
(unknown) bijection $M$. 

Suppose we apply to each coordinate some fixed map $F:\F_q\to T$ and call the resulting test element \emph{bad} when all $s$ coordinates receive the same value. Being bad means all $s$ coordinates of $M(\lambda)$ lie in a common fiber of $F$. Now $M$ is a bijection, so $M(\lambda)$ is uniform on $\F_q^s$ when $\lambda$ is, and therefore the probability of this event is a count of tuples. It is determined by the \emph{sizes} of the fibers of $F$ and by $s$, and by nothing else. In particular it does not depend on which coordinate of $\pi$ corresponds to which irreducible factor, nor on the degrees $d_i$, nor on anything else about the hidden factorization. We record this as a lemma, in the form in which it will be reused in Section~\ref{sec:generalization}.

\begin{lemma}[Factorization-independent sampling]
\label{lem:general-test-count}
Let $F:\F_q\to T$ be any map, with fibers of sizes $c_1,\ldots,c_m$ where $m=|T|$. For a uniformly random tuple $a=(a_1,\ldots,a_s)\in\F_q^s$ put $w=(F(a_1),\ldots,F(a_s))$. Then
\[
\Pr[\,w\text{ is constant}\,]=\frac{1}{q^{s}}\sum_{j=1}^{m}c_j^{\,s},
\qquad
\Pr[\,w\text{ is nonconstant}\,]=1-\frac{1}{q^{s}}\sum_{j=1}^{m}c_j^{\,s}.
\]
Consequently, if $a=M(\lambda)$ for a bijection $M:\F_q^s\to\F_q^s$ and $\lambda$ is uniform on $\F_q^s$, the same two formulas hold. In particular the probability is computable from $q$, $s$ and the fiber profile of $F$ alone. Moreover, writing
\[
\mu\;:=\;\max_{1\le j\le m}\frac{c_j}{q}
\]
for the largest relative fiber size, we have, for every $s\ge1$,
\[
\Pr[\,w\text{ is constant}\,]\;\le\;\mu^{\,s-1},
\qquad\text{hence}\qquad
\Pr[\,w\text{ is nonconstant}\,]\;\ge\;1-\mu^{\,s-1},
\]
with equality precisely when every nonempty fiber of $F$ has size $\mu q$.
\end{lemma}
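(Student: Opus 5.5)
The plan is a direct count. Everything reduces to the observation that ``$w$ is constant'' is a disjoint union of product events in $\F_q^s$. For each $j\in\{1,\ldots,m\}$, let $C_j=F^{-1}(t_j)\subseteq\F_q$ be the $j$-th fiber, so $\abs{C_j}=c_j$ and the $C_j$ partition $\F_q$. Then $w$ is constant exactly when $a\in C_j^{\,s}:=C_j\times\cdots\times C_j$ for some $j$. These product sets are pairwise disjoint because the fibers are, and $\abs{C_j^{\,s}}=c_j^{\,s}$. Since $a$ is uniform on $\F_q^s$, a set of $q^s$ points, this gives $\Pr[w\text{ constant}]=q^{-s}\sum_j c_j^{\,s}$, and the nonconstant formula is its complement.

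For the transfer statement, I will use only that a bijection $M:\F_q^s\to\F_q^s$ pushes the uniform distribution forward to the uniform distribution. For every $y$ we have $\Pr[M(\lambda)=y]=\Pr[\lambda=M^{-1}(y)]=q^{-s}$. Hence $a=M(\lambda)$ is uniform and the two formulas apply verbatim. The ``computable from $q$, $s$ and the fiber profile alone'' clause is then immediate, since the right-hand side involves nothing else. In particular it does not depend on $M$, which is where the hidden factorization would enter.

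For the inequality, I bound termwise using $c_j\le\mu q$ for every $j$:
\[
\sum_{j=1}^m c_j^{\,s}=\sum_{j=1}^m c_j\,c_j^{\,s-1}\le(\mu q)^{s-1}\sum_{j=1}^m c_j=(\mu q)^{s-1}q .
\]
Dividing by $q^s$ gives $\Pr[w\text{ constant}]\le\mu^{s-1}$, and the complementary bound follows.

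The only delicate point is the equality characterization, and I expect it to need a small correction to the statement. For $s\ge2$, the termwise inequality $c_j\,c_j^{\,s-1}\le c_j(\mu q)^{s-1}$ is strict exactly when $0<c_j<\mu q$. Equality of the sums therefore holds iff every nonempty fiber has size exactly $\mu q$, as claimed. For $s=1$, however, both sides equal $1$ for every $F$, so equality always holds. I would therefore state the ``precisely when'' clause for $s\ge2$, which is the only case used for internal blocks, and note that $s=1$ is trivial.
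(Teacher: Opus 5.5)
Your proof is correct and follows the paper's argument essentially step for step: the fiber-by-fiber count giving $q^{-s}\sum_j c_j^{\,s}$, the pushforward of the uniform distribution under the bijection $M$, and the termwise bound $c_j^{\,s}\le c_j(\mu q)^{s-1}$, which the paper writes in normalized form with $x_j=c_j/q$. Your caveat about $s=1$ is also right. The paper's proof records the equality condition as $x_j^{\,s-1}=\mu^{\,s-1}$ for all $j$ with $x_j>0$, which holds automatically when $s=1$, so the ``precisely when every nonempty fiber has size $\mu q$'' clause of the statement genuinely requires $s\ge2$.
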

\begin{proof}
The tuple $w$ is constant with value $t\in T$ precisely when $a_i\in
F^{-1}(t)$ for every $i$, and there are $|F^{-1}(t)|^{s}$ such tuples $a$. Summing over $t\in T$ and dividing by $q^{s}$ gives the first identity while the second is its complement. For the last claim, a bijection carries the uniform distribution on $\F_q^s$ to itself, so $M(\lambda)$ is uniform and the computation is unchanged.

For the bound, put $x_j=c_j/q$, so that $x_j\ge0$ for every $j$ and
$\sum_j x_j=1$, the fibers of $F$ partitioning $\F_q$. Since
$x_j\le\mu$ for every $j$ and $s-1\ge0$,
\[
\frac{1}{q^{s}}\sum_{j=1}^{m}c_j^{\,s}
\;=\;\sum_{j}x_j^{\,s}
\;=\;\sum_{j}x_j\cdot x_j^{\,s-1}
\;\le\;\mu^{\,s-1}\sum_{j}x_j
\;=\;\mu^{\,s-1},
\]
with equality exactly when $x_j^{\,s-1}=\mu^{\,s-1}$ for every $j$ with
$x_j>0$.
\end{proof}
Everything specific to Berlekamp's setting is now confined to the choice of $F$, and there are two natural choices, one for each characteristic. In odd characteristic take $F(X)=X^{(q-1)/2}$, the quadratic character, and set $b_\lambda=a_\lambda^{(q-1)/2}$. In characteristic $2$ take $F=\Tr_{\F_q/\F_2}$ and set $c_\lambda=\Tr_{\F_q/\F_2}(a_\lambda)=a_\lambda+a_\lambda^2+\cdots
+a_\lambda^{2^{k-1}}$ for $q=2^k$. The fiber sizes of these two maps are classical, and substituting them into Lemma~\ref{lem:general-test-count} gives the following.
\begin{lemma}[Exact success probability]
\label{lem:exact-prob}
Let $P$ be a block with $s=|P|\ge2$, and let
\[
p_{q,s}\;=\;\Pr_{\lambda\in\F_q^s}\big[\,b_\lambda\text{ (resp.\ }c_\lambda
\text{) is not the constant polynomial}\,\big].
\]
Then
\[
p_{q,s}=
\begin{cases}
\displaystyle 1-\frac{1+2\left(\dfrac{q-1}{2}\right)^s}{q^s}, & q\text{ odd},\\[2ex]
\displaystyle 1-2^{1-s}, & q=2^k.
\end{cases}
\]
Moreover
\[
p_{q,s}\;\ge\;1-\mu^{\,s-1}\;\ge\;1-2^{1-s}\;\ge\;\tfrac12 ,
\qquad
\mu=\begin{cases}\dfrac{q-1}{2q}<\tfrac12, & q\text{ odd},\\[1.6ex]
\tfrac12, & q=2^k,\end{cases}
\]
with $p_{q,s}=\tfrac12$ if and only if $q=2^k$ and $s=2$. It is known exactly before the irreducible factorization of
$g_P$ is determined.
\end{lemma}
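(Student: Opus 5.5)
The plan is to instantiate Lemma~\ref{lem:general-test-count} with the two test maps and then verify the inequalities. First, the event ``$b_\lambda$ (resp.\ $c_\lambda$) is the constant polynomial'' has to be rephrased as ``the tuple $w=(F(M(\lambda)_i))_{i\in P}$ is constant.''

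In odd characteristic, $b_\lambda=a_\lambda^{(q-1)/2}$ lies in $B_P$, because $B_P$ is a subring. Since each $\pi_i$ is a ring homomorphism, $\pi(b_\lambda)=(F(M(\lambda)_i))_i$ with $F(X)=X^{(q-1)/2}$. An element of $B_P$ is a constant (i.e.\ lies in $\F_q\cdot 1$) exactly when its image under $\pi$ is a constant tuple. This holds because $\pi$ is a ring isomorphism onto $\F_q^s$ and sends $c\cdot 1$ to $(c,\dots,c)$. In characteristic $2$, $c_\lambda=\sum_{j<k}a_\lambda^{2^j}$ also lies in $B_P$, and $\pi_i(c_\lambda)=\Tr_{\F_q/\F_2}(M(\lambda)_i)$. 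With these two facts the paper's ``not the constant polynomial'' is exactly ``$w$ nonconstant,'' and Lemma~\ref{lem:general-test-count} applies, since $M$ is a bijection.

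Next come the classical fiber profiles. For the quadratic character, the fiber over $0$ is $\{0\}$, of size $1$. The fibers over $\pm1$ each have size $(q-1)/2$: $\F_q^*$ is cyclic of even order, squaring is $2$-to-$1$ on it, and Euler's criterion says $X^{(q-1)/2}=1$ exactly on the squares. Lemma~\ref{lem:general-test-count} then gives the bad probability $(1+2((q-1)/2)^s)/q^s$. For the trace, $\Tr_{\F_q/\F_2}$ is a nonzero $\F_2$-linear functional onto $\F_2$. It is nonzero because, as a polynomial of degree $2^{k-1}<q$, it cannot vanish on all of $\F_q$. Hence both fibers have size $q/2$, giving $2(q/2)^s/q^s=2^{1-s}$.

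Finally, the bounds. Here $\mu=\max(1,(q-1)/2)/q$ equals $(q-1)/(2q)$ for odd $q\ge3$ (when $q=3$ the fibers have sizes $1,1,1$, so $\mu=1/3=(q-1)/(2q)$, which is still consistent). The inequality $p_{q,s}\ge 1-\mu^{s-1}$ is the bound in Lemma~\ref{lem:general-test-count}. Since $\mu\le\tfrac12$ and $s\ge2$, we get $1-\mu^{s-1}\ge1-2^{1-s}\ge\tfrac12$.

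The main obstacle is the equality case, and I would handle it by comparing directly. For $q$ odd, $\mu<\tfrac12$ gives $\mu^{s-1}<2^{1-s}\le\tfrac12$, so $p_{q,s}>\tfrac12$ strictly. For $q=2^k$, $p_{q,s}=1-2^{1-s}=\tfrac12$ if and only if $s=2$. The exact-knowledge claim is immediate, because the formulas involve only $q$ and $s=\dim B_P$, which is computable by linear algebra.
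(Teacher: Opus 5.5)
Your proposal is correct and follows essentially the same route as the paper. Constancy of $b_\lambda$ (resp.\ $c_\lambda$) is reduced, via the ring homomorphisms $\pi_i$ and the bijectivity of $M$, to constancy of the tuple $(F(M(\lambda)_i))_{i\in P}$; the fiber profiles $\{1,\tfrac{q-1}{2},\tfrac{q-1}{2}\}$ and $\{q/2,q/2\}$ are plugged into Lemma~\ref{lem:general-test-count}; and $\mu<\tfrac12$ (odd $q$) versus $\mu=\tfrac12$ (characteristic $2$) gives the bounds and the equality case. The differences are only in detail: you prove $b_\lambda,c_\lambda\in B_P$ and the fiber sizes inline (via Euler's criterion and a degree argument for the trace), whereas the paper defers membership to Lemma~\ref{lem:nonconstant-char} and treats the fiber sizes as classical.
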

\begin{proof}
Both characteristics follow the same two steps: identify the fiber profile
of the test map, then apply Lemma~\ref{lem:general-test-count} to the tuple
$M(\lambda)=(\pi_i(a_\lambda))_{i\in P}$, which is uniform on $\F_q^s$
because $M$ is a bijection. Since each $\pi_i$ is a ring homomorphism,
$\pi_i(b_\lambda)=F(\pi_i(a_\lambda))=F(M(\lambda)_i)$ for the relevant
test map $F$, and likewise for $c_\lambda$; so $b_\lambda$ (resp.\
$c_\lambda$) is constant exactly when all $s$ coordinates of $M(\lambda)$
lie in a common fiber of $F$.

\emph{Odd $q$.} Here $F(X)=X^{(q-1)/2}$, whose image is $\{0,1,-1\}$ and
whose fibers are $\{0\}$, the nonzero squares $(\F_q^*)^2$, and the
nonsquares $\F_q^*\setminus(\F_q^*)^2$, of respective sizes
\[
c_1=1,\qquad c_2=c_3=\frac{q-1}{2}.
\]
Lemma~\ref{lem:general-test-count} gives constancy probability
$q^{-s}\big(1+2(\tfrac{q-1}2)^s\big)$, which is the stated formula. Since
$q\ge3$ we have $\tfrac{q-1}2\ge1$, so $\mu=\max_j c_j/q=\tfrac{q-1}{2q}$,
and $\mu<\tfrac12$ because $q-1<q$. The bound of
Lemma~\ref{lem:general-test-count} therefore yields
\[
p_{q,s}\;\ge\;1-\mu^{\,s-1}\;>\;1-\Big(\tfrac12\Big)^{s-1}\;=\;1-2^{1-s}
\;\ge\;\tfrac12
\]
for every $s\ge2$, the last step because $s-1\ge1$. The term $q^{-s}$
contributed by the fiber $\{0\}$ is absorbed by this estimate rather than
discarded: it is one of the $x_j^{\,s}$ summed in
Lemma~\ref{lem:general-test-count}.

\emph{$q=2^k$.} Here $F=\Tr_{\F_q/\F_2}$ is a nonzero $\F_2$-linear
functional, hence surjective with both fibers of size $q/2$, so
$c_1=c_2=q/2$ and $\mu=\tfrac12$. Lemma~\ref{lem:general-test-count} gives
constancy probability $q^{-s}\cdot2(q/2)^s=2^{1-s}$, which is the stated
formula, and the bound reads $p_{q,s}\ge1-2^{1-s}\ge\tfrac12$ for $s\ge2$,
here with equality in the first inequality since both fibers have size
exactly $\mu q$.

Finally, $p_{q,s}=\tfrac12$ forces $2^{1-s}=\tfrac12$, i.e.\ $s=2$, together
with equality in $\mu\le\tfrac12$, which fails for odd $q$; hence equality
occurs only for $q=2^k$, $s=2$. That $p_{q,s}<1$ in both cases is immediate,
the constancy probability being a sum of positive terms. In both cases the
value of $p_{q,s}$ was computed from $q$, $s$ and the fiber sizes of $F$
alone, none of which refers to the factorization of $g_P$.
\end{proof}

\begin{remark}[No single test serves every $\F_q$]
\label{rem:two-tests-needed}
The two characteristic cases are not an artefact of the presentation. Fixing
the coarser partition ``$\Tr=0$ vs.\ $\ne0$'' for a general odd
$p=\mathrm{char}(\F_q)$ gives $\mu=(p-1)/p$ and hence bad-fraction
$(1/p)^s+((p-1)/p)^s\to1$ as $p\to\infty$, so that test degrades for large
odd $p$ and is a genuine fix only for $p=2$; odd characteristic uses the
quadratic-character test instead, which needs no such restriction. This is
the elementary form of a tension analysed in general in
Remark~\ref{rem:no-single-test}, between small image and balanced fibers.
\end{remark}

\subsection{The splitting oracle and the exact block-splitting lemma}
Previously, Lemma~\ref{lem:exact-prob} supplies an exactly known probability. This subsection turns that probability into an algorithm. We first package the nonconstant test as a reversible quantum oracle, the \emph{splitting oracle} $S_\chi$, so that ``testing $b_\lambda$ (resp.\ $c_\lambda$) for nonconstancy'' becomes a single well-defined unitary rather than an informal procedure. Feeding $S_\chi$ and the probability $p_{q,s}$ into
Fact~\ref{fact:bhmt} then gives this subsection's main result in
Lemma~\ref{lem:splitting}, a procedure that produces a splitting element with certainty, using a single application of $S_\chi$.

Write $p_P:=p_{q,|P|}\ge\tfrac12$ for the exactly known success probability of Lemma~\ref{lem:exact-prob}. Let $\mathcal A_P$ be a quantum circuit preparing the uniform superposition over $\lambda\in\F_q^s$ and computing $a_\lambda\bmod g_P$.

\begin{definition}[Splitting oracle]\label{def:splitting-oracle}
Let $P$ be a block with $s=|P|\ge2$. The \emph{splitting oracle} $S_\chi$ is
the unitary implementing the Boolean predicate
\[
\chi(\lambda)=
\begin{cases}
1, & \text{if the test element $b_\lambda$ (resp.\ $c_\lambda$) associated with $\lambda$ is nonconstant},\\
0, & \text{otherwise},
\end{cases}
\]
acting as $S_\chi\ket{\lambda}\ket{z}=\ket{\lambda}\ket{z\oplus\chi(\lambda)}$
on the register prepared by $\mathcal A_P$ together with a flag qubit. One
application of $S_\chi$ includes the reversible computation of $b_\lambda$
(resp.\ $c_\lambda$) from $a_\lambda$, the reversible nonconstant test of
Lemma~\ref{lem:nonconstant}, and the subsequent uncomputation of all
temporary registers, and is implemented explicitly by the
\textsc{PhaseOracle} procedure of Section~\ref{sec:phaseoracle}. This makes
the phrase ``one application of $S_\chi$,'' used throughout the
complexity statements below, unambiguous.
\end{definition}

\begin{lemma}[Exact block splitting]\label{lem:splitting}
For every block $P$ with $s=|P|\ge2$, there is a zero-error quantum
procedure that, in the class $\EQPu$ of Section~\ref{sec:model}, outputs, with certainty, an element $\lambda\in\F_q^s$ such
that $b_\lambda$ (resp.\ $c_\lambda$) is not a constant polynomial. The
procedure uses
\[
O\!\left(\frac{1}{\sqrt{p_P}}\right)=O(1)
\]
applications of a reversible circuit $\mathcal A_P$,
its inverse, and the splitting oracle $S_\chi$ of
Definition~\ref{def:splitting-oracle}, plus $\mathrm{poly}(n,\log q)$
further elementary operations. Since $p_P\ge\tfrac12$, the number of
amplitude-amplification iterations is not merely bounded by an absolute
constant but is equal to $1$, for every $q$, every block and every $s\ge2$
(Remark~\ref{rem:one-iteration}).
\end{lemma}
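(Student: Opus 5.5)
The plan is to instantiate Fact~\ref{fact:bhmt} with $\mathcal A=\mathcal A_P$ and predicate $\chi$, supplying the exact success probability from Lemma~\ref{lem:exact-prob}. There are four steps.

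\textbf{Step 1: the state prepared by $\mathcal A_P$.} First I will fix $\mathcal A_P$ concretely. It applies a uniform-superposition preparation on $s$ registers holding elements of $\F_q$. For general $q$ this needs either an exact $\F_q$-QFT-free preparation or an exact uniform superposition over $q$ basis states. I will realize the latter with the arbitrary-rotation gates of the model, since amplitudes $1/\sqrt q$ are efficiently computable from $q$. The circuit then reversibly computes $a_\lambda=\sum_k\lambda_kv_k \bmod g_P$, at cost $\mathrm{poly}(n,\log q)$. Measuring the $\lambda$-register of $\mathcal A_P\ket0$ therefore gives a uniform $\lambda\in\F_q^s$. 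By Lemma~\ref{lem:exact-prob}, the probability that $\chi(\lambda)=1$ is exactly $a=p_P=p_{q,s}$. This number is computable from public data $q$ and $s$ alone, and $s=\dim B_P$ is known classically from the linear algebra that produced the basis. Hence the hypotheses of Fact~\ref{fact:bhmt} hold, with $a>0$ because $p_P\ge\tfrac12$.

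\textbf{Step 2: realizing $S_\chi$ as a phase reflection.} Next I will check that $S_\chi$ is a genuine unitary computing $\chi$ with clean uncomputation. This amounts to computing $b_\lambda$ or $c_\lambda$ by repeated squaring, in $O(\log q)$ ring multiplications modulo $g_P$. I then test whether all of its coefficients except the constant one vanish. The reduction of $b_\lambda$ lies in $R_P$ with degree $<n_P$, so ``nonconstant'' means some coefficient of $x^j$ with $j\ge1$ is nonzero. This is an OR of comparisons, after which all work registers are uncomputed. Implemented via phase kickback on a $\ket{-}$ flag, this gives the reflection about the bad subspace that the Grover iterate $\mathcal Q$ of Fact~\ref{fact:bhmt} requires. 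I defer the circuit-level details to Sections~\ref{sec:phaseoracle} and Lemma~\ref{lem:nonconstant}, and only need here that one application costs $\mathrm{poly}(n,\log q)$.

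\textbf{Step 3: exactness and the iteration count.} For the bound $O(1/\sqrt{p_P})=O(1)$ I use $p_P\ge\tfrac12$ directly. For the sharper claim of exactly one iteration I argue as follows. Write $\sin^2\theta=p_P$, so $\theta\in[\pi/4,\pi/2)$. After $m$ iterations the good amplitude is $\sin((2m+1)\theta)$. Because $\theta\ge\pi/4$, we have $3\theta\ge\pi/2$, so a single iteration already overshoots. The BHMT construction then reduces the angle to $\theta'$ with $3\theta'=\pi/2$, that is, $\theta'=\pi/6$ and $a'=\tfrac14$. It does this by adjoining one ancilla rotated so that the joint good probability is $p_P\cdot\gamma=\tfrac14$, with $\gamma=1/(4p_P)\le\tfrac12\le1$ computable from $q$ and $s$. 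One iterate of the modified operator then maps the state exactly into the good subspace, where good means $\chi=1$ and the ancilla is in its marked state. Measuring $\lambda$ yields a valid element with certainty. The rotation angle is $\arcsin(1/(2\sqrt{p_P}))$, a function of the public data $q$ and $s$, which places the circuit in $\EQPu$ under the uniformity condition of \cite{NO05}.

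\textbf{Expected obstacle.} The main difficulty I expect is Step~1's exact preparation of the uniform superposition over $\F_q$ when $q$ is not a power of two. Any residual amplitude error there would break exactness. I will handle it either with the same exact-rotation gates (a binary-tree of controlled rotations with angles computable from $q$), or by absorbing it into the amplification itself: treating out-of-range register values as additional bad outcomes would change $a$ to a still exactly known value, $p_P\,q^s/2^{s\lceil\log_2 q\rceil}$. The second option could cost the ``exactly one iteration'' claim, so I would try the first approach first.
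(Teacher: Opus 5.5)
Your proposal is correct and follows the paper's route: Fact~\ref{fact:bhmt} is applied with $\mathcal A=\mathcal A_P$ and $a=p_P$ from Lemma~\ref{lem:exact-prob}, and your explicit reduction to $a'=\tfrac14$, $\theta'=\pi/6$ is what the count $m=\lceil\pi/(4\theta)-\tfrac12\rceil=1$ of Remark~\ref{rem:one-iteration} amounts to; your exact preparation of the uniform superposition over $\F_q$ by controlled rotations also fills in a detail the paper leaves implicit. Keep that first option, because your fallback of counting out-of-range register values as bad costs more than the one-iteration claim: for odd $q$ it multiplies the success probability by $(q/2^{\lceil\log_2 q\rceil})^s$, which is exponentially small in $s$ (e.g.\ $(3/4)^s$ for $q=3$), and so breaks the $O(1)$ bound itself.
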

\begin{proof}
By Lemma~\ref{lem:exact-prob}, the probability that measuring
$\mathcal A_P|0\rangle$ and evaluating $\chi$ on the resulting $b_\lambda$
(resp.\ $c_\lambda$) succeeds is exactly $p_P>0$, and $p_P$ is known in
advance from $q$ and $s=|P|$ alone. Fact~\ref{fact:bhmt} therefore applies
directly with $\mathcal A=\mathcal A_P$, $\chi$ as in
Definition~\ref{def:splitting-oracle}, and $a=p_P$, yielding a procedure
that succeeds with probability exactly one, using $\Theta(1/\sqrt{p_P})$
applications of $\mathcal A_P,\mathcal A_P^{-1}$; since $p_P\ge\tfrac12$,
this is $O(1)$. The precise count is given in
Remark~\ref{rem:one-iteration}.
\end{proof}

\begin{remark}[The constant is one]
\label{rem:one-iteration}
The $O(1)$ above can be replaced by an exact figure, uniformly in the input.
In the notation of Fact~\ref{fact:bhmt}, write $\theta=\arcsin\sqrt{a}$, so
that $m$ standard Grover iterations carry the good amplitude to
$\sin\big((2m+1)\theta\big)$ and the exact procedure uses
$m=\big\lceil\frac{\pi}{4\theta}-\frac12\big\rceil$ iterations, the residual
mismatch being absorbed by the single auxiliary rotation. By
Lemma~\ref{lem:exact-prob} we have $\tfrac12\le p_{q,s}<1$ for every $q$ and
every $s\ge2$, hence $\tfrac\pi4\le\theta<\tfrac\pi2$, hence
\[
0\;<\;\frac{\pi}{4\theta}-\frac12\;\le\;\frac12 ,
\qquad\text{so}\qquad m=1 .
\]
Every invocation of Lemma~\ref{lem:splitting} therefore performs exactly one
Grover iterate $\mathcal Q=-\mathcal A_P S_0\mathcal A_P^{-1}S_\chi$.
Counting the initial state preparation, this is two applications of
$\mathcal A_P$, one of $\mathcal A_P^{-1}$ and one of $S_\chi$, together
with the one auxiliary rotation of Fact~\ref{fact:bhmt}. Both bounds on
$p_{q,s}$ are needed here: $p_{q,s}\ge\tfrac12$ prevents $m\ge2$ and
$p_{q,s}<1$ prevents $m=0$.
\end{remark}



\subsection{Reversible arithmetic}
\label{sec:reversible-arith}
Throughout this subsection, complexity is measured in $\F_q$-operations;
translating to bit complexity or elementary-gate counts requires fixing a
representation of $\F_q$ and is not carried out here (see \cite{VBE96}).

A single naive reduction step $h\mapsto h\bmod g_P$ is not injective on registers of bounded degree. Recovering $h$ from $h\bmod g_P$ requires the quotient $\lfloor h/g_P\rfloor$, which a naive reduction discards. The fix is standard: write the quotient into a fresh ancilla,
\[
(h,\underbrace{0}_{\text{fresh}})\;\longmapsto\;(h-qg_P,\;q),\qquad q=\lfloor h/g_P\rfloor,
\]
which is a bijection (given the output, $h=qg_P+(h-qg_P)$). Since $g_P$ is
monic no leading-coefficient inversion is needed here.

\begin{proposition}[Reversible ring multiplication]\label{prop:mult}
Let $R_P=\F_q[x]/(g_P)$, $n_P=\deg g_P$. There is a reversible circuit
\[
|u\rangle|v\rangle|0\rangle
\longmapsto
|u\rangle|v\rangle|uv\bmod g_P\rangle
\]
for $u,v\in R_P$, using $O(n_P^2)$ $\F_q$-multiplications and additions for
the raw (unreduced) product, computed into a fresh ancilla, followed by one
reduction step as above. All internal scratch (the raw product, the
quotient) is uncomputed via the standard compute-uncompute
pattern, leaving only the clean output.
\end{proposition}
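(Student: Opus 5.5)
We build the circuit in three reversible stages and then clean up with the compute--copy--uncompute pattern.

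\emph{Stage 1 (raw product).} Represent $u,v\in R_P$ by coefficient vectors of length $n_P$. Into a fresh ancilla $w$ of length $2n_P-1$, initialised to $0$, accumulate $w_{i+j}\mathrel{+}=u_iv_j$ for all $0\le i,j<n_P$. Each update $(u_i,v_j,w_{i+j})\mapsto(u_i,v_j,w_{i+j}+u_iv_j)$ is a bijection whose inverse subtracts the same term, because $u$ and $v$ are left untouched. This takes $n_P^2$ multiplications and additions in $\F_q$, and produces $|u\rangle|v\rangle|h\rangle$ with $h=uv$ of degree at most $2n_P-2$.

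\emph{Stage 2 (reduction).} Apply the reversible division step recorded just before the proposition. Schoolbook long division by the monic $g_P$ runs for $n_P-1$ rounds, from degree $2n_P-2$ down to degree $n_P$. In round $t$ the current top coefficient $c$ is copied (by addition) into a fresh quotient cell $Q_t$. The round then subtracts $c\,x^{t}g_P$ from the working register, which zeroes that top coefficient. Each round is a bijection on (working register, quotient register), since it is inverted by adding back $Q_t x^t g_P$. Monicity guarantees that $c$ itself is the quotient digit, so no inversion is needed. The rounds cost $O(n_P^2)$ operations in total and leave $|u\rangle|v\rangle|r\rangle|Q\rangle$, where $r=h\bmod g_P$ occupies the low $n_P$ cells and the high cells are $0$.

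\emph{Stage 3 (clean-up).} Copy $r$ into the designated output register $|0\rangle$ using $n_P$ additions. Then run Stages 2 and 1 in reverse. Undoing Stage 2 regenerates $h$ from $(r,Q)$ and returns $Q$ to zero. Undoing Stage 1 subtracts the $u_iv_j$ terms and returns $w$ to zero. The net map is $|u\rangle|v\rangle|0\rangle\mapsto|u\rangle|v\rangle|uv\bmod g_P\rangle$, with all scratch restored, at total cost $O(n_P^2)$.

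The main obstacle is showing that Stage 2 is genuinely reversible rather than a lossy map $h\mapsto h\bmod g_P$. The argument is the bijection $(h,0)\mapsto(h-Qg_P,Q)$ noted before the proposition, now applied round by round. One must check that each round's controlling value $c$ is preserved in $Q_t$, which is exactly what allows the inverse round to be executed.
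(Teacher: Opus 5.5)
Your proposal is correct and follows the paper's construction: the raw product goes into a fresh ancilla, the quotient-recording reduction $(h,0)\mapsto(h-Qg_P,Q)$ uses the monicity of $g_P$ to avoid inversion, the result is copied into the output register, and all scratch is uncomputed. Your round-by-round long division is simply an explicit implementation of the paper's single reduction step, and your $O(n_P^2)$ accounting agrees with the paper's.
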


Iterating this reduction step across $n_P$ rounds simulates the polynomial
Euclidean algorithm reversibly, and so yields a reversible $\gcd$ circuit on
$R_P$; the construction is routine apart from termination, which must be
padded to a fixed worst-case depth so that all branches of a superposition
finish together, and we record it in the Appendix. We do not use it: the
nonconstancy predicate $\chi$ admits the much cheaper test of
Lemma~\ref{lem:nonconstant}, and the only $\gcd$ the algorithm ever computes
is classical, on a measured polynomial (Section~\ref{sec:composition}).

The construction above should not be interpreted as a black-box group
algorithm. Its efficiency relies essentially on the explicit representation
of $R_P=\F_q[x]/(g_P)$, including the reversible polynomial arithmetic of
Proposition~\ref{prop:mult} and, at one classical post-processing step, a
polynomial $\gcd$ extraction (Section~\ref{sec:composition}). The
comparison with the black-box setting of \cite{Imran-Ivanyos2022} in
Section~\ref{sec:setup} is therefore only motivational: the explicit ring
structure of $R_P$ provides additional information that is unavailable in
a general black-box model.

\subsection{The phase oracle}
\label{sec:phaseoracle}

With the reversible arithmetic of Section~\ref{sec:reversible-arith} in
hand, this subsection builds the splitting oracle $S_\chi$ of
Definition~\ref{def:splitting-oracle} explicitly. We first pin down what
``nonconstant'' means concretely for an element of $B_P$
(Lemma~\ref{lem:nonconstant-char}), then give a reversible test for it that
is far cheaper than a reversible $\gcd$ circuit (Appendix), a single pass over coefficients (Lemma~\ref{lem:nonconstant}), and assemble it, together with the power-ladder computation of $b_\lambda$ or $c_\lambda$, into the \textsc{PhaseOracle} procedure that implements $S_\chi$, with its cost recorded in Lemma~\ref{lem:phaseoraclecost}.

\begin{lemma}[Constant-value characterization]\label{lem:nonconstant-char}
Let $b\in B_P$. Then $\pi_i(b)=t$ for every $i\in P$, for some fixed
$t\in\F_q$, if and only if $b$ equals the constant polynomial $t\cdot1$. Consequently, the bad event of Lemma~\ref{lem:exact-prob} is exactly the event that $b_\lambda$ (resp.\ $c_\lambda$) is a constant polynomial.
\end{lemma}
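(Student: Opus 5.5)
The argument rests on one fact: the coordinate map $\pi=(\pi_i)_{i\in P}:B_P\to\F_q^s$ is a ring isomorphism (Section~\ref{sec:setup}), and it maps the constant polynomial $t\cdot1$ to the diagonal tuple $(t,\ldots,t)$. Both directions follow from this.

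For the direction ``$\Leftarrow$'', we must first check that $t\cdot1$ lies in $B_P$. This holds because $\sigma(t\cdot1)=t^q\cdot1=t\cdot1$, since $t\in\F_q$. Next, each $\pi_i$ is reduction modulo $f_i$ and is $\F_q$-linear, and it sends $1$ to $1$. Hence $\pi_i(t\cdot1)=t$ for every $i\in P$.

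For the direction ``$\Rightarrow$'', suppose $\pi_i(b)=t$ for all $i\in P$. Then $\pi(b)=(t,\ldots,t)=\pi(t\cdot1)$ by the previous paragraph. Since $\pi$ is injective, $b=t\cdot1$ in $B_P\subseteq R_P=\F_q[x]/(g_P)$.

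The phrase ``constant polynomial'' also needs a brief comment. Elements of $R_P$ are represented by their unique reduced representatives, of degree less than $n_P$. The element $t\cdot1$ is represented by the degree-$0$ polynomial $t$. So saying that $b$ equals $t\cdot 1$ is the same as saying that its reduced representative has zero coefficients in degrees $1,\ldots,n_P-1$. Equivalently, one can argue with the CRT directly: $b-t$ vanishes modulo every $f_i$, so by squarefreeness $g_P\mid b-t$, and since $\deg(b-t)<n_P$ this forces $b=t$.

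For the consequence, apply the equivalence to $b=b_\lambda$ (resp.\ $c_\lambda$). We need these to lie in $B_P$, which holds because $B_P$ is a subring and is closed under the polynomial expressions $X^{(q-1)/2}$ and $\Tr$. In the proof of Lemma~\ref{lem:exact-prob}, the bad event is that all coordinates $\pi_i(b_\lambda)=F(M(\lambda)_i)$ coincide, that is, they equal a common value $t$. By the equivalence just proved, this is exactly the event that $b_\lambda$ is a constant polynomial.

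No step is genuinely hard. The only point needing care is the identification of ``constant element of $R_P$'' with ``constant reduced representative'', which the uniqueness of reduced representatives settles.
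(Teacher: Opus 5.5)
Your proof is correct and follows the paper's argument. In both, the equivalence comes from injectivity of the ring isomorphism $\pi:B_P\to\F_q^s$, which sends $t\cdot1$ to the diagonal $(t,\ldots,t)$, and the consequence follows once $b_\lambda,c_\lambda\in B_P$. The only differences are cosmetic. You obtain $c_\lambda\in B_P$ from closure of the subring $B_P$ under sums and products, which is the uniform argument the paper itself adopts later in Lemma~\ref{lem:general-success}, rather than the paper's term-by-term Frobenius check in characteristic~$2$. You also make explicit two points the paper leaves implicit: that $t\cdot1\in B_P$, and that ``constant polynomial'' means the reduced representative has zero coefficients in degrees $1,\ldots,n_P-1$.
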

\begin{proof}
Under the identification $B_P\cong\F_q^{|P|}$ induced by $\pi=(\pi_i)_{i\in P}$, the constant polynomial $t\cdot1$ maps to the diagonal tuple $(t,\ldots,t)$. Since $\pi$ is injective, this tuple has $t\cdot1$ as its unique preimage in $B_P$. Hence an element of $B_P$ agrees across all $s$ coordinates if and only if it is the constant polynomial with that common value. This proves the first claim.

It remains to check that $b_\lambda,c_\lambda\in B_P$, so that the
characterization applies to them. For odd $q$, $B_P=\mathrm{Fix}(\sigma)$ is
closed under multiplication (it is a subring of $R_P$), so from
$a_\lambda\in B_P$ we get $b_\lambda=a_\lambda^{(q-1)/2}\in B_P$ directly.
For $q=2^k$, we use that exponentiation commutes: $(x^m)^n=(x^n)^m$ for any
positive integers $m,n$ and any $x\in R_P$. Since $a_\lambda\in B_P$, i.e.\
$\sigma(a_\lambda)=a_\lambda^{q}=a_\lambda$, applying $\sigma$ to each
individual term $a_\lambda^{2^j}$ of
$c_\lambda=\sum_{j=0}^{k-1}a_\lambda^{2^j}$ gives
\[
\sigma\big(a_\lambda^{2^j}\big)
=\big(a_\lambda^{2^j}\big)^{q}
=\big(a_\lambda^{q}\big)^{2^j}
=\sigma(a_\lambda)^{2^j}
=a_\lambda^{2^j},
\]
so \emph{every} summand of $c_\lambda$ is already individually fixed by
$\sigma$, no term needs to be related to any other. Since $\sigma$ is
additive (it is a ring homomorphism), a sum of $\sigma$-fixed elements is
again $\sigma$-fixed:
\[
\sigma(c_\lambda)
=\sigma\Big(\sum_{j=0}^{k-1}a_\lambda^{2^j}\Big)
=\sum_{j=0}^{k-1}\sigma\big(a_\lambda^{2^j}\big)
=\sum_{j=0}^{k-1}a_\lambda^{2^j}
=c_\lambda.
\]
Hence $c_\lambda\in B_P$ as well.

Thus in both characteristics the relevant test element lies in $B_P$, the
counting lemmas of Section~\ref{sec:badcount} apply exactly to it, and the
bad (constant) event coincides with the event analyzed there.
\end{proof}

This is the same fact underlying the elementary observation that any
non-first vector of the fixed basis $v_1=1,v_2,\ldots,v_s$ is automatically
non-constant (as $v_1$ alone spans the constants), now applied to
$b_\lambda$ and $c_\lambda$ in place of a basis vector.

\begin{lemma}[Reversible nonconstant test]\label{lem:nonconstant}
Let $w=w_0+w_1x+\cdots+w_{n_P-1}x^{n_P-1}\in R_P$. There is a reversible
circuit
\[
|w\rangle|0\rangle\longmapsto|w\rangle|[\,w\notin\F_q\,]\rangle
\]
using $O(n_P)$ coefficient-level nonzero tests, combined via the standard
reversible decomposition of a multi-input Boolean OR, and no ancilla beyond
the output flag. Compute and uncompute use the identical circuit run
forward and backward, since the underlying Toffoli/CNOT construction is
self-inverse. The same circuit serves both characteristics; only whether
$w=b_\lambda$ or $w=c_\lambda$ is fed in differs.
\end{lemma}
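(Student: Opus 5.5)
The plan is to build the circuit in two layers and then check reversibility, the cost, and the uncomputation claim.

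\emph{Classical predicate.} First I reduce the predicate to coefficients. Every element of $R_P$ is stored in the monomial basis $1,x,\ldots,x^{n_P-1}$ with $n_P=\deg g_P$. In this basis $w\in\F_q$, meaning $w$ is a constant polynomial, holds exactly when $w_1=\cdots=w_{n_P-1}=0$. Hence
\[
[\,w\notin\F_q\,]=\bigvee_{j=1}^{n_P-1}[\,w_j\neq0\,].
\]
This is a Boolean OR of $n_P-1$ coefficient-level nonzero tests. Each test $[w_j\ne0]$ is the OR of the $\lceil\log_2q\rceil$ bits of the register holding $w_j$. The predicate is therefore a fixed Boolean function of the bits of $|w\rangle$, with one multi-input OR over the coefficient registers $j\ge1$.

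\emph{Reversible realization.} Next I implement $|w\rangle|z\rangle\mapsto|w\rangle|z\oplus[w\notin\F_q]\rangle$ using De Morgan. The OR of the relevant bits is the complement of the AND of their negations. The circuit applies $X$ gates to every bit of $w_1,\ldots,w_{n_P-1}$. It then applies one multi-controlled NOT from all these bits onto the flag, followed by an $X$ on the flag. Finally it re-applies the $X$ gates to restore $w$.

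The multi-controlled Toffoli on $N=(n_P-1)\lceil\log_2 q\rceil$ controls is decomposed in the standard way into $O(N)$ Toffoli and CNOT gates. This uses the Barenco et al.\ construction, which needs no extra clean ancillas and at most one borrowed dirty qubit; any idle qubit of the enclosing oracle can serve. At the level of $\F_q$-operations this is $O(n_P)$ coefficient-level nonzero tests, as claimed.

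Reversibility is immediate because every gate used is a permutation of basis states. On basis inputs the circuit only XORs a function of $w$ into the flag and returns $w$ unchanged. Applied to $|w\rangle|0\rangle$ it therefore gives the stated map.

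\emph{Uncomputation and independence of characteristic.} The whole circuit is a product $C=X^{\otimes}\,\mathrm{MCX}\,X_{\rm flag}\,X^{\otimes}$ of self-inverse gates arranged palindromically up to commuting factors. Concretely, it acts as the involution $z\mapsto z\oplus\chi(w)$, so $C^2=I$ and running it again (equivalently, backward) erases the flag. This is what \textsc{PhaseOracle} needs to clean up after kicking back the phase. Nothing in the construction depends on $q$ beyond the register width. By Lemma~\ref{lem:nonconstant-char}, $b_\lambda$ and $c_\lambda$ both lie in $B_P\subseteq R_P$ and are represented in the same basis, so the same circuit serves both characteristics.

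\emph{Main obstacle.} The only delicate point is the ancilla claim, ``no ancilla beyond the output flag''. A naive OR tree creates intermediate garbage. I avoid this by using the ancilla-free (or dirty-ancilla) multi-controlled Toffoli decomposition rather than a tree of clean ancillas, accepting a linear gate count in $N$. If a strictly ancilla-free construction is required, I would fall back on the known linear-size decompositions that use only a borrowed qubit, which is restored to its original state.
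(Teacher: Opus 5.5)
Your proof is correct and uses the same reduction as the paper: $w\in\F_q$ exactly when $w_1=\cdots=w_{n_P-1}=0$, so the flag is an OR of $n_P-1$ coefficient-nonzero predicates realized with Toffoli/CNOT gates. The difference is in how that OR is built.

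The paper computes each $[w_j\ne0]$ into its own flag and then ORs the $n_P-1$ flags. Read literally, this creates $n_P-1$ intermediate bits, which does not square well with the claim of no ancilla beyond the output flag. You instead apply De Morgan once over all $N=(n_P-1)\lceil\log_2 q\rceil$ bits and use a single multi-controlled NOT, decomposed into $O(N)$ Toffolis with one borrowed qubit. So no intermediate flags ever exist, and the gate count still matches $O(n_P)$ nonzero tests of $O(\log q)$ gates each.

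Your justification of the uncompute claim is also the sharper one. A product of self-inverse gates is not self-inverse in general, but your whole circuit acts as the involution $z\mapsto z\oplus\chi(w)$, so $C=C^{-1}$.

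One simplification: your closing fallback is unnecessary, and slightly circular, since a borrowed-qubit construction is not wire-free. The bits of $w_0$ belong to the lemma's own register $\ket{w}$ and are never used as controls, so any one of them can serve as the borrowed dirty qubit. This makes the construction self-contained. Some such spare wire is in fact unavoidable with Toffoli/CNOT/NOT alone: on exactly $N+1$ wires with $N\ge3$ these gates generate only even permutations, whereas an $N$-controlled NOT on those wires is a single transposition.
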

\begin{proof}
By definition, $w$ is a constant polynomial, i.e.\ $w\in\F_q$, if and only
if $w_1=\cdots=w_{n_P-1}=0$. Hence the predicate $[\,w\notin\F_q\,]$ is
exactly the Boolean OR of the $n_P-1$ coefficient-nonzero predicates
$[\,w_j\ne0\,]$, $1\le j\le n_P-1$. Each such predicate is computed
reversibly from the corresponding coefficient register in $O(1)$
$\F_q$-operations, and the resulting $n_P-1$ flags are combined into a
single output flag by a standard reversible OR circuit built from
Toffoli/CNOT gates, for a total of $O(n_P)$ elementary tests.
\end{proof}

Computing $b_\lambda$ or $c_\lambda$ from $a_\lambda$ uses repeated squaring
(odd $q$) or repeated Frobenius-squaring (characteristic $2$), each instance
of Proposition~\ref{prop:mult}, following the same fresh-ancilla discipline
used there (squaring in place is not injective, but squaring into a fresh register while retaining the previous value is).

\begin{algorithm}[H]
\caption{\textsc{PowerLadder}$(a)$ --- computes $a^{(q-1)/2}\bmod g_P$ (odd $q$) reversibly}
\begin{algorithmic}[1]
\Statex \textbf{Input:} register $A$ holding $a\in R_P$ (read-only)
\Statex \textbf{Output:} fresh register holding $a^{(q-1)/2}\bmod g_P$; all scratch returned to $|0\rangle$
\State $S_0\gets A$ \Comment{alias, no new register}
\For{$j=1$ to $L-1$} \Comment{$L=O(\log q)$}
  \State allocate fresh $S_j$; compute $S_j\gets M(S_{j-1},S_{j-1})$ \Comment{Prop.~\ref{prop:mult}}
\EndFor
\State allocate fresh $T_0\gets1$; \ $i\gets0$
\For{$j=0$ to $L-1$}
  \If{$e_j=1$} \Comment{$e=(q-1)/2=\sum e_j2^j$, fixed, classical}
    \State $i\gets i+1$; allocate fresh $T_i$; compute $T_i\gets M(T_{i-1},S_j)$
  \EndIf
\EndFor
\State copy $T_i$ into a fresh output register $\mathrm{OUT}$
\State uncompute steps 9 and 3, in reverse order, using the stored multiplicands
\State \Return $\mathrm{OUT}$
\end{algorithmic}
\end{algorithm}

\textsc{TraceLadder}$(a)$ ($q=2^k$) has the same shape: squaring is
replaced by Frobenius-squaring (still $M$), and the conditional second loop
by the unconditional sum $c\gets S_0+S_1+\cdots+S_{k-1}$ (characteristic-$2$
addition is its own inverse, needing no separate uncompute pass beyond
returning the $S_j$).

\begin{algorithm}[H]
\caption{\textsc{PhaseOracle} --- applies $S_\chi$ to $|\lambda\rangle|a_\lambda\rangle$}
\begin{algorithmic}[1]
\State $B\gets$ \textsc{PowerLadder}$(a_\lambda)$ \Comment{\textsc{TraceLadder} if $q=2^k$}
\State $F\gets$ \textsc{IsNonConstant}$(B)$ \Comment{Lemma~\ref{lem:nonconstant}}
\State apply $Z$ to $F$ \Comment{phase $-1$ iff $B$ is non-constant}
\State uncompute $F$ \Comment{re-run \textsc{IsNonConstant}, self-inverse}
\State uncompute $B$ \Comment{run \textsc{PowerLadder}$^{-1}$}
\end{algorithmic}
\end{algorithm}

\begin{lemma}[Phase oracle cost]\label{lem:phaseoraclecost}
$S_\chi$ costs $O(\log q)$ calls to $M$ (Proposition~\ref{prop:mult}, each
$O(n_P^2)$ $\F_q$-operations) plus $O(n_P)$ elementary gates for the flag:
$O(n_P^2\log q)$ $\F_q$-operations total, with compute and uncompute
contributing equal, not multiplicative, factors.
\end{lemma}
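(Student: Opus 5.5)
The plan is to account for the cost of \textsc{PhaseOracle} line by line. Each of its five steps gets a cost in $\F_q$-operations, the totals are added, and we check that uncomputation only doubles, rather than squares, the count. The basic unit is one call to $M$ from Proposition~\ref{prop:mult}. That proposition charges each call $O(n_P^2)$ $\F_q$-operations for the raw product, plus one reduction step modulo the monic $g_P$. We also count the compute--uncompute of its internal scratch. Long division of a polynomial of degree $<2n_P$ by a monic degree-$n_P$ polynomial takes $O(n_P^2)$ operations, so each $M$ call is $O(n_P^2)$ including its own cleanup.

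First step: \textsc{PowerLadder} for odd $q$. The exponent $e=(q-1)/2$ has $L=\lceil\log_2 q\rceil=O(\log q)$ bits. The squaring loop makes $L-1$ calls to $M$. The multiplication loop makes at most $L$ calls, one for each $j$ with $e_j=1$. Copying $T_i$ into $\mathrm{OUT}$ costs $O(n_P)$ coefficient copies. The uncompute of steps 9 and 3 replays exactly the same calls in reverse, so it adds at most another $2L$ calls. In total this is at most $4L=O(\log q)$ calls.

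For \textsc{TraceLadder} with $q=2^k$, there are $k-1=O(\log q)$ Frobenius squarings. The running sum costs $O(k\,n_P)$ additions, and the reverse pass is the same. This is again $O(\log q)$ calls to $M$, plus lower-order additive terms.

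Second step: the flag. Lemma~\ref{lem:nonconstant} computes the flag with $O(n_P)$ coefficient tests and Toffoli/CNOT gates. The $Z$ gate is one gate. Re-running the self-inverse test to uncompute the flag costs $O(n_P)$ again.

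Third step: \texttt{uncompute} $B$. This runs \textsc{PowerLadder}$^{-1}$, which is the gate-by-gate inverse of a circuit of the same size, so it contributes the same $O(\log q)$ calls to $M$.

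Summing gives $O(\log q)$ calls to $M$ plus $O(n_P)$ flag gates. Since $O(\log q)\cdot O(n_P^2)$ dominates, the total is $O(n_P^2\log q)$. The only point needing care, and the one the statement singles out, is the nesting of uncomputations. \textsc{PhaseOracle} is $X\,Z\,X^{-1}$ with $X$ = ladder followed by test. \textsc{PowerLadder} itself is $Y\,\mathrm{copy}\,Y^{-1}$. Each $M$ call is internally compute/copy/uncompute. At every level the inverse is a separate sequential block, never invoked once per gate of the forward block. So each nesting level multiplies the count by at most the constant $2$. There are a fixed number of levels (three), so the overhead is an absolute constant factor and there is no extra $\log q$ or $n_P$ factor.

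I expect the main obstacle to be this bookkeeping. Specifically, we must confirm that the ancillas $S_j,T_i$ are released by replaying the stored multiplications and not by recomputing the ladder from scratch at each step, which would give $O(\log^2 q)$. The algorithm as written stores all $S_j$ and $T_i$ until the single reverse sweep, so the linear count holds. The price is $O(n_P\log q)$ qubits of space, which the lemma does not bound.
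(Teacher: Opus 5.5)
Your proof is correct and follows the same route as the paper. The paper gives no separate proof: the lemma is the direct tally of \textsc{PowerLadder} (or \textsc{TraceLadder}), the test of Lemma~\ref{lem:nonconstant}, and their inverses, charged against Proposition~\ref{prop:mult}, which is what you carry out. One small slip in your closing aside, which does not affect the lemma: the stored $S_j,T_i$ occupy $O(n_P\log q)$ $\F_q$-registers, i.e.\ $O(n_P\log^2 q)$ qubits, not $O(n_P\log q)$ qubits.
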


\subsection{Composition across rounds}
\label{sec:composition}
A successful application of Lemma~\ref{lem:splitting} yields a $\lambda$
with $b_\lambda$ (resp.\ $c_\lambda$) non-constant. Measuring it and
computing $\gcd(g_P,b_\lambda-t)$ classically for $t\in\{0,1,-1\}$ (resp. $\{0,1\}$) via the ordinary Euclidean algorithm, a one-time, non-reversible computation on a now-concrete polynomial, needing none of Section~\ref{sec:reversible-arith}'s reversibility machinery, yields a nontrivial proper divisor $g_{P_1}\mid g_P$, hence a split $P=P_1\sqcup P_2$. This composition of Lemma~\ref{lem:splitting} with classical $\gcd$ extraction is recorded once and for all in the following lemma, which is what \textsc{QuantumSplit} below implements.

\begin{lemma}[Exact quantum splitting]\label{lem:exact-quantum-splitting}
Let $P$ be a block with $s=|P|\ge2$. In the class $\EQPu$ of
Section~\ref{sec:model}, there is a quantum
procedure $\textsc{QuantumSplit}(P)$ that outputs a nontrivial proper
partition $P=P_1\sqcup P_2$ with probability exactly one, using a single
application of the splitting oracle $S_\chi$
(Definition~\ref{def:splitting-oracle}), by
Remark~\ref{rem:one-iteration}. The exact
amplitude-amplification parameters it uses depend only on $q$ and $s$, and
are therefore computable independently of the hidden irreducible
factorization of $g_P$.
\end{lemma}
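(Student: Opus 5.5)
The plan is to compose Lemma~\ref{lem:splitting} with a classical $\gcd$ post-processing step, and then check two things: that the $\gcd$ step always produces a genuine nontrivial split, and that every quantum parameter depends only on $q$ and $s$.

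\emph{Quantum stage.} First, invoke Lemma~\ref{lem:splitting} on the block $P$. By Remark~\ref{rem:one-iteration} it uses exactly one Grover iterate, so one application of $S_\chi$. Its auxiliary rotation angle is determined by $a=p_{q,s}$, which is given in closed form by Lemma~\ref{lem:exact-prob} from $q$ and $s=|P|$ alone, together with $m=1$. This settles the final sentence of the statement, and it places the procedure in $\EQPu$: the angle is an efficiently computable function of public data. Measuring the output register then yields, with certainty, a $\lambda$ for which $b_\lambda$ (resp.\ $c_\lambda$) is nonconstant.

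\emph{Classical stage.} Next, I would classically recompute $w=b_\lambda$ (resp.\ $c_\lambda$) from the measured $\lambda$ by the same power or trace ladder, now without reversibility. By Lemma~\ref{lem:nonconstant-char}, $w\in B_P$, and nonconstancy means the tuple $(\pi_i(w))_{i\in P}$ is not diagonal. Each coordinate $\pi_i(w)$ lies in the image $T$ of $F$, so $T=\{0,1,-1\}$ for odd $q$ and $T=\{0,1\}$ for $q=2^k$. Since the tuple is not diagonal, some $t\in T$ is attained by a nonempty proper subset $P_1\subsetneq P$ of coordinates. For each $t\in T$, compute $h_t=\gcd(g_P,w-t)$. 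By CRT and squarefreeness, $f_i\mid w-t$ if and only if $\pi_i(w)=t$; this is where we use that $w\in B_P$, so that reduction modulo $f_i$ lands in $\F_q$. Hence $h_t=\prod_{i:\pi_i(w)=t}f_i$. Choose any $t$ with $0<\deg h_t<n_P$; such a $t$ exists by the previous observation. Output $g_{P_1}=h_t$ and $g_{P_2}=g_P/h_t$. Both are coprime products of factors indexed by complementary nonempty sets, so they form a nontrivial proper partition.

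\emph{Main obstacle.} The step needing the most care is the identification $h_t=\prod_{\pi_i(w)=t}f_i$. The subtle point is that $f_i$ divides the representative polynomial of $w-t$ exactly when the $i$-th CRT coordinate vanishes. This holds because $w-t$ is reduced modulo $g_P$ and $f_i\mid g_P$. The remaining requirement, that the choice of $t$ be deterministic, is met by scanning the at most three values of $t$ in a fixed order. Since nothing is random after measurement, the overall success probability is exactly that of Lemma~\ref{lem:splitting}, namely one.
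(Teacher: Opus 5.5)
Your proposal is correct and follows the paper's proof. Both arguments invoke Lemma~\ref{lem:splitting} with the single Grover iterate of Remark~\ref{rem:one-iteration}, whose angle depends only on $p_{q,s}$, measure a $\lambda$ with nonconstant test element, and extract a nontrivial proper divisor as $\gcd(g_P,w-t)$ for some candidate $t$. Your identification $h_t=\prod_{i:\pi_i(w)=t}f_i$ just spells out the step the paper compresses into ``for at least one candidate $t$''.
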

\begin{proof}
By Lemma~\ref{lem:exact-prob}, the success probability $p_P=p_{q,s}$ is
known exactly and satisfies $p_P\ge\tfrac12$. By Lemma~\ref{lem:splitting}, exact amplitude amplification therefore transforms the uniform superposition over $\lambda\in\F_q^s$ into a state supported entirely on the $\chi=1$ subspace, using a single application of $S_\chi$ (Remark~\ref{rem:one-iteration}), and a measurement of this state yields, with certainty, a $\lambda$ for which $b_\lambda$ (resp.\ $c_\lambda$) is nonconstant. By Lemma~\ref{lem:nonconstant-char}, such a nonconstant test element takes at least two distinct values among the CRT coordinates $\{\pi_i\}_{i\in P}$. Hence for at least one candidate $t\in\F_q$ the classical polynomial $\gcd(g_P,b_\lambda-t)$ (computed as in \textsc{QuantumSplit}, step~4) is a nontrivial proper divisor $g_{P_1}$ of $g_P$, giving the partition $P=P_1\sqcup P_2$ with $g_{P_2}=g_P/g_{P_1}$. Since $p_P$ depends only on $q$ and $s=|P|$ (Lemma~\ref{lem:exact-prob}), so does the rotation angle that Fact~\ref{fact:bhmt} attaches to this application, independently of which irreducible factors $P$ actually contains.
\end{proof}

Recursing requires bases of $B_{P_1},B_{P_2}$ without repeating the
$\mathrm{Fix}(\sigma)$ null-space computation of Section~\ref{sec:setup}.

\begin{lemma}[Basis pushdown]\label{lem:pushdown}
Let $g_{P'}\mid g_P$ and let $\rho:R_P\to R_{P'}$ be reduction modulo
$g_{P'}$. Then $\rho(B_P)=B_{P'}$. Consequently, the reductions of any basis
(indeed of any spanning set) of $B_P$ form a spanning set of $B_{P'}$, and a
basis of $B_{P'}$ can be obtained from it by classical Gaussian
elimination.
\end{lemma}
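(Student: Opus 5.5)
The plan is to prove the two inclusions $\rho(B_P)\subseteq B_{P'}$ and $B_{P'}\subseteq\rho(B_P)$ separately, and then read off the spanning-set consequence.

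For the first inclusion, note that $\rho$ is a ring homomorphism. It is well defined because $g_{P'}\mid g_P$, so $(g_P)\subseteq(g_{P'})$. Being a ring homomorphism that fixes $\F_q$, it commutes with the $q$-power map: $\rho(a^q)=\rho(a)^q$. Hence if $\sigma(a)=a^q=a$ in $R_P$, then $\rho(a)^q=\rho(a^q)=\rho(a)$ in $R_{P'}$, so $\rho(a)\in\mathrm{Fix}(\sigma|_{R_{P'}})=B_{P'}$.

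For the reverse inclusion, I would argue through the CRT coordinates of Section~\ref{sec:setup}. Since $g_P$ is squarefree and $g_{P'}\mid g_P$, we have $P'\subseteq P$. For $i\in P'$, the coordinate map $\pi_i$ on $B_{P'}$ (reduction modulo $f_i$) composed with $\rho$ equals the coordinate map $\pi_i$ on $B_P$, because reducing modulo $g_{P'}$ and then modulo $f_i$ is the same as reducing modulo $f_i$ directly. Under the isomorphisms $B_P\cong\F_q^{P}$ and $B_{P'}\cong\F_q^{P'}$, the restriction $\rho|_{B_P}$ is therefore the coordinate projection $\F_q^{P}\to\F_q^{P'}$, which is surjective.

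Concretely, given $b'\in B_{P'}$ with coordinates $(t_i)_{i\in P'}$, choose any values $t_i$ for $i\in P\setminus P'$, say $0$. Take $b=\pi^{-1}((t_i)_{i\in P})\in B_P$. Then $\pi_i(\rho(b))=t_i=\pi_i(b')$ for every $i\in P'$, and injectivity of $\pi$ on $B_{P'}$ gives $\rho(b)=b'$. Alternatively, one could count dimensions: $\rho|_{B_P}$ has image of dimension $|P'|=\dim B_{P'}$ because the projection has that rank. The explicit preimage avoids this and is cleaner.

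The consequences are then immediate. A linear map sends a spanning set of its domain to a spanning set of its image, so $\{\rho(v_k)\}$ spans $\rho(B_P)=B_{P'}$. Gaussian elimination on these $s$ coefficient vectors of length $n_{P'}$ extracts a basis. To keep the convention $v_1=1$, note that $\rho(1)=1$, so one can keep $\rho(v_1)$ first.

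The only point needing care is the identification of $\pi_i\circ\rho$ with $\pi_i$, that is, compatibility of the nested reductions. This holds because $f_i\mid g_{P'}\mid g_P$ and the ideals are nested, so no real obstacle remains.
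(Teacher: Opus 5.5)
Your proposal is correct and follows essentially the paper's proof: you get $\rho(B_P)\subseteq B_{P'}$ because reduction modulo $g_{P'}$ is compatible with the $q$-power map, and you get surjectivity by identifying $\rho|_{B_P}$ with the coordinate-forgetting projection $\F_q^{P}\to\F_q^{P'}$. Your explicit preimage is just a concrete rendering of that surjectivity. One small point: $\rho(a^q)=\rho(a)^q$ holds for any ring homomorphism, so you do not need $\rho$ to fix $\F_q$ for that step.
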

\begin{proof}
If $v\in B_P$ then $v^q\equiv v\pmod{g_P}$, hence also $v^q\equiv v
\pmod{g_{P'}}$ since $g_{P'}\mid g_P$, so $\rho(v)\in B_{P'}$; thus
$\rho(B_P)\subseteq B_{P'}$. Under the identifications $B_P\cong\F_q^P$ and
$B_{P'}\cong\F_q^{P'}$ from Section~\ref{sec:setup}, the reduction map
corresponds exactly to the coordinate-forgetting map
$\F_q^P\twoheadrightarrow\F_q^{P'}$, which is surjective. Hence so is
$\rho:B_P\to B_{P'}$, giving $\rho(B_P)=B_{P'}$. In particular the images of
a spanning set of $B_P$ span $B_{P'}$, though they need not remain linearly
independent, so a basis is recovered by Gaussian elimination on the reduced
vectors.
\end{proof}

A basis of $B_{P_i}$ is thus obtained from $v_1,\ldots,v_s$ reduced modulo $g_{P_i}$ ($O(s\cdot n_P)$ field operations) by extracting $|P_i|$ linearly independent vectors via Gaussian elimination ($O(s\cdot|P_i|^2)$ field operations).

\begin{algorithm}[H]
\caption{\textsc{QuantumSplit}$(P,v_1,\ldots,v_s)$}
\begin{algorithmic}[1]
\State prepare $\mathcal A_P|0\rangle=$ uniform superposition over $\lambda\in\F_q^s$; compute $a_\lambda\bmod g_P$
\State apply Fact~\ref{fact:bhmt} with $\mathcal A=\mathcal A_P$, $\chi$ from \textsc{PhaseOracle}, $a=p_P\ge\tfrac12$
\State measure $\lambda$; compute $a_\lambda\bmod g_P$ classically
\State classically compute $b_\lambda$ (or $c_\lambda$) and, for each candidate $t$, $\gcd(g_P,b_\lambda-t)$
\State let $g_{P_1}$ be any resulting nontrivial proper divisor; $g_{P_2}\gets g_P/g_{P_1}$
\State \Return $(P_1,g_{P_1}),\,(P_2,g_{P_2})$
\end{algorithmic}
\end{algorithm}

\begin{algorithm}[H]
\caption{\textsc{Factor}$(f)$}
\begin{algorithmic}[1]
\State compute $Q$, its null space, and a basis $v_1,\ldots,v_r$ of $B$ (classical, Section~\ref{sec:setup})
\State worklist $\gets\{(\{1,\ldots,r\},\,f,\,v_1,\ldots,v_r)\}$; output $\gets\emptyset$
\While{worklist $\ne\emptyset$}
  \State remove $(P,g_P,v_1,\ldots,v_s)$ from worklist
  \If{$s=1$}
    \State output $\gets$ output $\cup\{g_P\}$
  \Else
    \State $(P_1,g_{P_1}),(P_2,g_{P_2})\gets\textsc{QuantumSplit}(P,v_1,\ldots,v_s)$
    \State compute bases of $B_{P_1},B_{P_2}$ via Lemma~\ref{lem:pushdown}
    \State add $(P_1,g_{P_1},\mathrm{basis}_1)$ and $(P_2,g_{P_2},\mathrm{basis}_2)$ to worklist
  \EndIf
\EndWhile
\State \Return output
\end{algorithmic}
\end{algorithm}

\subsection{Complexity}
\begin{corollary}[Total quantum complexity]\label{cor:total}
Let $f\in\F_q[x]$ be squarefree of degree $n$, with $r$ irreducible
factors. Summing Lemma~\ref{lem:phaseoraclecost} over the
amplitude-amplification iterations at each internal node of the recursive
splitting tree, of which there is exactly one per node by
Remark~\ref{rem:one-iteration}, the total quantum cost of
\textsc{Factor}$(f)$ is
\[
O\Big(\sum_{P\text{ internal}} n_P^2\log q\Big)
\]
$\F_q$-operations. Since $n_P\le n$ for every block and the number of
internal blocks is exactly $r-1$ (Theorem~\ref{thm:main} below), this is
\[
O(r\,n^2\log q)\;\subseteq\;O(n^3\log q).
\]
The corresponding classical bookkeeping consists of basis pushdown
(Lemma~\ref{lem:pushdown}) and $\gcd$ extraction, summed over the recursion, is $O(n^3)$ field operations under the schoolbook arithmetic model of Section~\ref{sec:reversible-arith}; a more detailed bound may be obtained, if desired, by summing $O(s\cdot n_P)+O(s\cdot|P_i|^2)$ directly over the splitting tree rather than bounding every $n_P$ by $n$.
\end{corollary}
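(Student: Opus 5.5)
We prove the corollary by summing per-node costs over the splitting tree produced by \textsc{Factor}$(f)$. The proof has four steps.

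\emph{Step 1: the tree has exactly $r-1$ internal nodes.} Each call to \textsc{QuantumSplit} on an internal block $P$ returns a nontrivial proper partition $P=P_1\sqcup P_2$ with certainty (Lemma~\ref{lem:exact-quantum-splitting}). The splitting tree is therefore a full binary tree. Its leaves are the singleton blocks, one for each irreducible factor, so there are $r$ leaves. A full binary tree with $r$ leaves has exactly $r-1$ internal nodes. Because success at each node is certain, there are no failed or repeated trials, and this count is exact rather than expected.

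\emph{Step 2: quantum cost per internal node.} By Remark~\ref{rem:one-iteration}, each internal node performs exactly one Grover iterate. This uses a constant number of applications of $\mathcal A_P$ and $\mathcal A_P^{-1}$, one application of $S_\chi$, and one auxiliary rotation. The rotation angle is computed classically from $q$ and $s$. By Lemma~\ref{lem:phaseoraclecost}, $S_\chi$ costs $O(n_P^2\log q)$ $\F_q$-operations.

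We must also check that $\mathcal A_P$ fits within this bound. It prepares the uniform superposition over $\lambda\in\F_q^s$ (for example, by a QFT over $\F_q^s$ or a Hadamard-type preparation on each coordinate register). It then computes $a_\lambda=\sum_k\lambda_kv_k$, a linear combination of $s$ fixed vectors of length $n_P$, which takes $O(s\,n_P)\subseteq O(n_P^2)$ operations since $s\le n_P$. Adding the $O(n_P)$ flag gates, the cost per node is $O(n_P^2\log q)$.

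Summing over the $r-1$ internal nodes and using $n_P\le n$ and $r\le n$ gives
\[
O\Big(\sum_{P}n_P^2\log q\Big)\subseteq O(r\,n^2\log q)\subseteq O(n^3\log q).
\]

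\emph{Step 3: classical bookkeeping.} At each internal node the classical work consists of:
\begin{itemize}
\item recomputing $b_\lambda$ or $c_\lambda$ for the measured $\lambda$;
\item at most three $\gcd$ computations of degree at most $n_P$;
\item one polynomial division;
\item the basis pushdown of Lemma~\ref{lem:pushdown}.
\end{itemize}
The $\gcd$s cost $O(n_P^2)$ by schoolbook arithmetic. The pushdown costs $O(s\,n_P)+O(s\,|P_i|^2)\subseteq O(n^2)$.

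The obstacle in this step is the classical recomputation of $b_\lambda$. Done naively by square-and-multiply, it costs $O(n_P^2\log q)$, not $O(n^2)$. I would handle this by observing that the stated $O(n^3)$ classical count refers only to the pushdown and $\gcd$ extraction, as the corollary phrases it. The exponentiation would be reported separately as $O(n^3\log q)$, matching the quantum bound.

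Summing the pushdown and $\gcd$ costs over the $r-1\le n$ internal nodes gives $O(n^3)$. For a finer estimate one can sum the exact per-node expressions directly over the tree: the sizes $n_P$ at any fixed depth sum to at most $n$. This is optional.

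\emph{Step 4: where the difficulty lies.} The main care point is the exactness of the count $r-1$, which rests entirely on Remark~\ref{rem:one-iteration} and Lemma~\ref{lem:exact-quantum-splitting}. The remaining steps are routine summation.
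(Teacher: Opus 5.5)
Your Steps~1--2 are the paper's own argument for the quantum bound: one Grover iterate per internal node by Remark~\ref{rem:one-iteration}, exactly $r-1$ internal nodes, and Lemma~\ref{lem:phaseoraclecost} summed with $n_P\le n$ and $r\le n$. You also check the cost of $\mathcal A_P$ explicitly, which the paper leaves implicit. Your remark that classically recomputing $b_\lambda$ costs $O(n_P^2\log q)$ per node, outside the $O(n^3)$ figure, is correct.

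The gap is in Step~3. The inclusion $O(s\,n_P)+O(s\,\abs{P_i}^2)\subseteq O(n^2)$ is false. Here $s$ can be as large as $r\le n$ and $\abs{P_i}$ as large as $s-1$, so the elimination term is only $O(n^3)$ per node, and $r-1$ nodes give $O(n^4)$. This is not slack in the estimate. Take $f$ a product of $n$ distinct linear factors ($q\ge n$), and a run in which every split separates one factor from the rest. Such a run is possible: \textsc{QuantumSplit} accepts any nontrivial divisor, and after exact amplification every good $\lambda$ is measured with positive probability (e.g.\ one for which exactly one coordinate $\pi_i(a_\lambda)$ vanishes). Then
\[
\sum_{P\text{ internal}} s\,\abs{P_2}^2=\sum_{s=2}^{n}s(s-1)^2=\Theta(n^4).
\]
Your optional level-by-level estimate does not help, because the depth can be $r-1$. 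Separately, reducing $s$ polynomials of degree $<n_P$ modulo $g_{P_i}$ costs $O(s\,n_{P_1}n_{P_2})$ by schoolbook division, not $O(s\,n_P)$, though that term still sums to $O(n^3)$.

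So the classical $O(n^3)$ cannot be obtained by summing the per-node figures you quote; the paper only asserts it. You need a cheaper pushdown. One that works:
\begin{itemize}
\item Order the children so that $n_{P_1}\le n_{P_2}$. Reduce the $v_k$ modulo $g_{P_1}$ and row-reduce only there, at cost $O(s\,n_{P_1}\abs{P_1})\subseteq O(r\,n_{P_1}n_{P_2})$.
\item This gives a basis of $B_{P_1}$, together with $\abs{P_2}$ kernel combinations $\sum_k\kappa_kv_k$, each with at most $\abs{P_1}+1$ nonzero coefficients.
\item These combinations span the elements of $B_P$ divisible by $g_{P_1}$. On that subspace, reduction modulo $g_{P_2}$ is injective, since $g_{P_1}$ and $g_{P_2}$ are coprime. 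Their reductions therefore form a basis of $B_{P_2}$, again at cost $O(r\,n_{P_1}n_{P_2})$.
\item Each pair of irreducible factors is separated at exactly one internal node, so $\sum_P n_{P_1}n_{P_2}=\sum_{i<j}d_id_j\le n^2/2$. This gives $O(rn^2)\subseteq O(n^3)$ for the pushdown.
\end{itemize}
Your bound $\sum_P O(n_P^2)\subseteq O(n^3)$ for the $\gcd$s stands.
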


\subsection{Main theorem}
\begin{theorem}[Exact quantum Berlekamp factorization]\label{thm:main}
Let $f\in\F_q[x]$ be monic and squarefree of degree $n$, with
$f=f_1\cdots f_r$ its factorization into distinct monic irreducible
polynomials. In the class $\EQPu$ of Section~\ref{sec:model}, that is, in
uniform circuit families allowing single-qubit rotations through
efficiently computable angles, \textsc{Factor}$(f)$ outputs the complete irreducible factorization of $f$
with probability exactly one. The algorithm performs exactly $r-1$
successful invocations of \textsc{QuantumSplit} (one per internal node of the binary splitting tree with $r$ leaves, regardless of how unbalanced the splits are) each performing a single amplitude-amplification iteration and hence a single application of $S_\chi$ (Remark~\ref{rem:one-iteration}, Lemma~\ref{lem:phaseoraclecost}), for a total quantum cost of $O(n^3\log q)$ $\F_q$-operations together with $O(n^3)$ classical field operations for basis maintenance and $\gcd$ extraction
(Corollary~\ref{cor:total}). Every gate parameter used by the exact
amplitude-amplification procedure at each step depends only on $q$ and the size of the current block, and hence is computable in advance, independently of the unknown irreducible factorization.
\end{theorem}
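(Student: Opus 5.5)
The proof assembles the earlier lemmas around an invariant maintained by \textsc{Factor}$(f)$. The invariant is that every worklist entry $(P,g_P,v_1,\ldots,v_s)$ satisfies three conditions: $g_P=\prod_{i\in P}f_i$; $v_1,\ldots,v_s$ is a basis of $B_P$ with $s=|P|$; and the blocks in the worklist together with those already output partition $\{1,\ldots,r\}$.

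Initially the invariant holds by Berlekamp's reduction of Section~\ref{sec:setup}, since $\dim B=r$. To show it is preserved, consider an internal block, so $s\ge2$. Lemma~\ref{lem:exact-quantum-splitting} returns, with probability exactly one, a nontrivial proper divisor $g_{P_1}$ of $g_P$. Because $g_P$ is squarefree, this divisor has the form $\prod_{i\in P_1}f_i$, and so $g_{P_2}=g_P/g_{P_1}$ corresponds to $P_2=P\setminus P_1$. Lemma~\ref{lem:pushdown} then yields bases of $B_{P_1}$ and $B_{P_2}$ with the correct dimensions $|P_1|$ and $|P_2|$. When $s=1$, the invariant gives $g_P=f_i$, which is irreducible, so every output is correct. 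Termination follows because each split strictly decreases block sizes.

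Exactness needs a separate remark: we must check that composing finitely many exact procedures with deterministic classical steps is still exact. Each call to \textsc{QuantumSplit} outputs a valid split with probability one, whatever the earlier measurement outcomes were. Every branch of the overall computation is therefore correct, so the output is the complete factorization with probability exactly one.

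For counting, the run traces out a binary tree whose root is $\{1,\ldots,r\}$, whose leaves are the $r$ singletons, and in which every internal node has exactly two children. Such a tree has exactly $r-1$ internal nodes, however unbalanced it is; this follows by induction on $r$, or by counting edges, $2(\#\text{internal})=(\#\text{nodes})-1$. By Remark~\ref{rem:one-iteration}, each internal node uses exactly one Grover iterate and hence one application of $S_\chi$. Lemma~\ref{lem:phaseoraclecost} prices that application at $O(n_P^2\log q)$, and preparing $\mathcal A_P$ together with computing $a_\lambda$ costs $O(s\,n_P)\subseteq O(n^2)$. Corollary~\ref{cor:total} then sums these to $O(rn^2\log q)\subseteq O(n^3\log q)$ quantum operations. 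The classical part at each node consists of the gcd computations, each $O(n_P^2)$ with at most three of them; the classical recomputation of $b_\lambda$, which costs $O(n_P^2\log q)$ and so should properly be counted on the quantum-budget side; and the pushdown. Summed over $r-1\le n$ nodes, these give $O(n^3)$.

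The claim about gate parameters comes from Lemma~\ref{lem:exact-quantum-splitting}. The single auxiliary rotation angle at each node is a function of $p_{q,s}$, which Lemma~\ref{lem:exact-prob} gives in closed form in $q$ and $s=|P|$. Since $|P|$ is read off as $\dim B_P$, the angle is computable classically before the quantum step.

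The main obstacle is the accounting rather than the logic. The step in \textsc{QuantumSplit} that recomputes $b_\lambda$ costs $\log q$ factors, and this has to be placed inside the stated $O(n^3\log q)$ budget rather than the $O(n^3)$ classical one. Likewise, the pushdown Gaussian elimination must be bounded carefully, using $\sum s\,n_P\le rn$, so that it stays within $O(n^3)$.
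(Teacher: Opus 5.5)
Your proposal is correct and follows the paper's own proof. Each internal block is split exactly by Lemma~\ref{lem:exact-quantum-splitting}, child bases come from Lemma~\ref{lem:pushdown}, a binary tree with $r$ leaves has exactly $r-1$ internal nodes, the cost is taken from Corollary~\ref{cor:total}, and the angles depend only on $q$ and $s$ by Lemma~\ref{lem:exact-prob}. Your worklist invariant and your remark that exactness survives composition with deterministic classical steps only make explicit what the paper leaves implicit.

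One side claim is false, though nothing depends on it. The inequality $\sum_{P\text{ internal}} s\,n_P\le rn$ already fails for three linear factors ($3\cdot 3+2\cdot 2=13>9$), and it can be $\Theta(r^2n)$ for a maximally unbalanced tree. Use the trivial bound $(r-1)rn\le n^3$ for that term, and otherwise quote the classical $O(n^3)$ figure from Corollary~\ref{cor:total}, as the paper's proof does.
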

\begin{proof}
The initial Berlekamp basis is computed deterministically by classical
linear algebra (Section~\ref{sec:setup}). Whenever a block $P$ has
$s=|P|\ge2$, Lemma~\ref{lem:exact-quantum-splitting} produces, with
certainty, a genuine split $P=P_1\sqcup P_2$ using one application of
$S_\chi$, with amplitude-amplification parameters depending only on $q$
and $|P|$. Lemma~\ref{lem:pushdown} then supplies bases of $B_{P_1},B_{P_2}$
without recomputing the Frobenius null space. The recursion terminates
exactly when every block has $s=1$, at which point $g_P$ is already
irreducible.

The recursive splitting process is represented by a binary tree with $r$
leaves (one per irreducible factor), and a binary tree with $r$ leaves has
exactly $r-1$ internal nodes; hence exactly $r-1$ splitting calls occur.
The cost bound is Corollary~\ref{cor:total}.
\end{proof}

\section{Comparison with classical and quantum approaches}
\label{sec:comparison}

\subsection{Comparison with classical splitting}
\label{sec:classical-comparison}

The classical status of the splitting step is not uniform across finite
fields, and it is worth separating the regimes carefully, since the
contribution of this paper is confined to one of them.

\begin{paragraph}{Small characteristic: already deterministic.}
Berlekamp's original algorithm \cite{Berlekamp67} is deterministic and runs
in time $\mathrm{poly}(n,\log q,p)$ with $p=\mathrm{char}(\F_q)$. The
mechanism is exactly the one we use in characteristic $2$: the absolute
trace $\Tr_{\F_q/\F_p}$ maps each CRT coordinate into the prime field, and
one then computes $\gcd(g_P,\,\Tr(v)-t)$ for each of the $p$ possible
constants $t\in\F_p$. Determinism is obtained by sweeping an $\F_p$-basis of
$B_P$ rather than sampling. Since $\Tr:B_P\to\F_p^s$ is $\F_p$-linear and
surjective while the constants occupy only a one-dimensional subspace of the
image, some basis element must have nonconstant trace.

Consequently, for $q=2^k$ and more generally for any fixed small $p$
unconditional deterministic polynomial-time splitting is classically known,
and our characteristic-$2$ procedure provides no new capability. It is an
exact quantum analogue of an already deterministic classical algorithm. What
it does provide is uniformity: the same counting-plus-amplification
mechanism covers both characteristics, and in characteristic $2$ it uses
one amplification round in place of a sweep over an $\F_2$-basis of
$B_P$ of size $s\log_2 q$. The gain is practical rather than
asymptotic: fewer rounds in practice, and a single mechanism that covers every finite field uniformly instead of splitting into cases between characteristic $2$ and the rest.
\end{paragraph}

\begin{paragraph}{Large odd characteristic: the hard regime.}
This is where the present construction has content. For large $q$ of odd
characteristic the trace trick is useless (exhausting over $\F_p$ costs $p$ operations) and every known efficient classical method is randomized. Deterministic algorithms are known at cost polynomial in $\sqrt q$. Shoup \cite{Shoup90} factors over $\F_p$ deterministically in $O(n^{2+o(1)}p^{1/2}(\log p)^2)$ operations, and Narayanan \cite{Narayanan16} gives a deterministic equal-degree splitting algorithm in $\widetilde O(n^3\sqrt q)$ using Drinfeld modules with complex multiplication. Neither is polynomial in $\log q$.

Whether a deterministic polynomial-time algorithm exists at all is a
long-standing open problem, and a sharper statement than is sometimes made is available. As Gao records \cite{Gao01}, it is open whether one exists \emph{even under the Extended Riemann Hypothesis}. What ERH/GRH does buy is a substantial body of conditional results. Particularly, R\'onyai \cite{Ronyai88} for a
bounded number of irreducible factors, Huang \cite{Huang91}, Evdokimov's
subexponential $(n^{\log n}\log q)^{O(1)}$ algorithm \cite{Evdokimov94},
the $m$-scheme framework of Ivanyos, Karpinski and Saxena \cite{IKS09} and
its successor \cite{IKRS12}, and Guo's recent $\mathcal P$-scheme algorithm
\cite{Guo20}, which subsumes and improves Evdokimov's bound. As Gao notes in
his analysis of Evdokimov's method \cite{Gao01}, these algorithms do route
through the construction of a primitive root or an $r$-th non-residue of
$\F_q$, so the dependence of the classical derandomization literature on
such auxiliary objects is real.
\end{paragraph}

Against this background, our procedure is best understood as an exact
quantum drop-in replacement for the randomized classical test in the large
odd characteristic regime. It removes the randomization without assuming
ERH, without constructing a primitive root or a non-residue, and without
paying a $\sqrt q$ factor. It is not an asymptotically faster algorithm in
either the classical or the quantum setting.

\subsection{Comparison with previous quantum approaches}
\label{sec:quantum-comparison}

Quantum algorithms bearing on polynomial factorization over finite fields are surprisingly few. The problem has no entry in the standard catalogue of quantum algorithms \cite{Zoo}. We discuss the two most relevant works, and then situate the present construction.

\begin{paragraph}{The pipeline, and what ``the splitting step'' means.}
Modern factorization algorithms are organized as a three-stage pipeline,
\[
\mathrm{SFF}\;\longrightarrow\;\mathrm{DDF}\;\longrightarrow\;\mathrm{EDF},
\]
where squarefree factorization (SFF) removes repeated factors, distinct-degree
factorization (DDF) separates the remaining factors into groups of equal
degree, and equal-degree factorization (EDF) splits each such group into
individual irreducibles. SFF is deterministic and cheap. DDF is the classical
bottleneck in the Cantor--Zassenhaus organization. EDF is where randomization
enters classically, and it is the stage this paper addresses. Berlekamp's
algorithm is organized differently: it performs no DDF stage at all, and its
splitting step separates arbitrary factors directly, so it plays the role of
EDF without requiring the degree information that Cantor--Zassenhaus-style
EDF assumes. Figure~\ref{fig:pipeline} shows which stage each quantum
algorithm discussed below acts on.
\end{paragraph}

\begin{figure}[ht]
\centering
\begin{tikzpicture}[
  >={Stealth[length=2mm]},
  stage/.style={draw, rounded corners, minimum height=8mm, minimum width=20mm,
                font=\small, align=center},
  note/.style={font=\footnotesize, align=center},
  node distance=6mm and 12mm
]
\node[stage] (sff) {SFF};
\node[stage, right=of sff] (ddf) {DDF};
\node[stage, right=of ddf] (edf) {EDF\\[-2pt]\footnotesize(splitting)};
\draw[->] (sff) -- (ddf);
\draw[->] (ddf) -- (edf);

\node[note, above=8mm of sff] (sffn) {classical,\\deterministic};
\node[note, above=8mm of ddf] (ddfn) {Doliskani \cite{Doliskani2018}:\\quantum order finding};
\node[note, above=8mm of edf] (edfn) {\textbf{this work}:\\exact quantum splitting};
\draw[->, dashed] (ddfn) -- (ddf);
\draw[->, dashed] (edfn) -- (edf);
\draw[->, dashed] (sffn) -- (sff);

\node[note, below=8mm of ddf] (berl)
  {Berlekamp: no DDF stage --- splits arbitrary factors directly};
\draw[->, dotted] (berl) -- (edf);
\draw[->, dotted] (sff) -- (berl);
\end{tikzpicture}
\caption{Where each quantum contribution sits in the factorization pipeline.
Doliskani \cite{Doliskani2018} accelerates DDF and leaves EDF classical and
randomized. The present work makes EDF exact and performs no DDF at all. The
two therefore act on disjoint stages.}
\label{fig:pipeline}
\end{figure}

It is useful to separate the works that might otherwise be confused with the
present one:
\begin{itemize}
\item \emph{Distinct-degree factorization.} Doliskani \cite{Doliskani2018}
is quantum and addresses DDF only; the classical baseline is \cite{GNU16}.
\item \emph{Equal-degree factorization and Berlekamp-type splitting.} This
is our stage. Classically it is \cite{CZ81,vzGS92} (randomized) and
\cite{Berlekamp67} (deterministic in small characteristic); we are not aware
of any prior quantum treatment.
\item \emph{Quantum root finding.} Grover-based root finding over $\F_p$
gives at most a quadratic speedup over brute force, hence no polynomial-time
algorithm, and is not comparable here.
\item \emph{Exact quantum search and exact amplitude amplification.} The
technique we use \cite{BH97,MZ03,BHMT02,TwoSided16}, and its application
to a different problem in the same algebraic neighbourhood by Draper
\cite{Draper21}.
\end{itemize}
We now discuss the two most relevant works in detail, and then situate the present construction.

\begin{paragraph}{Doliskani: quantum distinct-degree factorization.}
The one dedicated treatment we are aware of is that of Doliskani
\cite{Doliskani2018}, which gives a randomized quantum factorization
algorithm with average-case complexity $O(n^{1+o(1)}\log^{2+o(1)}q)$ bit
operations, rising to $O(n^{4/3+o(1)}\log^{2+o(1)}q)$ on a negligible subset
of inputs, thereby breaking the classical $3/2$-exponent barrier of
\cite{GNU16}.

It is important to be precise about where the quantum step sits in that
algorithm, because it is not where ours sits. Doliskani follows the
Cantor--Zassenhaus pipeline of squarefree factorization (SFF), distinct-degree
factorization (DDF) and equal-degree factorization (EDF), and the quantum
ingredient is confined entirely to the DDF stage. The observation is that
the degree $d$ of the splitting field of $f$ equals the order of the
Frobenius automorphism $\pi:x\mapsto x^q$ in $\mathrm{Aut}(\F_q[x]/(f))$, so that DDF, which is the classical bottleneck, reduces to order finding, which is solved by a standard quantum period-finding routine. The SFF stage is classical, and the EDF stage, which is precisely the splitting step considered in this paper, remains the classical randomized procedure of von zur Gathen and Shoup \cite{vzGS92}.

The two contributions therefore act on \emph{disjoint stages of the same
pipeline}. The method in \cite{Doliskani2018} accelerates distinct-degree factorization and leaves splitting randomized, while the present work derandomizes splitting and does not touch distinct-degree factorization. In principle the two are composable, although we do not carry this out here and note that it is not immediate: our algorithm is Berlekamp-based and performs no distinct-degree phase at all, whereas \cite{Doliskani2018} is organized around one. We also note that the algorithm of \cite{Doliskani2018} is randomized with expected running time rather than exact in the sense of
Section~\ref{sec:model}. Its order-finding subroutine may report failure, and no step of it succeeds with probability exactly one in a bounded number of operations.
\end{paragraph}

\begin{paragraph}{Draper: exact quantum construction of non-residues.}
Closer to our technique, though addressing a different problem, is Draper's observation \cite{Draper21} that a quadratic non-residue modulo a prime $p$ can be produced \emph{exactly}, in deterministic polynomial time and without GRH, in the class $\mathsf{EQP}_{\mathbb C}$, which is larger than the class $\EQPu$ in which we work (Section~\ref{sec:model}). The mechanism is recognisably of the same family as ours. The number of quadratic non-residues below $p$, and their even/odd split, are known exactly in advance and hence this permits an exactly computable rotation angle, applied with opposite signs to the two parity classes so that the imaginary parts cancel and the amplitude mean lands where a single inversion annihilates every non-solution. Measurement then returns a uniformly random non-residue with probability one. Draper further observes that the same device yields a uniformly random element of any set of known cardinality equipped with a membership test.

This bears on our claims in two ways, and we state both plainly. First, it means that the absence of a primitive root or non-residue from our construction should not be read as removing an obstacle that is otherwise insurmountable quantumly. For prime fields at least, such an object is itself exactly constructible. What our counting lemma does provide is that no auxiliary exact subroutine is needed at all, since the required probability is supplied directly by the class cardinalities. The exactness of Theorem~\ref{thm:main} therefore does not rest on the exactness of anything else. Second, it is worth being clear about what \cite{Draper21} does not give. It operates over prime fields, via the Legendre and Jacobi symbols, rather than over general $\F_q$. Also, possession of a non-residue
does not by itself split a block, since the classical derandomizations that consume such an object remain conditional and, in general, partial. The object counted in our Lemma~\ref{lem:exact-prob} is not a field element but the CRT-coordinate tuple $M(\lambda)\in\F_q^s$ of a Berlekamp-subalgebra element, and the argument turns on the bijectivity of $M$, which has no counterpart in \cite{Draper21}.
\end{paragraph}

\begin{paragraph}{Other related quantum work.}
Several quantum algorithms operate on the same algebraic objects without addressing factorization. Van Dam and Seroussi \cite{vDS03} estimate Gauss sums over $\F_q$ in polynomial time, and van Dam, Hallgren and Ip \cite{vDHI06}, together with Russell and Shparlinski \cite{RS04}, treat hidden-shift and polynomial-reconstruction problems given a \emph{quadratic-character} oracle on $\F_q$, the same character that drives our odd-characteristic test, applied to a different problem. Kedlaya \cite{Kedlaya06} computes zeta functions of curves over finite fields. The survey of Childs and van Dam \cite{ChildsVanDam2010} covers this area generally. None of these addresses the splitting of the Berlekamp subalgebra. A separate line of work applies Grover search to root finding over $\F_p$, obtaining at most a quadratic speedup over brute force and hence no polynomial-time algorithm. Thus, it is not comparable to the present setting.

On the amplification side, the exact theorem of \cite{BHMT02} that we invoke has since been extended to two-sided-error algorithms and to measurements in non-standard bases \cite{TwoSided16}, under a condition of exactness analogous to the one supplied here by Lemma~\ref{lem:exact-prob}. We do not need that strength, but it indicates that the underlying device, exact knowledge of a success probability converted into certainty, has been
developed independently of the present application.
\end{paragraph}

\begin{paragraph}{Summary.}
To the best of our knowledge, no previous quantum algorithm for polynomial factorization over finite fields has been claimed to be exact in the sense of $\EQPu$, or indeed of the
larger class $\mathsf{EQP}_{\mathbb C}$. The existing quantum approach \cite{Doliskani2018} is randomized and targets a different stage of the pipeline, and the existing exact quantum construction in this algebraic neighbourhood \cite{Draper21} solves a different problem over prime fields. The distinguishing properties are collected in Table~\ref{tab:comparison}.
\end{paragraph}

\begin{table}[ht]
\centering
\caption{Classical and quantum approaches to polynomial factorization over
$\F_q$.}
\label{tab:comparison}
\small
\begin{tabular}{lccccc}
\hline
Method & Exact & Quantum & Aux.\ object & DDF & Fact.-indep. \\
\hline
Berlekamp, large $q$ \cite{Berlekamp70} & No & No & No & No & Yes \\
Berlekamp, small char.\ \cite{Berlekamp67} & Yes$^{\dagger}$ & No & No & No & Yes \\
Cantor--Zassenhaus \cite{CZ81} & No & No & No & Yes & Yes \\
Shoup \cite{Shoup90}, Narayanan \cite{Narayanan16} & Yes$^{\ddagger}$ & No & No & Some & --- \\
Evdokimov \cite{Evdokimov94}, Guo \cite{Guo20} & Yes$^{\S}$ & No & Yes & No & --- \\
Doliskani \cite{Doliskani2018} & No & Yes (DDF) & No & Yes & --- \\
Draper \cite{Draper21} & Yes & Yes & \multicolumn{3}{c}{\emph{constructs} a non-residue} \\
This work & Yes & Yes (splitting) & No & No & Yes \\
\hline
\end{tabular}
\medskip

\noindent
\footnotesize
$^{\dagger}$ Deterministic, but polynomial in $\mathrm{char}(\F_q)$; efficient
only in small characteristic.
$^{\ddagger}$ Deterministic, but polynomial in $\sqrt q$.
$^{\S}$ Deterministic and subexponential (polynomial in structured cases),
conditional on GRH/ERH.
Exactness for this work is in the sense of $\EQPu$
(Section~\ref{sec:model}); \cite{Draper21} states its result in the larger
class $\mathsf{EQP}_{\mathbb C}$, which imposes no computability condition
on the gate coefficients.
\end{table}

The three right-hand columns of Table~\ref{tab:comparison} track properties that recur throughout the comparison above. ``Aux.\ object'' records whether a method needs some auxiliary number-theoretic ingredient (a primitive root of $\F_q^*$, or a distinguished quadratic (or higher) non-residue) constructed before splitting can proceed. ``DDF'' records whether the method first separates factors by degree, via distinct-degree factorization, before splitting within each resulting degree class.
``Fact.-indep.'' records whether the method's parameters (the number of iterations it needs, or the success probability it relies on) can be computed from $q$ and the current block size $s$ alone, in advance, without any knowledge of which irreducible factors the block actually contains.

\section{Beyond polynomials: splitting finite algebras}
\label{sec:generalization}
 
Nothing in the construction of Section~\ref{sec:main} is special to the Berlekamp subalgebra of a polynomial. The proof that a nonconstant test
element reveals a genuine split (Lemma~\ref{lem:nonconstant-char}) uses only
the existence of an injective $\F_q$-algebra homomorphism $B_P\hookrightarrow
\F_q^s$ whose image contains the diagonal copy of $\F_q$, and the counting
bound (Lemma~\ref{lem:exact-prob}) uses only the bijectivity of the induced
coordinate map together with the fiber sizes of a fixed partition of $\F_q$.
 
This section isolates that mechanism and follows it as far as it goes.
Section~\ref{sec:testmaps} defines the class of tests the argument admits,
of which the quadratic-character and absolute-trace tests of
Section~\ref{sec:badcount} are the two extremal instances.
Section~\ref{sec:commutative} applies it to an arbitrary finite-dimensional
separable commutative $\F_q$-algebra presented by structure constants, with
polynomial factorization recovered as the special case $A=\F_q[x]/(f)$.
Section~\ref{sec:noncommutative} extends to the noncommutative case.
 
It is worth saying at the outset why the commutative case, which the
organization of this paper presents as a generalization of the polynomial
one, is in fact the object on which the whole structure theory of finite
algebras turns. As Section~\ref{sec:noncommutative} recalls, the passage from
a general associative algebra to a commutative one costs a single linear
system: the simple components of a semisimple $A$ are recovered from the
primitive idempotents of its centre, and the centre is a product of fields.
The commutative case is therefore not a convenient intermediate stop between
polynomials and algebras, but the point at which the difficulty of the whole
family of problems is concentrated.
 
\subsection{Test maps}
\label{sec:testmaps}
 
The two tests of Section~\ref{sec:badcount} are used there only through their
fiber profiles, and the properties of them that the argument consumes are
easy to name. Exactness needs the fibers to be known in advance and no fiber
to be too large; efficiency needs the map to be cheap to evaluate reversibly,
and its image to be small, since Section~\ref{sec:commutative} will try one
candidate value per element of the image.
 
\begin{definition}[Test map]
\label{def:testmap}
A polynomial $F\in\F_q[X]$ is a \emph{test map} if
\begin{itemize}
\item[(T1)] $F$ can be evaluated by $O(\log q)$ additions and multiplications
  in any commutative $\F_q$-algebra (that is, $F$ is given by a short
  straight-line program, not merely by a coefficient list);
\item[(T2)] its image $T:=F(\F_q)$ is small, $\abs{T}=m$; and
\item[(T3)] its fibers are balanced, in the sense that
\[
  \mu(F)\;:=\;\max_{t\in T}\frac{\abs{F^{-1}(t)\cap\F_q}}{q}\;\le\;\frac12 .
\]
\end{itemize}
\end{definition}
 
The division of labour among the three conditions is worth stating once,
since only one of them bears on exactness. Condition (T3) alone forces the
success probability above $1/2$ and hence supplies the hypothesis of
Fact~\ref{fact:bhmt}; (T1) bounds the cost of a single oracle call, exactly as
the power ladder of Section~\ref{sec:phaseoracle} does for the two concrete
tests; and (T2) bounds the cost of the extraction step of
Lemma~\ref{lem:idempotent-extraction} below. Dropping (T1) and (T2) would
leave a procedure that is still exact but no longer efficient: the identity
map $\F_q\to\F_q$ has all fibers of size $1$ and so the best possible balance,
but its image is all of $\F_q$ and the extraction step would cost $\Theta(q)$.
 
\begin{lemma}
\label{lem:testmap-bound}
Let $F$ be a test map with fibers of sizes $c_1,\dots,c_m$, and let
$a=(a_1,\dots,a_s)\in\F_q^s$ be uniformly distributed. Then
\[
  \Pr\big[\,(F(a_1),\dots,F(a_s))\ \text{is constant}\,\big]
  \;=\;\frac1{q^s}\sum_{j=1}^m c_j^{\,s}
  \;\le\;\mu(F)^{\,s-1}\;\le\;2^{1-s},
\]
with equality in the middle inequality exactly when every nonempty fiber of
$F$ has size $\mu(F)q$.
\end{lemma}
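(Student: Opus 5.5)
This is essentially Lemma~\ref{lem:general-test-count} specialized to a test map, with (T3) added for the final inequality. The plan is to reduce to that lemma and then use (T3) once.

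\emph{Equality.} Take $F$ as a map $\F_q\to T$, where $T=F(\F_q)$ has $m$ elements. Its fibers $F^{-1}(t)\cap\F_q$, for $t\in T$, are nonempty, partition $\F_q$, and have sizes $c_1,\dots,c_m$. The first clause of Lemma~\ref{lem:general-test-count} then gives
\[
\Pr[(F(a_1),\dots,F(a_s))\text{ constant}]=q^{-s}\sum_j c_j^{\,s}
\]
directly. The underlying count is the one from that lemma: the constant tuples with common value $t$ are exactly $F^{-1}(t)^s$.

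\emph{First inequality.} The quantity $\mu(F)$ of Definition~\ref{def:testmap} is the same as the $\mu$ of Lemma~\ref{lem:general-test-count}, namely $\max_j c_j/q$. So that lemma's bound applies: set $x_j=c_j/q$, and then
\[
\sum_j x_j^{\,s}=\sum_j x_j\,x_j^{\,s-1}\le\mu^{\,s-1}\sum_j x_j=\mu^{\,s-1}.
\]
This uses $s\ge1$, so that $x_j^{\,s-1}\le\mu^{\,s-1}$. The equality case is also inherited from the lemma: equality holds iff $x_j^{\,s-1}=\mu^{\,s-1}$ for every $j$. Since all fibers here are nonempty, this means every fiber has size $\mu(F)q$ whenever $s\ge2$.

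\emph{Second inequality and the one subtle point.} Condition (T3) says $\mu(F)\le\tfrac12$, and $t\mapsto t^{\,s-1}$ is monotone on $[0,\infty)$, so $\mu(F)^{\,s-1}\le 2^{1-s}$. The only step needing care is the degenerate case $s=1$ in the equality characterization. There $x_j^{0}=1=\mu^0$ holds trivially, so equality always holds, and the ``exactly when'' clause fails unless the fibers are equal-sized. I would therefore state the equality case for $s\ge2$, the only regime in which the lemma is used, or remark that at $s=1$ both sides equal $1$. Conditions (T1) and (T2) play no role in the proof; they matter only for the cost of the later steps.
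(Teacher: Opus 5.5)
Your proposal is correct and follows the paper's proof, which likewise just invokes Lemma~\ref{lem:general-test-count} for the identity and the bound and then uses (T3) for the final inequality. Your caveat about $s=1$ is also right: there both sides of the middle inequality equal $1$ for every $F$, so the ``exactly when'' clause (which Lemma~\ref{lem:general-test-count} asserts for all $s\ge1$) genuinely needs $s\ge2$, while the final inequality $\mu(F)^{s-1}\le 2^{1-s}$ already holds for all $s\ge1$.
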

 
\begin{proof}
Immediate from Lemma~\ref{lem:general-test-count}, which was stated there for
an arbitrary map $F:\F_q\to T$ and an arbitrary tuple of $\F_q$-values, with
no reference to polynomials; the final inequality is (T3) together with
$s\ge2$.
\end{proof}
 
\begin{proposition}[The two families of test maps]
\label{prop:testmap-families}
\ 
\begin{itemize}
\item[(a)] \emph{(Power-residue tests.)} Let $q$ be odd and let $d\mid q-1$
  with $d\ge2$. Then $F(X)=X^{(q-1)/d}$ is a test map with
  $T=\{0\}\cup\mu_d$, so $m=d+1$, and fiber sizes $1$ and $d$ copies of
  $(q-1)/d$; hence $\mu(F)=\max\{1/q,(q-1)/(dq)\}<1/d\le\tfrac12$. The case
  $d=2$ is the quadratic-character test of Lemma~\ref{lem:exact-prob}.
\item[(b)] \emph{(Trace tests.)} Let $q=p^k$ and let $e\mid k$, $e\ge1$. Then
  $F=\Tr_{\F_q/\F_{p^e}}$, that is $F(X)=\sum_{j=0}^{k/e-1}X^{p^{ej}}$, is a
  test map with $T=\F_{p^e}$, so $m=p^e$, and all fibers of size $q/p^e$;
  hence $\mu(F)=p^{-e}\le\tfrac12$. The case $p=2$, $e=1$ is the
  absolute-trace test of Lemma~\ref{lem:exact-prob}.
\end{itemize}
\end{proposition}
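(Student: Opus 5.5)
The plan is to verify (T1)--(T3) of Definition~\ref{def:testmap} separately for each family, using only standard facts about the cyclic group $\F_q^*$ and about the trace of a finite-field extension.

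\emph{(a)} Since $\F_q^*$ is cyclic of order $q-1$ and $d\mid q-1$, the homomorphism $x\mapsto x^{(q-1)/d}$ maps $\F_q^*$ onto the subgroup $\mu_d$ of $d$-th roots of unity. To see surjectivity, take a generator $g$; then $g^{(q-1)/d}$ has order exactly $d$. The kernel of this map is the subgroup of $d$-th powers, of index $d$, so each nonzero fiber is a coset of size $(q-1)/d$. Together with $F(0)=0$, this gives $T=\{0\}\cup\mu_d$ with $m=d+1$ and the stated fiber profile. For the balance bound, we have $(q-1)/(dq)<1/d\le\tfrac12$. Also $1/q<1/d$, because $q>q-1\ge d$. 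Hence $\mu(F)<1/d\le\tfrac12$, which gives (T3). For (T1), evaluate $X^{(q-1)/d}$ by square-and-multiply. This uses $O(\log q)$ multiplications in any commutative $\F_q$-algebra, and is exactly the power ladder of Section~\ref{sec:phaseoracle}. Finally, (T2) holds with $m=d+1$; the definition imposes no quantitative ceiling, so we simply record the value of $m$.

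\emph{(b)} The map $\Tr_{\F_q/\F_{p^e}}$ is $\F_{p^e}$-linear, so its fibers are cosets of its kernel. The key point is surjectivity, which I would prove by a degree count. Since $F$ is a polynomial of degree $p^{k-e}<q$, it has at most $p^{k-e}$ roots, so the kernel has size at most $q/p^e$. Rank--nullity over $\F_{p^e}$ then forces the image to have size at least $p^e$. The image lies in $\F_{p^e}$, because it is fixed by the $p^e$-power map, which cyclically permutes the summands. Hence $F$ is onto $\F_{p^e}$, every fiber has size exactly $q/p^e$, and $\mu(F)=p^{-e}\le\tfrac12$ since $p^e\ge2$. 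For (T1), compute $X^{p^{ej}}$ iteratively, using $e\log_2 p$ squarings for each step, and accumulate the sum. This costs $O(k\log p)=O(\log q)$ operations in any commutative $\F_q$-algebra, as in \textsc{TraceLadder}.

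To identify the special cases: $d=2$ gives $X^{(q-1)/2}$, which is the quadratic-character test, and $p=2$, $e=1$ gives $\Tr_{\F_q/\F_2}$, which is the absolute-trace test. Both agree with the maps used in Lemma~\ref{lem:exact-prob}.

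The only step that needs care is surjectivity of the relative trace, and the degree-count argument above settles it. Everything else is a direct consequence of the cyclicity of $\F_q^*$.
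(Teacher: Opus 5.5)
Your proof is correct and follows the same route as the paper's: in both parts the fibers are cosets of the kernel of a surjective group homomorphism (resp.\ $\F_{p^e}$-linear map), and (T1) comes from the square-and-multiply and Frobenius ladders. The only difference is that you prove surjectivity of $\Tr_{\F_q/\F_{p^e}}$ (by the root count for a nonzero polynomial of degree $p^{k-e}<q$, plus invariance of the image under the $p^e$-power map) and check $1/q<1/d$ explicitly, whereas the paper invokes these as standard facts.
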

 
\begin{proof}
(a) The map $X\mapsto X^{(q-1)/d}$ carries $\F_q^*$ onto the group $\mu_d$ of
$d$-th roots of unity with all fibers of size $(q-1)/d$, and sends $0$ to $0$;
evaluation is $O(\log q)$ multiplications by repeated squaring. (b)
$\Tr_{\F_q/\F_{p^e}}$ is a surjective $\F_{p^e}$-linear map, so its $p^e$
fibers are cosets of a hyperplane and have equal size $q/p^e$; evaluation is
$O(\log q)$ Frobenius applications and additions.
\end{proof}
 
\begin{remark}[Why no single test serves every $\F_q$]
\label{rem:no-single-test}
Proposition~\ref{prop:testmap-families} explains the obstruction announced in Remark~\ref{rem:two-tests-needed}. Trace tests have excellent balance, $\mu=p^{-e}$, but image of size $p^e$, which is prohibitive when $p$ is large.
Power-residue tests have small image, $m=d+1$ with $d=2$ available for every odd $q$, but balance only $\mu\approx1/d$. Coarsening a trace test to the two classes ``$\Tr=0$'' versus ``$\Tr\ne0$'' repairs (T2) at the cost of destroying (T3): the large class then has density $(p-1)/p$, so $\mu\to1$ as
$p\to\infty$. Characteristic two is the unique case in which a trace test satisfies both conditions at once, with $m=2$ and $\mu=\tfrac12$ simultaneously. For odd $q$ one uses (a) with $d=2$ instead. The two characteristic cases of Section~\ref{sec:badcount} are thus not an accident of the constructions but the two ways of resolving the tension between (T2) and (T3).
\end{remark}
 
\subsection{The commutative case}
\label{sec:commutative}
 
\paragraph{Setting.}
We now fix the algebraic setting, which is more general than anything used in
Sections~\ref{sec:prelim}--\ref{sec:main}. An \emph{$\F_q$-algebra} is an
$\F_q$-vector space $A$ equipped with an $\F_q$-bilinear multiplication, so
that $A$ is simultaneously a vector space and a ring with the two structures
compatible. We take $A$ to be finite-dimensional, of dimension $n$ over $\F_q$,
with a fixed basis $u_1=1,u_2,\dots,u_n$, and (in this subsection)
commutative. Bilinearity means that the multiplication is determined by the
products of basis elements,
\[
  u_iu_j=\sum_{k=1}^n c^k_{ij}\,u_k,\qquad c^k_{ij}\in\F_q ,
\]
and the array $(c^k_{ij})$ of \emph{structure constants} is how $A$ is
presented to the algorithm. The product of two elements given in coordinates is recovered by bilinear extension at a cost of $O(n^3)$ field operations for a generic, unstructured table. This is deliberately more agnostic than the polynomial case, where $R_P=\F_q[x]/(g_P)$ carries the extra structure that lets Proposition~\ref{prop:mult} multiply in only $O(n_P^2)$,
and that gap accounts for the extra factor of $n$ between
Corollary~\ref{cor:total} and Theorem~\ref{thm:commutative} below.
 
Let $\sigma:A\to A$, $\sigma(a)=a^q$, be the Frobenius endomorphism, an
$\F_q$-algebra endomorphism since $q$ is a power of $\mathrm{char}(\F_q)$, and
put $B=\Fix(\sigma)=\{a\in A: a^q=a\}$. Recall that a two-sided ideal
$I\subseteq A$ is \emph{nilpotent} if $I^m=0$ for some $m\ge1$. In a finite-dimensional algebra, the sum of two nilpotent ideals is again nilpotent, so there is a unique largest one, the radical $\Rad(A)$. We call $A$ \emph{separable} if $\Rad(A)=0$, which over the perfect field $\F_q$ is equivalent to $A\cong\prod_{i=1}^r\F_{q^{d_i}}$ as $\F_q$-algebras, $r$ being the number of primitive idempotents of $A$. The general case reduces to this one classically and deterministically by passing to $A/\Rad(A)$, the radical of a finite-dimensional algebra over a finite field being computable in deterministic polynomial time \cite[Thm.~2.7]{Ronyai90} (see also \cite{FR85,CIW97}). This is the exact analogue, at the level of an abstract algebra, of the reduction $f\mapsto f/\gcd(f,f')$ to the squarefree case in
Section~\ref{sec:setup}: for $A=\F_q[x]/(f)$ with $f=f_1^{e_1}\cdots f_r^{e_r}$, the image of $f_i$ in the local factor $\F_q[x]/(f_i^{e_i})$ is a nonzero nilpotent element whenever $e_i\ge2$, and $\Rad(A)$ collects exactly this
repeated-factor part, so that $A/\Rad(A)\cong\F_q[x]/(f_1\cdots f_r)$ is the squarefree quotient. We therefore assume $A$ separable throughout.
 
\begin{lemma}[Abstract Berlekamp subalgebra]
\label{lem:abstract-berlekamp}
Let $A$ be separable, $A\cong\prod_{i=1}^r\F_{q^{d_i}}$, and let
$\pi_i:A\to\F_{q^{d_i}}$ be the $i$-th projection. Then
$\pi=(\pi_1,\dots,\pi_r)$ restricts to an isomorphism of $\F_q$-algebras
\[
  \pi:B\;\xrightarrow{\ \sim\ }\;\F_q^r ,
\]
under which the constants $\F_q\cdot1\subseteq B$ correspond to the diagonal.
In particular $\dim_{\F_q}B=r$, and $B$ is a subring of $A$ containing $\F_q$.
\end{lemma}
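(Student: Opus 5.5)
The plan is to transport everything along the fixed $\F_q$-algebra isomorphism $\phi:A\xrightarrow{\sim}\prod_{i=1}^r\F_{q^{d_i}}$ with components $\pi_i$, exactly as in the polynomial case of Section~\ref{sec:setup}. The first step is to check that $\phi$ intertwines the two Frobenius maps. Since $\phi$ is a ring homomorphism, $\phi(a^q)=\phi(a)^q$, and the $q$-th power in a product of rings is computed componentwise. Therefore $a\in B$ if and only if $\pi_i(a)^q=\pi_i(a)$ for every $i$.

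The second step is to identify the fixed points in each factor. In $\F_{q^{d_i}}$ the elements with $x^q=x$ are precisely the roots of $X^q-X$. There are at most $q$ such roots, and all of $\F_q$ is among them, so they form exactly $\F_q$. This is the same fixed-field fact \cite[Thm.~2.6]{LN97} quoted in Section~\ref{sec:setup}, and the counting argument avoids any Galois theory. Consequently $\phi(B)=\prod_i\F_q=\F_q^r$, so $\pi|_B$ is a bijection onto $\F_q^r$.

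The third step collects the remaining claims. The restriction $\pi|_B$ is $\F_q$-linear and multiplicative because $\phi$ is, so it is an isomorphism of $\F_q$-algebras, and $\dim_{\F_q}B=r$ follows at once. The set $B$ is a subring because $\sigma$ is a ring endomorphism: additivity of $\sigma$ holds since $q$ is a power of the characteristic, and the fixed points of a ring endomorphism form a subring. For $c\in\F_q$ we have $c^q=c$, so $\F_q\cdot1\subseteq B$. Finally $\phi(c\cdot1)=(c,\dots,c)$, because $\phi$ is unital and $\F_q$-linear; this shows the constants map onto the diagonal.

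I do not expect a real obstacle. The only point that needs care is that the structure theorem must supply $\phi$ as an isomorphism of $\F_q$-algebras, so that it is unital and respects the $\F_q$-structure. Otherwise the componentwise Frobenius and the diagonal identification could fail. The setting paragraph states the decomposition as an isomorphism of $\F_q$-algebras, so I would simply take that as given.
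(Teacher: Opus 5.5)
Your proof is correct and follows the paper's argument. You transport along the decomposition, observe that the Frobenius acts componentwise so that $B$ corresponds to the componentwise fixed points, use that the fixed points of $x\mapsto x^q$ in each $\F_{q^{d_i}}$ are exactly $\F_q$, and note that the constants land on the diagonal. The one difference is that you prove the fixed-field fact by counting the roots of $X^q-X$ rather than citing it, which is a self-contained and slightly more elementary justification of the same step.
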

 
\begin{proof}
The fixed field of the $q$-power Frobenius inside $\F_{q^{d_i}}$ is $\F_q$
irrespective of $d_i$, so the decomposition restricts to
$B\cong\prod_i\F_q=\F_q^r$. It is a ring isomorphism because each $\pi_i$ is. The element $t\cdot1$ has $\pi_i(t\cdot1)=t$ for every $i$, which is the diagonal.
\end{proof}
 
Lemma~\ref{lem:abstract-berlekamp} is not new, and neither is the observation that it reduces the decomposition of a commutative semisimple algebra to factoring. The fixed-point subalgebra of the Frobenius map, in the abstract setting of a commutative semisimple algebra rather than of $\F_q[x]/(f)$, is
already used in \cite[Sec.~3, Remark~2]{Ronyai90}, where it is observed that the fixed points of $x\mapsto x^p$ form a direct sum of copies of the prime field, computable by linear algebra, and that this is essentially Berlekamp's reduction of factoring to root finding in $\F_p$. The decomposition itself is obtained in \cite{FR85,Ronyai90} by the cutting procedure recalled in Remark~\ref{rem:vs-cutting} below. What is new here is that the procedure is \emph{exact}, and that it reaches the primitive idempotents directly, without
passing through a generating element and its minimal polynomial. The cost consequences of that difference are the subject of Remark~\ref{rem:vs-cutting}.
 
Given a basis of $A$, the matrix of $\sigma$ is obtained by computing $u_j^q$ for $j=1,\dots,n$ by repeated squaring, at $O(n^4\log q)$ field operations in total, and $B$ as $\ker(\sigma-I)$, by deterministic linear algebra. A
\emph{block} is now an idempotent-cut summand $A_P=e_PA$ for an idempotent $e_P$ known to be a sum of primitive idempotents, with $s=\abs{P}$ the number of primitive idempotents it contains, $B_P=\Fix(\sigma|_{A_P})\cong\F_q^s$, and coordinate maps $\pi_i:B_P\to\F_q$, $i\in P$, as above. A block is \emph{internal} if $s\ge2$.
 
\paragraph{Exact splitting.}
The splitting oracle of Definition~\ref{def:splitting-oracle} generalizes
verbatim, with the test element $w_\lambda=F(a_\lambda)$ for a test map $F$ in
place of $b_\lambda$ or $c_\lambda$, and with the multiplication of
Proposition~\ref{prop:mult} replaced by multiplication via structure
constants. The counting lemma generalizes with it.
 
\begin{lemma}[Exact success probability, general form]
\label{lem:general-success}
Let $A$ be separable, let $P$ be a block with $s=\abs{P}\ge2$, let
$v_1,\dots,v_s$ be an $\F_q$-basis of $B_P$, and for $\lambda\in\F_q^s$ put
$a_\lambda=\sum_k\lambda_kv_k$ and $w_\lambda=F(a_\lambda)$ for a test map $F$
with fiber sizes $c_1,\dots,c_m$. Then $w_\lambda\in B_P$, and
\[
  p_{F,s}\;:=\;\Pr_{\lambda\in\F_q^s}\big[\,w_\lambda\notin\F_q\cdot1\,\big]
  \;=\;1-\frac1{q^s}\sum_{j=1}^m c_j^{\,s}
  \;\ge\;1-\mu(F)^{\,s-1}\;\ge\;1-2^{1-s}\;\ge\;\tfrac12 .
\]
The quantity $p_{F,s}$ depends only on $q$, $s$ and the fiber profile of $F$,
and hence is known before any information about the primitive idempotents of
$A$ is available.
\end{lemma}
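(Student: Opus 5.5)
The plan is to copy the proof of Lemma~\ref{lem:exact-prob}, replacing the polynomial ring $R_P$ by the block $A_P=e_PA$. There are three things to establish. First, $w_\lambda\in B_P$. Second, membership of $w_\lambda$ in $\F_q\cdot1$ (more precisely $\F_q\cdot e_P$, the identity of $A_P$) is equivalent to constancy of the coordinate tuple. Third, that tuple is uniform on $\F_q^s$, so Lemma~\ref{lem:testmap-bound} computes the probability.

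\emph{Membership.} By Lemma~\ref{lem:abstract-berlekamp}, applied to the separable algebra $A_P$ with identity $e_P$, $B_P$ is a subring of $A_P$ containing the constants $\F_q e_P$. Condition (T1) says $F$ is a polynomial over $\F_q$ evaluated by additions and multiplications in the commutative $\F_q$-algebra $A_P$. Since $a_\lambda\in B_P$, every intermediate value of the straight-line program lies in $B_P$, and hence so does $w_\lambda=F(a_\lambda)$. Equivalently, $\sigma$ is an $\F_q$-algebra endomorphism, so $\sigma(F(a))=F(\sigma(a))=F(a)$, exactly as in Lemma~\ref{lem:nonconstant-char}.

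\emph{Constancy.} Each $\pi_i:B_P\to\F_q$ is an $\F_q$-algebra homomorphism, so $\pi_i(w_\lambda)=F(\pi_i(a_\lambda))$. By Lemma~\ref{lem:abstract-berlekamp}, $\pi=(\pi_i)_{i\in P}$ is injective and the constants map onto the diagonal. Therefore $w_\lambda$ is constant if and only if $(F(\pi_i(a_\lambda)))_{i\in P}$ is a constant tuple. This is the same argument as Lemma~\ref{lem:nonconstant-char}, with the polynomial setting replaced by the abstract isomorphism.

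\emph{Uniformity.} The map $M(\lambda)=\pi(a_\lambda)$ is the composite of the linear bijection $\lambda\mapsto a_\lambda$ and the isomorphism $\pi$, so it is a bijection $\F_q^s\to\F_q^s$. Hence $M(\lambda)$ is uniform whenever $\lambda$ is. Lemma~\ref{lem:general-test-count} then gives the equality $p_{F,s}=1-q^{-s}\sum_j c_j^s$. Lemma~\ref{lem:testmap-bound} gives the chain of inequalities, using (T3) together with $s\ge2$ for the bounds $1-\mu^{s-1}\ge1-2^{1-s}\ge\tfrac12$.

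The closing sentence of the statement follows because the right-hand side involves only $q$, $s$ and the fibers $c_j$ of $F$ on $\F_q$.

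The only step needing care is the identity of the block. $A_P$ has identity $e_P$ rather than $1$, and "$\F_q\cdot1$" must be read inside $A_P$ as $\F_q e_P$. I also need that $A_P$ is itself separable with exactly $s$ primitive idempotents, so that Lemma~\ref{lem:abstract-berlekamp} applies to it. This holds because $e_PA$ is the product of the field factors selected by $e_P$.
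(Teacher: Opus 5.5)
Your proposal is correct and follows essentially the paper's own proof: $w_\lambda\in B_P$ because $B_P$ is a subring containing the constants; $M$ is a composite of two linear bijections, so $M(\lambda)$ is uniform; injectivity of $\pi$ turns membership in the constants into coincidence of the values $F(M(\lambda)_i)$; and Lemma~\ref{lem:testmap-bound} (resting on Lemma~\ref{lem:general-test-count}) gives the identity and the chain of inequalities. Your extra observation is a correct sharpening of a point the paper's proof leaves implicit: inside the block the constants must be read as $\F_q e_P$, and $F$ must be evaluated in $A_P$ with $e_P$ as its identity.
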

 
\begin{proof}
Since $F\in\F_q[X]$ and $B_P$ is a subring of $A_P$ containing $\F_q$ (Lemma~\ref{lem:abstract-berlekamp}), we have $w_\lambda=F(a_\lambda)\in B_P$. This is the general form of the case analysis of Lemma~\ref{lem:nonconstant-char}, and it no longer requires separate arguments in the two characteristics.
 
By Lemma~\ref{lem:abstract-berlekamp}, the map $M:\F_q^s\to\F_q^s$,
$M(\lambda)=(\pi_i(a_\lambda))_{i\in P}$, is a composition of two $\F_q$-linear
bijections, hence a bijection, so $M(\lambda)$ is uniform on $\F_q^s$ when
$\lambda$ is. Since each $\pi_i$ is a ring homomorphism,
$\pi_i(w_\lambda)=F(\pi_i(a_\lambda))=F(M(\lambda)_i)$, and since $\pi$ is
injective on $B_P$, the element $w_\lambda$ lies in $\F_q\cdot1$ exactly when
the $s$ values $F(M(\lambda)_i)$ coincide, that is, when all $s$ coordinates of
$M(\lambda)$ lie in a common fiber of $F$. Lemma~\ref{lem:testmap-bound}
applied to the uniform tuple $M(\lambda)$ gives the identity and the three
inequalities.
\end{proof}
 
 
\paragraph{Extraction.}
In the polynomial setting, a nonconstant test element $w$ yields a nontrivial factor through $\gcd(g_P,w-t)$, an operation that depends on the Euclidean structure of $\F_q[x]$. For a general algebra no Euclidean algorithm is
available, and we replace gcd extraction by Lagrange interpolation in the algebra itself. This is the point at which (T2) is paid for.
 
\begin{lemma}[Idempotent extraction]
\label{lem:idempotent-extraction}
Let $P$ be a block, $F$ a test map with image $T$, and $w\in B_P$ with $w=F(a)$ for some $a\in B_P$ and $w\notin\F_q\cdot1$. For $t\in T$ set
\[
  e_t\;:=\;\prod_{\substack{t'\in T\\ t'\ne t}}\frac{w-t'}{t-t'}\;\in\;B_P .
\]
Then the $e_t$ are pairwise orthogonal idempotents with $\sum_{t\in T}e_t=e_P$, and $\pi_i(e_t)=1$ if $\pi_i(w)=t$ and $0$ otherwise. At least two of them are nonzero, so
\[
  A_P=\bigoplus_{t\in T}e_tA_P
\]
is a nontrivial decomposition, computed with $O(\abs T)$ algebra multiplications and inversions in $\F_q$.
\end{lemma}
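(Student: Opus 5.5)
The plan is to work entirely through the coordinate isomorphism $\pi=(\pi_i)_{i\in P}:B_P\xrightarrow{\sim}\F_q^s$ of Lemma~\ref{lem:abstract-berlekamp}, applied to the block $A_P$. Under $\pi$, the identity $e_P$ of $A_P$ corresponds to $(1,\dots,1)$. A polynomial expression in $w$ with $\F_q$-coefficients is computed coordinatewise, because each $\pi_i$ is a ring homomorphism fixing $\F_q$. So every identity about the $e_t$ reduces to an identity in $\F_q$, checked one coordinate at a time. First I will note that $e_t\in B_P$: it is a polynomial in $w\in B_P$ with coefficients in $\F_q$, and $B_P$ is a subring containing $\F_q\cdot e_P$. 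The denominators $t-t'$ are nonzero elements of $\F_q$, so only field inversions are needed.

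Next I compute $\pi_i(e_t)=\prod_{t'\ne t}(\pi_i(w)-t')/(t-t')$. The key point is that $\pi_i(w)\in T$: we have $\pi_i(w)=F(\pi_i(a))$ with $\pi_i(a)\in\F_q$, so the value lies in $T=F(\F_q)$. This is exactly where the hypothesis $w=F(a)$ with $a\in B_P$ is used. The Lagrange basis polynomial $\ell_t$ for the node set $T$ satisfies $\ell_t(t'')=\delta_{t,t''}$ for $t''\in T$. Hence $\pi_i(e_t)=1$ if $\pi_i(w)=t$ and $0$ otherwise.

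Everything else then follows coordinatewise.
\begin{itemize}
\item Each $\pi(e_t)$ is a $0/1$ vector, so $e_t^2=e_t$.
\item The vectors for $t\neq t'$ have disjoint supports, so $e_te_{t'}=0$.
\item Each coordinate $i$ lies in exactly one class, namely $t=\pi_i(w)$, so $\sum_t e_t$ maps to the all-ones vector, which is $\pi(e_P)$.
\end{itemize}
By injectivity of $\pi$ on $B_P$, these identities hold in $B_P$ itself.

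Since $w\notin\F_q\cdot1$ (meaning $\F_q\cdot e_P$ inside the block), $\pi(w)$ is not diagonal, so at least two distinct values $t$ occur among the $\pi_i(w)$, and the corresponding $e_t$ are nonzero. The decomposition $A_P=\bigoplus_t e_tA_P$ is the standard Peirce decomposition for a complete orthogonal family of idempotents summing to $e_P$ in the commutative algebra $A_P$. Any $x\in A_P$ satisfies $x=e_Px=\sum_t e_tx$, and the sum is direct because multiplying by $e_{t_0}$ isolates the $t_0$-component. Two nonzero summands make the decomposition nontrivial.

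For the cost, I will compute the $|T|$ shifts $w-t'$ and then each $e_t$ as a product. Done naively, this gives $O(|T|^2)$ multiplications. To reach the stated $O(|T|)$, I would instead form prefix and suffix products of the factors $(w-t')$ over an ordering of $T$, so that each $e_t$ is one product of a prefix and a suffix, scaled by the inverse of a field constant. This bookkeeping is the only step I expect to need care. The constants $\prod_{t'\ne t}(t-t')^{-1}$ are computed in $\F_q$ alone, at $O(|T|)$ inversions after the analogous prefix/suffix trick.
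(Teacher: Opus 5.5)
Your proposal is correct and follows the paper's proof essentially step for step. You push everything through $\pi$, use $w=F(a)$ to get $\pi_i(w)\in T$, identify $\pi(e_t)$ with the $0/1$ indicator of $\{i:\pi_i(w)=t\}$ via the Lagrange basis, and conclude idempotence, orthogonality, $\sum_t e_t=e_P$ and nontriviality coordinatewise by injectivity of $\pi$ on $B_P$. Your Peirce-decomposition justification of the direct sum and your prefix/suffix-product argument for the $O(\abs{T})$ multiplication count are details the paper's proof leaves implicit (computing each $e_t$ separately would give $O(\abs{T}^2)$), and both are sound.
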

 
\begin{proof}
Since $w\in B_P$, every value $\pi_i(w)$ lies in $\F_q$, and in fact in
$T=F(\F_q)$ because $\pi_i(w)=F(\pi_i(a))$ with $\pi_i(a)\in\F_q$. Applying
the ring homomorphism $\pi_i$ to the defining product gives
\[
  \pi_i(e_t)=\prod_{t'\ne t}\frac{\pi_i(w)-t'}{t-t'} .
\]
If $\pi_i(w)=t$ every factor is $1$; if $\pi_i(w)=t''\in T$ with $t''\ne t$
then the factor indexed by $t'=t''$ vanishes. Hence $\pi(e_t)$ is the
$0/1$-indicator of $\{i\in P:\pi_i(w)=t\}$, from which idempotence,
orthogonality and $\sum_te_t=e_P$ follow coordinatewise, $\pi$ being injective
on $B_P$. Since $w\notin\F_q\cdot1$, Lemma~\ref{lem:abstract-berlekamp} gives
that $w$ takes at least two distinct values on the coordinates of $P$, so at
least two indicator sets are nonempty and the decomposition is nontrivial.
\end{proof}
 
Two remarks on the extraction. First, the decomposition produced is $\abs T$-fold rather than binary, so a single successful test may split a block into as many as $m=\abs T$ parts at once. The number of splitting rounds needed to reach the primitive idempotents is therefore still at most $r-1$.
Second, only the $m$ candidate values in $T$ need be tried, which is precisely why (T2) was imposed: a test map with large image would make this step dominate.
 
The analogue of basis pushdown (Lemma~\ref{lem:pushdown}) is immediate in this language. For an idempotent $e$ cutting out a sub-block $P'\subseteq P$, multiplication by $e$ carries $B_P$ onto $B_{P'}$, because under $\pi$ it is the coordinate-forgetting map $\F_q^P\twoheadrightarrow\F_q^{P'}$, which is
surjective; a basis of $B_{P'}$ is recovered from the images of a basis of $B_P$ by Gaussian elimination, with no recomputation of $\ker(\sigma-I)$.
 
\begin{theorem}[Exact quantum splitting of separable commutative algebras]
\label{thm:commutative}
Let $A$ be a separable commutative $\F_q$-algebra of dimension $n$, given by
structure constants, with $r$ primitive idempotents, and let $F$ be a test map
with image size $m$. In the class $\EQPu$ of Section~\ref{sec:model}, there is
a quantum algorithm that outputs the complete set of primitive idempotents of
$A$, equivalently the decomposition of $A$ into simple components, with
probability exactly one. It performs at most $r-1$ splitting rounds, each
using a single application of the corresponding splitting oracle, for a total
of
\[
  O\big(r\,n^3\log q\big)\;\subseteq\;O\big(n^4\log q\big)
\]
quantum $\F_q$-operations, together with $O(mn^3+n^4\log q)$ classical field
operations for idempotent extraction, basis maintenance, and the one-off
computation of $\ker(\sigma-I)$. All parameters supplied to exact amplitude
amplification depend only on $q$, the current block size, and the fiber
profile of $F$, and are therefore independent of the unknown idempotent
decomposition.
\end{theorem}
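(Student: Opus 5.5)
The plan is to mirror the proof of Theorem~\ref{thm:main}, with the polynomial-specific ingredients replaced by their algebra analogues from this section. First, the classical preprocessing. We compute the matrix of $\sigma$ by evaluating $u_j^q$ through repeated squaring in $A$. Each product costs $O(n^3)$ via structure constants, so all $n$ basis elements together cost $O(n^4\log q)$. Gaussian elimination on $\sigma-I$ then yields a basis of $B=B_{\{1,\dots,r\}}$, and by Lemma~\ref{lem:abstract-berlekamp} its dimension reveals $r$. The algorithm maintains a worklist of blocks $(e_P, \text{basis of } B_P)$, starting from $e_P=1$.

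For an internal block with $s\ge2$, we run the exact procedure of Lemma~\ref{lem:splitting} with $w_\lambda=F(a_\lambda)$ in place of $b_\lambda$. Correctness of the success probability is Lemma~\ref{lem:general-success}: it is known exactly from $q$, $s$ and the fiber profile, and it satisfies $\tfrac12\le p_{F,s}<1$. The upper bound holds because the constancy probability is a sum of positive terms. The argument of Remark~\ref{rem:one-iteration} then gives exactly one Grover iterate, with the rotation angle depending only on public data.

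The oracle needs one reversible adaptation. Evaluate $F$ via its $O(\log q)$ straight-line program (T1), each multiplication being a structure-constant product in coordinates of $A$ at $O(n^3)$ operations, with the fresh-ancilla compute/uncompute discipline of Proposition~\ref{prop:mult}. Test nonconstancy by checking whether the coordinate vector of $w_\lambda$ lies in the span of $u_1=1$, i.e.\ whether coordinates $2,\dots,n$ vanish, an $O(n)$ OR as in Lemma~\ref{lem:nonconstant}. One subtlety arises because blocks are $e_PA$ rather than quotients: the identity of $A_P$ is $e_P$, not $1$. So we test instead whether $w_\lambda\in\F_q e_P$. To do this, I would classically precompute a basis of $A_P$ in which $e_P$ is the first vector, together with the linear change of coordinates. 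The test is then again a vanishing test on $n_P-1$ coordinates, still $O(n)$. This gives $O(n^3\log q)$ per oracle call.

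After measuring $\lambda$, we classically compute $w_\lambda$ and apply Lemma~\ref{lem:idempotent-extraction}, with $e_P$ playing the role of $1$ in the Lagrange products (i.e.\ we take $(w-t'e_P)$). This costs $O(m)$ algebra multiplications, $O(mn^3)$ per round. It gives a decomposition into at least two nonzero orthogonal idempotents, and we push bases down by multiplying by each $e_t$ and running Gaussian elimination.

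Counting is the main bookkeeping obstacle. Each round strictly refines a partition of the $r$ primitive idempotents into at least two nonempty parts, so the number of internal nodes of the resulting splitting tree (arity $\ge2$, $r$ leaves) is at most $r-1$. Summing the per-round costs gives $O(rn^3\log q)$ quantum operations. Classically, the extraction totals $O(rmn^3)$, which I would try to tighten to the claimed $O(mn^3)$ by noting that each idempotent product lives in $A_P$ with $\sum_P n_P$ bounded per tree level. If that fails, I would accept the bound as stated for the extraction of a single level. Basis maintenance fits within $O(n^4)$, and the one-off kernel computation costs $O(n^4\log q)$. Termination occurs when every block has $s=1$, at which point $e_P$ is primitive, and the resulting $e_PA$ are the simple components.
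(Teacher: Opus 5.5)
Your proposal follows the paper's proof essentially step for step. The shared ingredients are the $O(n^4\log q)$ Frobenius matrix and $\ker(\sigma-I)$, and Lemma~\ref{lem:general-success} for the exactly known $p_{F,s}\in[\tfrac12,1)$, which gives a single Grover iterate by Remark~\ref{rem:one-iteration}. They also include an oracle built from the straight-line program of (T1) with $O(n^3)$ structure-constant products, Lagrange extraction by Lemma~\ref{lem:idempotent-extraction}, pushdown by multiplying with the new idempotents, and at most $r-1$ internal nodes in the splitting tree.

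Your insistence that the identity of a block is $e_P$ is a genuine sharpening. You test $w_\lambda\in\F_q e_P$ in a basis of $A_P$ with $e_P$ first, and you use factors $w-t'e_P$. The paper's proof instead speaks of a basis whose first vector is $1$. Read literally in the coordinates of $A$, that test would accept the bad elements $te_P$ with $t\neq0$. The oracle's success probability would then no longer be $p_{F,s}$, and exactness would be lost.

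The step that does not go through is your attempt to tighten the classical extraction total to $O(mn^3)$. Bounding $\sum_P n_P$ per level gives at best $O(mn^3)$ per level of the splitting tree, and only if you multiply inside $A_P$ at cost $O(n_P^3)$. But the tree can have depth $r-1$. Take $A=\F_q^n$ with the quadratic-character test: $\pi(a_\lambda)=(1,\eta,\dots,\eta)$ with $\eta$ a nonsquare is an admissible outcome, and it separates a single primitive idempotent. If this happens in every round, the internal blocks have dimensions $n,n-1,\dots,2$, so $\sum_P n_P^3=\Theta(n^4)$.

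What your argument actually establishes is $O(rmn^3+n^4\log q)\subseteq O(mn^4+n^4\log q)$. The paper's proof establishes the same, since it charges $O(m)$ multiplications per extraction over at most $r-1$ extractions. Rather than accepting the stated figure, record this bound. Then note that it lies inside the claimed $O(mn^3+n^4\log q)$ whenever $m=O(\log q)$, in particular for the quadratic-character and characteristic-$2$ trace tests, where $m\le3$.
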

 
\begin{proof}
Correctness is Lemma~\ref{lem:general-success} together with
Lemma~\ref{lem:idempotent-extraction}. For a block with $s\ge2$ the
probability that $w_\lambda$ is nonconstant is exactly $p_{F,s}\ge\tfrac12$,
known in advance, so Fact~\ref{fact:bhmt} applies and yields a nonconstant
$w_\lambda$ with certainty using $\Theta(1/\sqrt{p_{F,s}})=O(1)$ applications
of the oracle, indeed exactly one by Remark~\ref{rem:one-iteration}. Lemma~\ref{lem:idempotent-extraction} then converts it into a nontrivial
idempotent decomposition. The reversible implementation is that of
Section~\ref{sec:phaseoracle}, with Proposition~\ref{prop:mult} replaced by
multiplication via structure constants and the nonconstant test of
Lemma~\ref{lem:nonconstant} applied to coordinates $2,\dots,\dim$ relative to
a basis whose first vector is $1$; condition (T1) guarantees that $w_\lambda$
is computed in $O(\log q)$ algebra multiplications.
 
For the count, each split refines the current partition of the $r$ primitive
idempotents into at least two nonempty parts, so a tree whose leaves are the
$r$ primitive idempotents has at most $r-1$ internal nodes. Multiplication of
two elements of $A$ via structure constants costs $O(n^3)$ field operations,
so one test element costs $O(n^3\log q)$ and one round $O(1)$ such. The stated
totals follow, the classical term collecting $O(m)$ multiplications per
extraction, $O(n^3)$ per basis pushdown, and $O(n^4\log q)$ for the initial
Frobenius matrix.
\end{proof}
 
For $A=\F_q[x]/(f)$ the structure-constant multiplication of
Theorem~\ref{thm:commutative} is replaced by polynomial multiplication modulo
$g_P$, at cost $O(n^2)$ rather than $O(n^3)$ (Proposition~\ref{prop:mult}), and
Lagrange extraction may be replaced by the cheaper $\gcd(g_P,w_\lambda-t)$.
Substituting these recovers the $O(n^3\log q)$ bound of
Corollary~\ref{cor:total}. The extra factor of $n$ in
Theorem~\ref{thm:commutative} is therefore the price of a generic
presentation, not a loss in the mechanism.
 
\begin{remark}[Comparison with the cutting procedure of \cite{FR85,Ronyai90}]
\label{rem:vs-cutting}
Theorem~\ref{thm:commutative} solves a problem that is classically well
studied, and it is worth being precise about what the direct construction
buys, since the same theorem can be reached by a shorter route: the cutting
procedure of Friedl and R\'onyai is an f-algorithm in the sense of
Section~\ref{sec:model}, so answering its factoring calls with the exact
procedure of Section~\ref{sec:main} already yields an exact algorithm. The
difference is cost, and it lies in the extension fields that the cutting
procedure builds.
 
That procedure walks a basis $a_1,\dots,a_n$ of $A$, maintaining the invariant
that $F_i=\F_q(a_1,\dots,a_i)$ is a field, of degree $e_i\le n$ over $\F_q$. At
step $i$ it computes the minimal polynomial of $b=a_{i+1}$ over $F_i$, by
forming $1,b,b^2,\dots,b^n$ and finding the first linear dependence, and then
factors that polynomial over $F_i$. Each factoring call is therefore a call
over an extension of degree $e_i$, and by the accounting convention of
Section~\ref{sec:model} it must be charged as such: a call handled by
Theorem~\ref{thm:main} costs $O(n^3\log q^{e_i})$ operations in $F_i$, each of
which is $\tilde O(e_i)$ operations in $\F_q$, so $\tilde O(n^3e_i^2\log q)$
$\F_q$-operations, up to $\tilde O(n^5\log q)$ in the worst case. With up to
$n$ steps this gives $\tilde O(n^6\log q)$ overall, against the
$O(n^4\log q)$ of Theorem~\ref{thm:commutative}. Neither bound is optimized,
and we do not claim the exponents are tight. The structural point survives
optimization: the direct construction never builds a generating element, never
leaves $\F_q$, and performs one splitting round per node of the decomposition
tree rather than one factoring call per basis vector.
 
A second, smaller point concerns small characteristic. The variant of
\cite[Sec.~3, Remark~2]{Ronyai90} works over the prime field throughout and
splits by sweeping an $\F_p$-basis of the fixed-point algebra, of size
$s\log_pq$ while our procedure uses one amplification round in its place. The two are in the same asymptotic class, and the gain is practical rather than asymptotic, but it is the concrete form of the uniformity claim made in
Section~\ref{sec:classical-comparison}: one mechanism covers every finite field.
\end{remark}
 
\subsection{The noncommutative case}
\label{sec:noncommutative}
 
Commutativity was used in Section~\ref{sec:commutative} in an essential way:
Lemma~\ref{lem:abstract-berlekamp} decomposes $A$ into fields, and both the
counting argument and Lagrange extraction take place inside the commutative
ring $B_P$. It is therefore worth emphasizing that removing the hypothesis
costs only a linear system, and no randomness at all. The reason is the
classical structure theory of \cite{FR85,Ronyai90}, in which the decomposition
of an associative algebra into simple components is reduced to the same
problem for its centre.
 
Throughout this subsection $A$ is a finite-dimensional associative
$\F_q$-algebra of dimension $n$, not assumed commutative, given by structure
constants. We use three classical facts.
 
\begin{fact}[Radical; \cite{Ronyai90}, Thm.~2.7]
\label{fact:radical}
A basis of $\Rad(A)$ is computable deterministically in time polynomial in $n$
and $\log q$, with no use of a factoring oracle and no randomization. See also
\cite{FR85,CIW97}.
\end{fact}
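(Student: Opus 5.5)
The plan is Rónyai's trace criterion in positive characteristic, implemented as an iterated linear-algebra computation. In characteristic zero $\Rad(A)=\{a\in A:\Tr(L_{ax})=0\ \forall x\in A\}$, where $L_y$ denotes left multiplication by $y$ in the regular representation. Over $\F_q$ this single trace condition is too weak, since nilpotence only forces the trace to vanish modulo $p$. Rónyai's remedy is to pass to integer lifts. Write $q=p^k$, fix an $\F_p$-basis of $A$ (so $\dim_{\F_p}A=nk$), and lift the structure constants of $A$ to integers. We then define a descending chain of $\F_p$-subspaces
\[
A=I_{-1}\supseteq I_0\supseteq I_1\supseteq\cdots\supseteq I_\ell ,\qquad \ell=\lfloor\log_p(nk)\rfloor .
\]
Here $I_j$ consists of those $a\in I_{j-1}$ such that, for every basis element $x$, the integer trace of the lifted matrix of $L_{ax}$, raised through the $p^j$-th power, vanishes modulo $p^{j+1}$. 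Equivalently, the functional $g_j(a)$ is $\Tr(\widetilde{L_a}^{\,p^j})/p^j \bmod p$.

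The key steps, in order, are these.

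First, I would show that each $g_j$ is well defined on $I_{j-1}$, that it is $\F_p$-linear there, and that it is independent of the chosen lift. This is the technical heart of the argument. It rests on the congruence
\[
\Tr\big((X+Y)^{p^j}\big)\equiv \Tr\big(X^{p^j}\big)+\Tr\big(Y^{p^j}\big)\pmod{p^{j+1}}
\]
for integer matrices whose lower-order traces already vanish suitably. I would prove it by the cyclic-word counting argument: the words in $X,Y$ of length $p^j$ are grouped under cyclic rotation, so non-constant orbits contribute multiples of the orbit size to the trace.

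Second, I would show that $I_\ell=\Rad(A)$. For $\Rad(A)\subseteq I_\ell$: a nilpotent $L_{ax}$ has a nilpotent lift modulo $p$, so all the relevant power traces are divisible by the needed powers of $p$. For the converse: if every $a\in I_\ell$ makes all the $g_j$ vanish, then the Newton identities, applied $p$-adically, force the characteristic polynomial of $L_{ax}$ to be $X^{nk}$ modulo $p$. Hence $I_\ell$ is a nil ideal, and therefore it lies in $\Rad(A)$.

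Third, I would bound the cost. There are $O(\log(nk))$ rounds, each solving a linear system over $\F_p$ of size $nk$ with matrix powers computed over $\mathbb Z/p^{\ell+1}$. This is polynomial in $n$ and $\log q$, uses no randomness and no factoring, and finishes by converting the $\F_p$-basis back to an $\F_q$-subspace.

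I expect the main obstacle to be the first step. The mod-$p^{j+1}$ additivity of the power traces is exactly where naive trace arguments fail in characteristic $p$, and it needs the careful inductive hypothesis that membership in $I_{j-1}$ supplies. Because the statement is quoted from \cite[Thm.~2.7]{Ronyai90}, I would follow that proof for this lemma and only verify that its complexity is stated in terms of $n$ and $\log q$. An alternative route via \cite{CIW97} instead works over $\F_p$ with $\mathrm{GF}(p)$-linear traces of Frobenius-twisted powers.
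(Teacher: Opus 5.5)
Your proposal is correct and takes the same route as the paper, which gives no independent proof but defers to R\'onyai's Theorem~2.7. It sketches exactly your descending chain $A=I_{-1}\supseteq I_0\supseteq\cdots\supseteq I_\ell=\Rad(A)$ of ideals cut out by the linear conditions $\Tr\bigl((xy)^{p^j}\bigr)\equiv 0 \bmod p^{j+1}$, computed by $O(\log n)$ rounds of Gaussian elimination; your restriction of scalars to $\F_p$, with $\ell=\lfloor\log_p(nk)\rfloor$, is the more careful form of the paper's $p^l\le n<p^{l+1}$.

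In the converse direction, the $p$-adic Newton-identity argument needs $v_p\bigl(\Tr(\widetilde{L_{ax}}^{\,t})\bigr)\ge v_p(t)+1$ for every $t\le nk$, not only for $t=p^j$. You get these by applying the defining condition of $I_\ell$ to $x(ax)^{m-1}$ in place of $x$ and then invoking lift-independence.
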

 
The mechanism is a trace characterization valid in positive characteristic,
replacing Dickson's criterion $\Rad(A)=\{x:\Tr(xy)=0\ \forall y\}$, which fails
when $\mathrm{char}\,\F_q$ divides $n$. One builds a descending chain
$A=I_{-1}\supseteq I_0\supseteq\cdots\supseteq I_l=\Rad(A)$ with
$I_i=\{x: \Tr((xy)^{p^j})\equiv0\bmod p^{j+1}\ \text{for all }y,\ j\le i\}$ and
$p^l\le n<p^{l+1}$, each $I_i$ cut out by linear equations, so that
$O(\log n)$ rounds of Gaussian elimination suffice. This is the analogue for
algebras of the squarefree reduction of Section~\ref{sec:setup}, and
it is already deterministic and needs no quantum treatment.
 
\begin{fact}[Centre; \cite{Ronyai90}, Sec.~3]
\label{fact:centre}
Let $A$ be semisimple, $A=A_1\oplus\cdots\oplus A_k$ its decomposition into
minimal two-sided ideals, with $A_i\cong M_{n_i}(\F_{q^{d_i}})$. Then the
centre $Z(A)=\{x\in A:xy=yx\ \forall y\in A\}$ is cut out by a linear system in
the structure constants, is a separable commutative $\F_q$-algebra with
$Z(A)=Z(A_1)\oplus\cdots\oplus Z(A_k)\cong\prod_{i=1}^k\F_{q^{d_i}}$, and
satisfies $A_i=Z(A_i)A$. In particular the primitive idempotents
$e_1,\dots,e_k$ of $Z(A)$ are the central primitive idempotents of $A$, and
$A_i=e_iA$.
\end{fact}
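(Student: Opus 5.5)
The plan is to prove Fact~\ref{fact:centre} in four parts, all within the Wedderburn--Artin framework. First, $Z(A)$ is the solution space of a linear system. Second, $Z(A)$ is a product of fields compatible with the decomposition $A=\bigoplus_i A_i$. Third, the primitive idempotents of $Z(A)$ are the identities of the $A_i$. Fourth, $A_i=e_iA=Z(A_i)A$.

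For linearity, write $x=\sum_k x_ku_k$. The condition $xy=yx$ for all $y$ holds iff $xu_j=u_jx$ for every basis vector $u_j$, by bilinearity. This unfolds to the equations
\[
\sum_k x_k\bigl(c^l_{kj}-c^l_{jk}\bigr)=0\qquad\text{for all } j,l,
\]
which are $n^2$ homogeneous linear equations in the $x_k$. Hence a basis of $Z(A)$, together with its structure constants (the restriction of $(c^k_{ij})$ after a change of basis), is obtained by Gaussian elimination.

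For the product decomposition, use that $A$ is semisimple. By Wedderburn--Artin $A=A_1\oplus\cdots\oplus A_k$, a direct product of rings with $A_iA_j=0$ for $i\ne j$, and each $A_i\cong M_{n_i}(D_i)$ with $D_i$ a finite division ring. Wedderburn's little theorem makes $D_i$ a field $\F_{q^{d_i}}$. The centre of a direct product is the product of the centres, and $Z(M_{n_i}(K))=K\cdot I$ consists of scalar matrices, so $Z(A)\cong\prod_i\F_{q^{d_i}}$. This ring is commutative with zero radical, hence separable in the sense of Section~\ref{sec:commutative}.

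For the idempotents, let $1_{A_i}$ be the identity of the ring $A_i$. These elements are central, pairwise orthogonal and sum to $1$. Their images in $\prod_i\F_{q^{d_i}}$ are the standard coordinate idempotents, and these are exactly the primitive idempotents of a product of fields. Finally, $e_iA=e_i\bigl(\bigoplus_j A_j\bigr)=A_i$, since $e_iA_j=0$ for $j\ne i$ and $e_i$ acts as the identity on $A_i$. Because $Z(A_i)=\F_{q^{d_i}}e_i$ contains $e_i$, we get $A_i=e_iA\subseteq Z(A_i)A\subseteq A_i$, which gives $A_i=Z(A_i)A$.

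The main obstacle is the step that uses Wedderburn's little theorem to make each $D_i$ commutative. Without it $Z(A_i)$ is only the centre of a division ring, and it need not be all of $D_i$; everything else is routine bookkeeping. Over finite fields the theorem applies directly, and I would simply cite it together with Wedderburn--Artin rather than reprove either.
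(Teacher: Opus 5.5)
Your proof is correct. The paper gives no proof of its own and only cites \cite[Sec.~3]{Ronyai90}; your argument (the linear system $\sum_k x_k(c^l_{kj}-c^l_{jk})=0$, then $Z(\bigoplus_i A_i)=\bigoplus_i Z(A_i)$ with $Z(M_{n_i}(K))=K\cdot I$, then the sandwich $A_i=e_iA\subseteq Z(A_i)A\subseteq A_i$) is the standard Wedderburn--Artin reasoning behind that citation.

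One adjustment: Wedderburn's little theorem is not really the main obstacle. The statement already assumes $A_i\cong M_{n_i}(\F_{q^{d_i}})$, and even without that, $Z(M_{n_i}(D_i))=Z(D_i)\cdot I$ is a finite field. So the separability of $Z(A)$ and the identification of its primitive idempotents with the $1_{A_i}$ go through unchanged.
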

 
The two facts compose with Theorem~\ref{thm:commutative} to give the main
result of this subsection.
 
\begin{theorem}[Exact quantum Wedderburn decomposition]
\label{thm:wedderburn}
Let $A$ be a finite-dimensional associative $\F_q$-algebra of dimension $n$,
given by structure constants. In the class $\EQPu$ of
Section~\ref{sec:model} there is an algorithm that outputs, with probability
exactly one, a basis of $\Rad(A)$ together with the decomposition of
$A/\Rad(A)$ into its minimal two-sided ideals, using $O(n^4\log q)$ quantum
$\F_q$-operations and polynomially many classical field operations. Every
parameter supplied to exact amplitude amplification depends only on $q$ and
the current block size, and is therefore independent of the decomposition
being computed.
\end{theorem}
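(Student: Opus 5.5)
The plan is to compose Fact~\ref{fact:radical}, Fact~\ref{fact:centre} and Theorem~\ref{thm:commutative}, running them in that order. We then check two things: that each classical step is deterministic and keeps the presentation by structure constants, and that the only quantum work is one call to Theorem~\ref{thm:commutative} on an algebra of dimension at most $n$.

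First, Fact~\ref{fact:radical} gives a basis of $\Rad(A)$ deterministically. We extend it to a basis of $A$ by Gaussian elimination and read off structure constants for the quotient $\bar A=A/\Rad(A)$ in the complementary basis. To do this, we reduce each product $u_iu_j$ modulo the span of the radical basis. This is linear algebra with polynomially many field operations, and it leaves $\bar A$ semisimple of dimension $\bar n\le n$.

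Second, we compute $Z(\bar A)$ as the solution space of the linear system $xu_j=u_jx$ for $j=1,\dots,\bar n$. We choose a basis of it whose first vector is $1$, which is possible because $1\in Z(\bar A)$, and express products of basis elements in that basis to obtain its structure constants. By Fact~\ref{fact:centre}, $Z(\bar A)$ is separable and commutative, of dimension $\dim Z\le\bar n\le n$, with $k$ primitive idempotents. We apply Theorem~\ref{thm:commutative} to it with the test map $X^{(q-1)/2}$ for odd $q$ and $\Tr_{\F_q/\F_2}$ for even $q$. This returns all primitive idempotents $e_1,\dots,e_k$ of $Z(\bar A)$ with probability exactly one, using $O(k(\dim Z)^3\log q)\subseteq O(n^4\log q)$ quantum $\F_q$-operations. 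Fact~\ref{fact:centre} then identifies these as the central primitive idempotents of $\bar A$. The minimal two-sided ideals are $\bar A_i=e_i\bar A$, and a basis of each comes from spanning $\{e_iu_j\}$, again by classical elimination.

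For exactness, the radical, the quotient, the centre and the final ideals are all deterministic classical computations. The single quantum subroutine succeeds with probability one, so the composite succeeds with probability one. Exactness is preserved under composition precisely because no step carries bounded error. The amplitude-amplification parameters are those of Theorem~\ref{thm:commutative}, so they depend only on $q$, the current block size in $Z(\bar A)$ and the fixed fiber profile. That profile is itself a function of $q$ alone, which gives the stated independence.

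The main obstacle is bookkeeping rather than mathematics. The centre must be handed to Theorem~\ref{thm:commutative} in exactly the form it assumes: a genuine $\F_q$-algebra presented by structure constants, with $1$ as the first basis vector, as the nonconstant test requires. We also need to confirm that its primitive idempotents are the ones Fact~\ref{fact:centre} refers to. The cost then follows from $\dim Z(\bar A)\le n$ together with $k\le\dim Z$. The classical bound needs only polynomiality, which is clear from each step being a fixed number of Gaussian eliminations of size polynomial in $n$, plus the $O(n^4\log q)$ Frobenius computation on $Z(\bar A)$.
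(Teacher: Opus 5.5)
Your proposal is correct and follows the paper's proof essentially step for step: radical by Fact~\ref{fact:radical}, the centre of the quotient by a linear system, Theorem~\ref{thm:commutative} on $Z(A/\Rad(A))$, then $A_i=e_iA$, with exactness and uniformity inherited because everything outside the single exact quantum subroutine is deterministic polynomial-time classical work. Your accounting $O(k(\dim Z)^3\log q)$ is slightly more careful than the paper's, which states the dimension of the centre as $k$ even though $\dim Z(A/\Rad(A))=\sum_i d_i\ge k$; the $O(n^4\log q)$ bound holds either way.
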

 
\begin{proof}
Compute $\Rad(A)$ and pass to the semisimple quotient $A'=A/\Rad(A)$, of
dimension $n'\le n$, by Fact~\ref{fact:radical}; this step is classical and
deterministic. Compute $Z(A')$ by solving the linear system of
Fact~\ref{fact:centre}, and its structure constants with respect to a basis of
$Z(A')$; this step is again classical and deterministic. By
Fact~\ref{fact:centre}, $Z(A')$ is a separable commutative $\F_q$-algebra of
dimension $k\le n'$, so Theorem~\ref{thm:commutative} applies to it and
returns its primitive idempotents $e_1,\dots,e_k$ with probability exactly
one, at a cost of $O(k^4\log q)\subseteq O(n^4\log q)$ quantum
$\F_q$-operations. Finally $A'_i=e_iA'$, each obtained by $k$ multiplications
in $A'$.
 
For exactness of the composition, observe that the two classical steps are
deterministic and that the quantum step succeeds with probability exactly one,
so the output is correct with probability exactly one. Moreover, the classical steps are correct for any valid output of the quantum step, which matters
because the splitting element it produces is not unique. Uniformity is
inherited for the same reason: the enclosing computation is a deterministic
polynomial-time procedure, so the composed circuit family is still
constructible by a deterministic Turing machine in time polynomial in its
size, and every rotation angle is still a function of $q$ and a block size.
The claim is therefore in $\EQPu$.
\end{proof}
 
 
\paragraph{Zero divisors and explicit isomorphisms.}
Two further problems treated in \cite{Ronyai90} lie beyond the decomposition
into simple components, and it is worth separating them from
Theorem~\ref{thm:wedderburn}, since they are genuinely deeper. Given a simple
component $A_i\cong M_{n_i}(\F_{q^{d_i}})$ with $n_i>1$, one may ask for a pair
of zero divisors in $A_i$, and, more, for an explicit isomorphism
$A_i\to M_{n_i}(\F_{q^{d_i}})$. Neither reduces to a commutative problem: the
procedure of \cite[Sec.~4]{Ronyai90} uses the constructive form of
Wedderburn's theorem on finite division algebras, the Noether--Skolem theorem
to produce an element $c$ with $c^{-1}ac=a^q$, and the solution of a norm
equation $\mathrm{norm}(d)=1/\alpha$ in a maximal subfield, from which
$1-cd$ is exhibited as a zero divisor. The explicit isomorphism is then obtained from a rank-one idempotent \cite[Sec.~5.1]{Ronyai90}.
 
What matters here is that these procedures are f-algorithms in the sense of
Section~\ref{sec:model}: they are deterministic apart from calls to a
factoring oracle. Answering those calls by Theorem~\ref{thm:main} therefore
makes them exact, by the same composition argument as in the proof of
Theorem~\ref{thm:wedderburn}. We state this, but deliberately without a
complexity bound, because a bound requires an accounting that
\cite{Ronyai90} does not carry out and that we do not attempt here.
 
\begin{corollary}[Exact zero divisors and explicit isomorphisms]
\label{cor:zerodiv}
Finding a pair of zero divisors in a finite-dimensional associative
$\F_q$-algebra given by structure constants, and constructing an explicit
isomorphism between a simple component and a full matrix algebra over a finite
field, are in $\EQPu$.
\end{corollary}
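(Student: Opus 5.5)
The plan is to obtain Corollary~\ref{cor:zerodiv} by composition: run R\'onyai's f-algorithms for both tasks \cite[Sec.~4, Sec.~5.1]{Ronyai90} and answer every oracle call with the exact factoring procedure of Theorem~\ref{thm:main}. The argument should then mirror the proof of Theorem~\ref{thm:wedderburn}.

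\emph{Step 1: the classical reduction.} First reduce to a simple component. By Fact~\ref{fact:radical} and Theorem~\ref{thm:wedderburn}, compute $\Rad(A)$ and the minimal ideals $A_i$ of $A/\Rad(A)$ with certainty. Zero divisors of $A$ are then obtained from those of a component with $n_i>1$. If $\Rad(A)\neq0$, a nonzero element of the radical is nilpotent and already gives a zero divisor. If $A/\Rad(A)$ has at least two components, the central idempotents give $e_1e_2=0$. What remains is a single simple component, which we treat by \cite[Sec.~4]{Ronyai90}. That procedure is deterministic apart from factoring calls; the calls arise in constructing a maximal subfield, in Noether--Skolem, and in solving the norm equation. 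The explicit isomorphism then follows from a rank-one idempotent as in \cite[Sec.~5.1]{Ronyai90}.

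\emph{Step 2: answering the oracle exactly.} Each factoring call asks to factor a polynomial over some field $\F_{q^{d}}$, presented explicitly (e.g.\ as $\F_q[y]/(h)$ with $h$ irreducible). To answer it:
\begin{itemize}
\item apply the classical deterministic squarefree reduction (Fact~\ref{fact:radical}, or $f/\gcd(f,f')$ together with $p$-th roots) to obtain squarefree parts;
\item run \textsc{Factor} over $\F_{q^{d}}$, which is exact by Theorem~\ref{thm:main} with $q$ replaced by $q^{d}$. The rotation angles depend only on $q^{d}$ and the block sizes, which are public data.
\end{itemize}
Alternatively, view the polynomial as presenting a separable commutative algebra over $\F_q$ and apply Theorem~\ref{thm:commutative}, which avoids working over extensions. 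Since $d$ and the polynomial degrees are bounded polynomially in $n$ in R\'onyai's algorithm, each call is a polynomial-size exact circuit.

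\emph{Step 3: composition and uniformity.} The outer procedure is deterministic and polynomial-time given correct oracle answers, and it is correct for \emph{any} valid factorization returned. Every call is answered with probability exactly one, so the whole algorithm succeeds with probability exactly one. The enclosing machine constructs each subcircuit's gates, and its angles, in polynomial time, which places the problem in $\EQPu$.

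The main obstacle is Step 3, specifically adaptivity. Later oracle calls depend on the outputs of earlier ones, so the circuit cannot be fixed in advance. I would handle this by noting either of two points. Because each answer is produced with certainty from a measurement, the classical controller may branch on it; $\EQPu$ as defined via \cite{NO05} with input-dependent parameters (model (iii)) admits this. Alternatively, the computation can be kept coherent, since exact subroutines compose without error accumulation. A secondary check is that the factorization output is canonical up to ordering, so that R\'onyai's correctness proofs apply verbatim.
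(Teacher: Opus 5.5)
Your proposal is correct and follows the paper's route: R\'onyai's procedures for zero divisors and explicit isomorphisms are f-algorithms, and answering each factoring call exactly with Theorem~\ref{thm:main} over the relevant $\F_{q^d}$ makes them exact, by the same composition-and-uniformity argument used for Theorem~\ref{thm:wedderburn}. Your preliminary reduction to a single simple component, your handling of calls over extension fields, and your remarks on adaptive calls and canonical factorization outputs spell out details the paper leaves implicit, without changing the argument.
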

 
Three items make up the accounting that a quantitative version of
Corollary~\ref{cor:zerodiv} would require, and we record them so that the gap
is explicit. First, the number of oracle calls: the loop of
\cite[Sec.~4]{Ronyai90} is executed at most $\dim A$ times, and each pass
factors a minimal polynomial of degree at most $\dim A$. Second, the fields
over which those calls are made: as in Remark~\ref{rem:vs-cutting}, a call
over $\F_{q^d}$ is not a call over $\F_q$, and the extension degrees arising
must be tracked. Third, the auxiliary steps: solving the norm equation passes
through at most $2\log_2q$ quadratic equations in a maximal subfield, and
these are factorizations of polynomials of the form $X^2-a$, hence oracle
calls in their own right rather than free operations. We also note that
\cite[Lemma~4.5]{Ronyai90} is proved for $n$ odd or $n=2$, the even case being
routed through a descent to a degree-two subfield in Step~5 of the procedure;
any quantitative statement must respect that structure.
 
 
\begin{remark}[Optimality of the primitive]
\label{rem:optimality}
The reduction runs in both directions. R\'onyai shows that finding zero
divisors in a finite algebra is in the same complexity class as factoring
polynomials over finite fields, not merely reducible to it \cite{Ronyai90}. The easy direction is that a zero divisor in $\F_q[x]/(f)$ exhibits a proper factor of $f$. Consequently no exact quantum algorithm for any of the structure problems of this subsection can be substantially easier than an
exact quantum algorithm for polynomial factorization, and the splitting
primitive of Section~\ref{sec:main} is the right object to have made exact.
\end{remark}
 
\section{Conclusion}
\label{sec:conclusion}
 
The contribution of this paper is a principle rather than a single algorithm.
If a test applied to a random element of a decomposable commutative
$\F_q$-algebra has a success probability that is known exactly in advance,
and, crucially, known independently of the hidden decomposition being sought,
then exact amplitude amplification converts the randomized test into a
procedure that succeeds with certainty. The hypothesis is supplied by a
counting argument (Lemma~\ref{lem:general-test-count}): when the coordinate
map is a bijection, the probability that a test element fails to separate any
two coordinates is determined by the fiber sizes of the test map and the block
size alone. Definition~\ref{def:testmap} isolates the class of tests for which
this is both exact and cheap.
 
Berlekamp's algorithm is the natural realization, and the one we develop in
detail. A squarefree $f$ of degree $n$ with $r$ irreducible factors is
factored with probability exactly one (Theorem~\ref{thm:main}) using exactly
$r-1$ splitting calls, each performing a single amplification iteration, at a
quantum cost of $O(n^3\log q)$ $\F_q$-operations beyond the one-off classical
Berlekamp basis computation, with no primitive root, quadratic non-residue, or
distinct-degree preprocessing at any point. Two qualifications should be
carried along with that statement. The exactness is that of $\EQPu$, model
(iii) of Section~\ref{sec:model}, following \cite{Imran-Ivanyos2022,%
Imran-Ivanyos2024}. And the substantive gain over what is classically known is
confined to large odd characteristic, where the classical splitting step is
randomized and its only known derandomizations are conditional on ERH: the
result is an exact quantum derandomization of that step, not an asymptotic
speedup over classical factorization.
 
Beyond polynomials, the mechanism reaches the full structure theory of finite
algebras over $\F_q$. Theorem~\ref{thm:commutative} splits a separable
commutative algebra given by structure constants into its primitive
idempotents, Lagrange interpolation replacing the Euclidean gcd
(Lemma~\ref{lem:idempotent-extraction}), and
Theorem~\ref{thm:wedderburn} removes commutativity at the cost of one linear
system, since the simple components of a semisimple algebra are recovered from
the primitive idempotents of its centre. What this contributes is not a new
decidability or complexity result, the problems are classically well studied and lie in randomized polynomial time, but exactness, reached without the
randomized reduction through a generating element and without the extension
towers that reduction builds (Remark~\ref{rem:vs-cutting}). That the same
holds for zero divisors and explicit isomorphisms
(Corollary~\ref{cor:zerodiv}) is a consequence of R\'onyai's reductions being
deterministic: the only probabilistic ingredient anywhere in the classical
pipeline is the factoring oracle. Since finding zero divisors is, conversely,
no easier than factoring \cite{Ronyai90}, the splitting step is the right
primitive to have made exact.

\bibliographystyle{plain}
\bibliography{polyfact}

\end{document}